\documentclass[12pt]{article}
\usepackage[a4paper,margin=1in]{geometry}
\usepackage{setspace}
\usepackage{mathtools}
\usepackage{amssymb}
\usepackage{amsthm}
\usepackage{algorithm}
\usepackage{algpseudocode}
\usepackage{graphicx}
\usepackage{placeins}
\graphicspath{{figures/}{./}}
\usepackage{booktabs}
\usepackage[authoryear,round]{natbib}
\usepackage[hidelinks]{hyperref}
\usepackage{orcidlink}
\usepackage{enumitem}
\usepackage{caption}
\newcommand{\alttext}[1]{%
  \begingroup
  \captionsetup{font=onehalfspacing,justification=raggedright,
    singlelinecheck=false,skip=0pt}%
  \caption*{\textit{Alt text:} #1}%
  \endgroup
}

\allowdisplaybreaks
\DeclareMathOperator*{\argmax}{argmax}

\algrenewcommand\algorithmicrequire{\textbf{Input:}}
\algrenewcommand\algorithmicensure{\textbf{Output:}}

\theoremstyle{plain}
\newtheorem{theorem}{Theorem}
\newtheorem{lemma}{Lemma}
\newtheorem{corollary}{Corollary}
\newtheorem{proposition}{Proposition}
\theoremstyle{definition}
\newtheorem{assumption}{Assumption}
\theoremstyle{remark}
\newtheorem{remark}{Remark}

\newtheorem{innlemma}{Lemma}
\newenvironment{lemmaS}[1]
  {\renewcommand\theinnlemma{#1}\innlemma}
  {\endinnlemma}

\newtheorem{inntheorem}{Theorem}
\newenvironment{theoremS}[1]
  {\renewcommand\theinntheorem{#1}\inntheorem}
  {\endinntheorem}

\newtheorem{inncorollary}{Corollary}
\newenvironment{corollaryS}[1]
  {\renewcommand\theinncorollary{#1}\inncorollary}
  {\endinncorollary}

\newtheorem{innassumption}{Assumption}
\newenvironment{assumptionS}[1]
  {\renewcommand\theinnassumption{#1}\innassumption}
  {\endinnassumption}

\makeatletter
\let\arxiv@maketitle\maketitle
\let\arxiv@@maketitle\@maketitle
\let\arxiv@title\title
\let\arxiv@author\author
\let\arxiv@date\date
\let\arxiv@thanks\thanks
\let\arxiv@and\and
\newcommand{\arxivrestoretitlecommands}{%
  \global\let\maketitle\arxiv@maketitle
  \global\let\@maketitle\arxiv@@maketitle
  \global\let\title\arxiv@title
  \global\let\author\arxiv@author
  \global\let\date\arxiv@date
  \global\let\thanks\arxiv@thanks
  \global\let\and\arxiv@and
}
\makeatother

\title{\Large Optimal multiple testing under family-wise error control:\\
elementary symmetric polynomials and a scalable algorithm}
\author{%
  Prasanjit Dubey\,\orcidlink{0000-0002-3667-5507}%
  \thanks{Corresponding author. 755 Ferst Dr NW, Atlanta, GA 30332, USA. Email: \href{mailto:pdubey31@gatech.edu}{pdubey31@gatech.edu}}%
  \qquad
  Xiaoming Huo\,\orcidlink{0000-0003-0101-1206}\\[6pt]
  \normalsize H.~Milton Stewart School of Industrial and Systems Engineering,\\
  \normalsize Georgia Institute of Technology, Atlanta, GA 30332, USA%
}
\date{}

\begin{document}

\maketitle

\begin{abstract}
\normalsize\onehalfspacing
\noindent Family-wise error rate control is essential when even one false rejection is costly, but distribution-free procedures do not exploit information in a specified alternative model.
Existing dual theory characterises the maximum-average-power procedure under strong family-wise error control, but its optimal multipliers had been computed only up to $K = 3$.
Under an exchangeable product model of independent $p$-values with a common non-increasing density under the alternative, we develop a general-$K$ statistical methodology that removes this computational barrier.
An elementary-symmetric-polynomial representation yields global monotonicity of every constraint function and converts the coupled multiplier problem into monotone coordinate-wise searches.
The resulting algorithm, \emph{symmetric-polynomial optimal testing} (SPOT), uses bisection coordinate descent and has polynomial per-sweep cost in $K$ and the Monte Carlo size.
Under stated conditions, its exact population objective values converge to the optimum and every limit point is optimal, without contraction or strong-convexity assumptions. Additional local regularity gives linear convergence of the exact population iterates, and an achieved $O_p(N^{-1/2})$ residual gives $\sqrt{N}$-consistent Monte Carlo output.
In a truncated-normal scaling experiment, the relative average-power gain over Hommel's method increases from 15\% at $K = 3$ to 83\% at $K = 12$.
Applications with up to 21 hypotheses demonstrate SPOT's practical reach.
\end{abstract}

\smallskip
\noindent\textit{Keywords:} Coordinate descent; Exchangeability; Family-wise error rate; Monte Carlo optimisation; Optimal multiple testing.

\bigskip

\section{Introduction}
\label{sec:introduction}

Multiple-hypothesis testing with family-wise error rate (FWER) control remains essential when even a single false rejection carries high cost: in clinical trials with multiple endpoints \citep{Pocock87, Bretz09, Vickerstaff19}, in economic policy evaluation \citep{Romano05, list16, Viviano25}, and in large-scale experimentation \citep{Johari22}.
Given $K$ hypotheses $H_1, \ldots, H_K$ and a significance level $\alpha$, the goal is to find the testing policy $\vec{D} = (D_1, \ldots, D_K) \in \{0,1\}^K$ that maximises power while ensuring ${\rm pr}(V > 0) \le \alpha$ for every configuration of true and false nulls, where $V$ counts the false rejections.

Standard procedures \citep{Bonferroni1936, holm, Hochberg1988, hommel} provide valid family-wise error rate control (Bonferroni and Holm under arbitrary dependence; Hochberg and Hommel under independence or Simes-type positive dependence, both satisfied by the product model studied here). This distribution-free validity is a genuine strength, but it can leave power unused when a credible alternative model is available.
Their power loss tends to grow with $K$ in the settings we study because they base rejection decisions on the ordered $p$-values $u_{(1)} \le \cdots \le u_{(K)}$ and fixed critical constants: Bonferroni compares each $u_{(k)}$ to $\alpha/K$, Holm to $\alpha/(K - k + 1)$, and Hommel selects the most liberal among several such thresholds.
In every case the rejection decisions depend on the data only through the ordered $p$-values and these fixed, rank-based critical constants, never through the magnitude of the joint likelihood: Bonferroni decides each hypothesis from $u_{(k)}$, $K$, and $\alpha$ alone, while the step-down and step-up procedures (Holm, Hochberg, Hommel) couple the decisions across hypotheses only through the ordered comparisons.

Optimality of family-wise-error-rate-controlling procedures has been studied from several angles \citep{Spjtvoll1972OnTO, westfall98, Lehmann1, Dobriban, Rosenblum}.
\citet{RHPA22} made the key advance: they formulated the most powerful test as an infinite-dimensional binary program, proved strong duality, and showed that an optimal dual vector $\mu^* = (\mu_0^*, \ldots, \mu_{K-1}^*)$ exists under a non-redundancy assumption.

The framework of \citet{RHPA22} establishes the existence of $\mu^*$ but does not give a general method to compute it: they solve for $\mu^*$ only up to $K = 3$ hypotheses, and to our knowledge no scalable, convergence-analysed procedure for computing it had been demonstrated beyond that; without $\mu^*$ the optimal policy cannot be implemented.
The challenge is structural: the optimum is characterised by $K$ coupled nonlinear complementarity conditions, $\mu_\gamma \ge 0$, $F_\gamma(\mu) \le \alpha$, and $\mu_\gamma\{F_\gamma(\mu) - \alpha\} = 0$ for $\gamma = 0, \ldots, K-1$, where $F_\gamma(\mu) = {\rm FWER}_\gamma(\vec{D}^\mu)$ is the family-wise error rate of the $\mu$-induced policy under the configuration $\vec{h}_\gamma$ in which the first $\gamma$ hypotheses are alternatives and the remaining $K-\gamma$ are nulls (defined formally in \S\ref{sec:mht}).
Each $F_\gamma$ depends on $\mu$ through an integral over the ordered simplex $Q = \{\vec{u} \in [0,1]^K : u_1 \le \cdots \le u_K\}$, and the policy $\vec{D}^\mu$ is a discontinuous function of $\mu$ (it involves indicator functions).
The coordinate-bisection construction and contraction-conditional convergence analysis of \citet{DubeyHuo2024K3} are specific to $K = 3$; beyond that case, neither a global monotonicity theorem covering all multiplier coordinates nor the coordinate-wise crossing structure had been established.

We solve the computational problem for $K$ exchangeable hypotheses whose $p$-values are independent with a common alternative density (the product model of Section~\ref{sec:pvalue}).
We call this algorithm \emph{symmetric-polynomial optimal testing} (SPOT).
The key insight is algebraic: the family-wise error rate constraint coefficients $b_{\gamma,k}(\vec{u})$ in the dual formulation, which weight the contribution of rejecting hypothesis $k$ to the error rate under configuration $\vec{h}_\gamma$, admit a closed-form factorisation through \emph{elementary symmetric polynomials}:
\[
b_{\gamma,k}(\vec{u}) = \gamma!\,(K{-}\gamma)!\; \prod_{j=1}^{k-1} g(u_j) \;\cdot\; e_{\gamma-k+1}\bigl\{g(u_{k+1}), \ldots, g(u_K)\bigr\},
\]
where $g$ is the alternative $p$-value density and $e_m$ is the $m$th elementary symmetric polynomial (Lemma~\ref{lem:general_k_coefficients}).
Since $g \ge 0$ and elementary symmetric polynomials of non-negative arguments are non-negative, every $b_{\gamma,k} \ge 0$.
That the error coefficients are non-negative was already noted by \citet[\S2.4]{RHPA22}. Writing them in the closed form of Lemma~\ref{lem:general_k_coefficients} makes this non-negativity transparent for every $K$, and the non-negativity in turn yields a \emph{global monotonicity theorem} (Theorem~\ref{thm:general_k_monotonicity}), new for general $K$: the functions $F_\gamma(\mu)$ are simultaneously non-increasing in every component of $\mu$, without case analysis (the proof uses only $b_{\gamma,k}\ge0$).
This explicit form is also what the smoothness and Monte Carlo analyses of \S\S\ref{sec:algorithm}--\ref{sec:mc_error} exploit.
Monotonicity guarantees that each coordinate-wise constraint has a well-defined leftmost crossing point (Corollary~\ref{cor:unique_root}), enabling SPOT, the bisection coordinate-descent scheme in Algorithm~\ref{alg:general_k}.
We prove that its exact (population) form drives the dual objective to its optimum, with every limit point an optimal dual vector, under Assumptions~\ref{as:assumption3}--\ref{as:assumption5} and no contraction or strong-convexity hypothesis.
The exact population iterates converge to $\mu^*$ at a linear rate under standard local non-degeneracy conditions (strict complementarity, continuous differentiability of the target functions, and a nonsingular active Jacobian), and the Monte Carlo implementation is analysed separately in \S\S\ref{sec:algorithm}--\ref{sec:mc_error}.

Together, Theorem~\ref{thm:general_k_monotonicity} and SPOT (Algorithm~\ref{alg:general_k}) turn the existence characterisation of \citet{RHPA22} into an implementable optimal policy for arbitrary $K$, with population convergence and Monte Carlo error theory; we run SPOT for up to $K = 21$, including on real clinical-trial and replication data (\S\ref{sec:realdata}).
The elementary symmetric polynomial formulation replaces the companion's explicit $K = 3$ coefficient and indicator-function calculations with a single formula valid for every $K$ (Remark~\ref{rem:k3_relation}).
Table~\ref{tab:scaling} previews the main empirical finding: while the power of Hommel's method drops from $0.555$ to $0.334$ as $K$ grows from $3$ to $12$, the policies computed by SPOT maintain nearly flat power ($0.638$ to $0.610$), a relative gain growing from 15\% to 83\% that reflects the optimal policy's model-based likelihood weighting in place of fixed, distribution-free critical constants.

\section{Problem formulation}
\label{sec:formulation}

\subsection{Setup and notation}
\label{sec:mht}

We assess $K$ hypotheses $\{H_1, \ldots, H_K\}$ simultaneously.
The unknown set of true nulls is $\mathcal{N} \subseteq [K]$, where $[K] = \{1, \ldots, K\}$.
A testing policy $\vec{D}(\vec{X}) = (D_1, \ldots, D_K) \in \{0,1\}^K$ maps data $\vec{X} = (X_1, \ldots, X_K)$ to rejection decisions, with $D_k = 1$ indicating rejection of $H_k$.

We parameterise the configuration space by vectors $\vec{h} = (h_1, \ldots, h_K) \in \{0,1\}^K$, where $h_k = 1$ indicates that $H_k$ is false (the alternative holds) and $h_k = 0$ indicates that $H_k$ is true (the null holds).
The configuration $\vec{h}_\gamma = (1,\ldots,1,0,\ldots,0)^{\rm T}$ with exactly $\gamma$ alternatives and $K-\gamma$ nulls plays a central role.

Under configuration $\vec{h}_\gamma$, the average power is
\[
\Pi_\gamma(\vec{D}) = \gamma^{-1} \mathbb{E}_{\vec{h}_\gamma}\!\left(\sum_{k=1}^\gamma D_k\right), \quad 1 \le \gamma \le K.
\]
We also report the \emph{any-discovery power} under the all-alternatives configuration,
\[
\Pi_{\rm any}(\vec{D}) = {\rm pr}_{\vec{h}_K}\!\left(\sum_{k=1}^K D_k \ge 1\right),
\]
which is the probability of making at least one correct rejection when every hypothesis is an alternative \citep{DubeyHuo2024K3}.
The family-wise error rate is
\[
{\rm FWER}_\gamma(\vec{D}) = {\rm pr}_{\vec{h}_\gamma}(V > 0), \quad 0 \le \gamma < K,
\]
where $V = \sum_{k=\gamma+1}^K D_k$ counts the false rejections, i.e., the number of null hypotheses that are incorrectly rejected.
The case $\gamma = 0$ corresponds to all nulls being true (the global null), and $\gamma = K-1$ to exactly one null remaining.
We seek the policy maximising $\Pi_K$ subject to ${\rm FWER}_\gamma \le \alpha$ for all $\gamma = 0, \ldots, K-1$.
Maximising $\Pi_K$, the average power under the all-alternatives configuration, suits families expected to be signal-rich; the criterion targets that configuration and does not imply dominance for configurations with $\gamma < K$.

Throughout, $\mathcal{L}_{\vec{h}}(\vec{X})$ denotes the joint density (likelihood) of the data $\vec{X}$ when the configuration of true and false nulls is $\vec{h}$, and $\Lambda_k = g(u_k)$ denotes the likelihood ratio at coordinate $k$, with the $p$-value $u_k$ and the alternative density $g$ defined in \S\ref{sec:pvalue}.

\begin{assumption}[$\vec{h}$-Exchangeability]
\label{as:assumption1}
The $K$ tests are $\vec{h}$-exchangeable: $\mathcal{L}_{\vec{h}}(\vec{X}) = \mathcal{L}_{\sigma(\vec{h})}\{\sigma(\vec{X})\}$ for all permutations $\sigma \in S_K$.
\end{assumption}

\begin{sloppypar}
Exchangeability requires that the joint distribution of test statistics depends on the null/alternative labels only through their configuration, not through the identity of the individual hypotheses.
This holds, for example, when the test statistics $X_1, \ldots, X_K$ are conditionally independent given $\vec{h}$, with a common null distribution and a common alternative distribution.
\end{sloppypar}

\begin{assumption}[Arrangement-increasing]
\label{as:assumption2}
For $i \neq j$ with $(\Lambda_i - \Lambda_j)(h_i - h_j) \le 0$, we have $\mathcal{L}_{\vec{h}}(\vec{X}) \le \mathcal{L}_{\vec{h}}\{\sigma_{ij}(\vec{X})\}$, where $\Lambda_k = g(u_k)$ is the likelihood ratio and $\sigma_{ij}$ is the transposition of coordinates $i$ and $j$.
\end{assumption}

This assumption states that the joint density is arrangement-increasing in the likelihood ratios and the alternative indicators.
Under these two assumptions, it suffices to search over symmetric, likelihood-ratio-ordered policies \citep[Theorem~1]{RHPA22}: policies that reject a leading block of hypotheses when sorted by their likelihood ratios.
Symmetry and exchangeability also make ${\rm FWER}_\gamma$ depend on the configuration only through the number $\gamma$ of alternatives; the configurations $\vec{h}_0, \ldots, \vec{h}_{K-1}$ therefore exhaust the non-vacuous family-wise error constraints across all $2^K$ configurations, while $\vec{h}_K$ (all alternatives) has no true null and hence family-wise error rate zero.

\subsection{\texorpdfstring{$P$}{P}-value domain and likelihood ratios}
\label{sec:pvalue}

Let $u_k \in [0,1]$ be a $p$-value for $H_k$: uniformly distributed on $[0,1]$ under the null $H_{0k}$ and with density $g$ under the alternative $H_{Ak}$, so that $g(u)$ is the likelihood ratio of the $p$-value relative to its uniform null distribution.
For a one-sided test with $u_k = F_0(X_k)$, where $X_k \sim F_0$ under the null and $X_k \sim F_A$ under the alternative, this is $g(u) = f_A\{F_0^{-1}(u)\}/f_0\{F_0^{-1}(u)\}$; for two-sided or other tests, $g$ is the corresponding density of the alternative $p$-value.
We work in the \emph{product model}: given the configuration $\vec{h}$, the $p$-values $u_1, \ldots, u_K$ are mutually independent, each null $p$-value uniform on $[0,1]$ and each alternative $p$-value distributed as $g$.
This product likelihood satisfies Assumptions~\ref{as:assumption1}--\ref{as:assumption2} and yields the closed-form coefficients of Lemma~\ref{lem:general_k_coefficients}.
We further assume that $g$ is non-increasing on $[0,1]$ (the monotone-likelihood-ratio condition), so that smaller $p$-values carry larger likelihood ratios.
Exchangeability and the likelihood-ratio ordering then reduce the search to the ordered simplex
\[
Q = \{\vec{u} \in [0,1]^K : u_1 \le u_2 \le \cdots \le u_K\},
\]
in which, by the monotone likelihood ratio, ordering by increasing $p$-value coincides with ordering by decreasing likelihood ratio, so an optimal policy rejects a leading block of the smallest $p$-values.

A symmetric, likelihood-ratio-ordered policy is fully characterised by a function $l^*(\vec{u}) \in \{0, 1, \ldots, K\}$: reject the hypotheses with the $l^*$ smallest $p$-values and retain the rest ($l^* = 0$ rejects nothing).

\subsection{Lagrangian dual}
\label{sec:dual}

Since the search is restricted to symmetric, likelihood-ratio-ordered policies, a policy is characterised by a decision function $D_k(\vec{u}) \in \{0,1\}$ for each position $k$ and data point $\vec{u} \in Q$, where $D_k(\vec{u}) = 1$ means ``reject hypothesis at position $k$'' (the hypothesis with the $k$th smallest $p$-value).

Power and error admit linear representations in the policy \citep[equations~(4) and (7)]{RHPA22}:
\begin{equation}
\label{eq:power_general}
\Pi_K(\vec{D}) = \int_{Q}\sum_{k=1}^{K} a_{k}(\vec{u})\, D_{k}(\vec{u})\, d\vec{u}, \qquad
{\rm FWER}_\gamma(\vec{D}) = \int_{Q}\sum_{k=1}^{K} b_{\gamma,k}(\vec{u})\, D_{k}(\vec{u})\, d\vec{u},
\end{equation}
where the coefficient functions are defined as follows.
The \emph{power weight} $a_k(\vec{u})$ is the marginal contribution to power from rejecting the hypothesis at sorted position $k$ when the data are $\vec{u}$:
\[
a_k(\vec{u}) = (K{-}1)!\prod_{j=1}^K g(u_j),
\]
which is the same for all $k$ (by symmetry, each position contributes equally to power under the all-alternatives configuration $\vec{h}_K$).
The \emph{error weight} $b_{\gamma,k}(\vec{u})$ is the marginal contribution to the family-wise error rate under configuration $\vec{h}_\gamma$ from rejecting position $k$; its closed form is given in Lemma~\ref{lem:general_k_coefficients} below.

Both $a_k$ and $b_{\gamma,k}$ are non-negative on $Q$.
For $a_k$, this is immediate since $g(u) \ge 0$ for all $u$.
For $b_{\gamma,k}$, non-negativity follows from the elementary symmetric polynomial representation (Lemma~\ref{lem:general_k_coefficients}): each $b_{\gamma,k}$ is a product of non-negative quantities ($g$-values, factorials, and an elementary symmetric polynomial of non-negative arguments).
The non-negativity of $b_{\gamma,k}$ was already observed by \citet[\S2.4]{RHPA22}; the closed form in Lemma~\ref{lem:general_k_coefficients} makes it transparent for every $K$, and this non-negativity is what yields the global monotonicity theorem of \S\ref{sec:monotonicity}, which underlies all our subsequent results.

The constrained optimisation problem is to maximise $\Pi_K(\vec{D})$ subject to ${\rm FWER}_\gamma(\vec{D}) \le \alpha$ for $\gamma = 0, \ldots, K-1$.
Introducing Lagrange multipliers $\mu = (\mu_0, \ldots, \mu_{K-1})^{\rm T} \ge 0$, and writing $\mu_{-\gamma}$ for the vector $\mu$ with the $\gamma$th component removed, the Lagrangian is
\[
  L(\vec{D}, \mu) = \sum_{\gamma=0}^{K-1}\mu_\gamma\alpha + \int_Q \sum_{i=1}^K D_i(\vec{u})\, R_i(\mu, \vec{u})\, d\vec{u},
\]
where
\[
R_i(\mu, \vec{u}) = a_i(\vec{u}) - \sum_{\gamma=0}^{K-1}\mu_\gamma\, b_{\gamma,i}(\vec{u})
\]
is the \emph{net benefit} of rejecting hypothesis $i$ at $\vec{u}$: the power gain minus the penalty-weighted error contribution.
For each $\mu$, maximising the Lagrangian over likelihood-ratio-ordered policies yields the induced decision
\[
  D_i^\mu(\vec{u}) = I\bigl(i \le l^*(\mu, \vec{u})\bigr), \qquad
  l^*(\mu, \vec{u}) = \max\Bigl(\argmax_{0 \le l \le K} S_l(\mu, \vec{u})\Bigr),
\]
where $S_l(\mu, \vec{u}) = \sum_{i=1}^l R_i(\mu, \vec{u})$ is the cumulative net benefit (with $S_0 \equiv 0$), and the argmax selects the number of rejections that maximises it, ties broken by the largest maximiser (under Assumption~\ref{as:assumption3}, ties occur only on a Lebesgue-null set of $\vec{u}$; see Supplementary Lemma~\ref{res:C1}).
The dual problem reduces to $\min_{\mu \ge 0} L(\vec{D}^\mu, \mu)$.

Under a non-redundancy condition \citep[Assumption~3]{RHPA22}, strong duality yields a dual minimiser $\mu^*$ whose induced policy $\vec D^{\mu^*}$ is primal-optimal; complementary slackness gives, for each $\gamma$, either $\mu_\gamma^* > 0$ and ${\rm FWER}_\gamma(\vec{D}^{\mu^*}) = \alpha$, or $\mu_\gamma^* = 0$ and ${\rm FWER}_\gamma(\vec{D}^{\mu^*}) \le \alpha$.
The challenge is to compute $\mu^*$.

\section{Main results}
\label{sec:main_result}

\subsection{Family-wise error rate coefficients via elementary symmetric polynomials}
\label{sec:coefficients}

We begin by recalling elementary symmetric polynomials.
The $m$th elementary symmetric polynomial of $(x_1, \ldots, x_n)$ is
\[
e_m(x_1, \ldots, x_n) = \sum_{\substack{S \subseteq [n] \\ |S|=m}} \prod_{i \in S} x_i,
\]
with the convention $e_0 = 1$ and $e_m = 0$ for $m < 0$ or $m > n$.
Key properties that we use below are: (i) $e_m(x_1, \ldots, x_n) \ge 0$ whenever all $x_i \ge 0$; (ii) $e_m$ is non-decreasing in each argument when all arguments are non-negative; and (iii) the recurrence $e_m(x_1, \ldots, x_n) = e_m(x_1, \ldots, x_{n-1}) + x_n \cdot e_{m-1}(x_1, \ldots, x_{n-1})$, which allows $O(nm)$ computation.

\begin{lemma}
\label{lem:general_k_coefficients}
Under the product $p$-value model of Section~\ref{sec:pvalue} (which entails Assumptions~\ref{as:assumption1}--\ref{as:assumption2}), the family-wise error rate constraint coefficients are
\begin{equation}
\label{eq:b_lk_general}
b_{\gamma,k}(\vec{u}) = \gamma!\,(K{-}\gamma)!\; \prod_{j=1}^{k-1} g(u_j) \;\cdot\; e_{\gamma-k+1}\bigl\{g(u_{k+1}), \ldots, g(u_K)\bigr\},
\end{equation}
where $b_{\gamma,k} = 0$ whenever $\gamma < k{-}1$ or $\gamma{-}k{+}1 > K{-}k$.
The power coefficients are $a_k(\vec{u}) = (K{-}1)!\prod_{j=1}^K g(u_j)$, identical for all~$k$.
\end{lemma}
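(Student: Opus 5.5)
The plan is to derive both coefficient families directly from the linear representation \eqref{eq:power_general}. We start from an arbitrary symmetric, likelihood-ratio-ordered policy on the full cube and fold the integral onto the ordered simplex $Q$. Concretely, write a symmetric policy as $\vec{D}(\vec{x})$ for $\vec{x}\in[0,1]^K$. Symmetry means that if $\sigma$ sorts $\vec{x}$ into $\vec{u}=\sigma(\vec{x})\in Q$, then the hypothesis sitting at sorted position $k$ is rejected iff $D_k(\vec{u})=1$. For any configuration $\vec h$, the product model gives joint density $\mathcal L_{\vec h}(\vec x)=\prod_{j:h_j=1} g(x_j)$, since null coordinates have density $1$. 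Every $\vec{x}$ off a null set is $\sigma^{-1}(\vec u)$ for exactly one $\vec u\in Q$ and one of the $K!$ permutations $\sigma$. Hence any expectation $\mathbb E_{\vec h}\{\phi(\vec D)\}$ equals
\[
\int_Q\sum_{\sigma\in S_K}\mathcal L_{\vec h}\{\sigma^{-1}(\vec u)\}\,\phi_\sigma(\vec u)\,d\vec u .
\]
Here $\phi_\sigma(\vec u)$ is the quantity evaluated when original hypothesis $j$ occupies sorted position $\sigma(j)$.

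\textbf{Power coefficients.} Take $\vec h=\vec h_K$. Every coordinate is an alternative, so $\mathcal L_{\vec h_K}(\sigma^{-1}\vec u)=\prod_{j=1}^K g(u_j)$ for every $\sigma$. The number of true rejections is $\sum_k D_k(\vec u)$ regardless of $\sigma$. Multiplying by the factor $K!$ from the sum over $\sigma$ and by $1/K$ from the definition of $\Pi_K$ gives $a_k=(K-1)!\prod_j g(u_j)$.

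\textbf{Error coefficients.} Take $\vec h=\vec h_\gamma$. Because the policy rejects a leading block (positions $1,\dots,l^*$), the event $V>0$ occurs exactly when some null occupies a position $\le l^*$. Equivalently, $V>0$ iff the smallest position held by a null, call it $k$, satisfies $D_k(\vec u)=1$. So $I(V>0)=\sum_{k=1}^K D_k(\vec u)\,I(\text{first null position}=k)$; this decomposition is what makes the representation linear in $\vec D$. The coefficient $b_{\gamma,k}(\vec u)$ is therefore the sum, over assignments of the $\gamma$ alternatives and $K-\gamma$ nulls to sorted positions, of the likelihood of that assignment, restricted to assignments whose first null sits at position $k$. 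For such an assignment:
\begin{itemize}[nosep]
\item positions $1,\dots,k-1$ are all alternatives, contributing $\prod_{j<k}g(u_j)$;
\item position $k$ is a null, contributing the factor $1$;
\item the remaining $\gamma-(k-1)$ alternatives occupy a subset $S\subseteq\{k+1,\dots,K\}$ of that size, contributing $\prod_{j\in S}g(u_j)$.
\end{itemize}
Summing over $S$ yields $e_{\gamma-k+1}\{g(u_{k+1}),\dots,g(u_K)\}$.

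It remains to count the permutations realising each position-type pattern. The $\gamma$ labelled alternatives can be matched to the chosen alternative positions in $\gamma!$ ways, and the $K-\gamma$ labelled nulls to the null positions in $(K-\gamma)!$ ways, which gives the prefactor $\gamma!(K-\gamma)!$. The vanishing cases fall out automatically from the convention $e_m=0$ for $m<0$ or $m>K-k$: if $\gamma<k-1$ there are too few alternatives to fill the prefix, and if $\gamma-k+1>K-k$ there is no room for them after position $k$.

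\textbf{Main obstacle.} The step needing most care is the folding identity. One must make sure that the factor $K!$ and the per-pattern multiplicities $\gamma!(K-\gamma)!$ are counted consistently, so that the normalisation matches the power coefficient convention of \citet{RHPA22}. One must also confirm that ties, i.e. the boundary of $Q$, have measure zero so the folding is exact. I would check the formula against the $K=3$ coefficients of \citet{DubeyHuo2024K3} as a sanity check on the constants.
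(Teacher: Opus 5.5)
Your proposal is correct and follows essentially the same route as the paper. Both decompose ${\rm FWER}_\gamma$ by the sorted position $k$ of the first true null (valid because the policy rejects a leading block), obtain the prefix product and the tail elementary symmetric polynomial by summing over placements of the remaining $\gamma-k+1$ alternatives, and attribute the prefactor $\gamma!\,(K-\gamma)!$ (and $K!\cdot K^{-1}$ for $a_k$) to the permutations folded onto $Q$; your explicit folding identity over $S_K$ just makes rigorous the counting the paper states in one line.
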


The proof is given in Supplementary Lemma~\ref{res:lemma1}.

\begin{remark}
\label{rem:structure}
The formula~\eqref{eq:b_lk_general} exhibits a factorisation: $b_{\gamma,k}$ depends on the data through a \emph{prefix product} (the likelihood ratios below position $k$) and a \emph{tail symmetric polynomial} (the likelihood ratios above position $k$).
The null hypothesis at position $k$ itself contributes nothing (its $p$-value is uniform with density $1$), creating a clean separation between the evidence below and above the position under consideration.
\end{remark}

\subsection{Optimality conditions}
\label{sec:optimality}

We impose regularity on the alternative density.

\begin{assumption}[Lower Lipschitz]
\label{as:assumption3}
There exists $c_3 > 0$ such that $|g(u) - g(u')| \ge c_3|u - u'|$ for all $u, u' \in [0,1]$.
\end{assumption}

\begin{assumption}[Strict positivity]
\label{as:assumption4}
There exists $c_4 > 0$ such that $g(u) \ge c_4$ for all $u \in [0,1]$.
\end{assumption}

\begin{assumption}[Upper bound]
\label{as:assumption5}
There exists $c_5 > 0$ such that $g(u) \le c_5$ for all $u \in [0,1]$.
\end{assumption}

Assumption~\ref{as:assumption3} ensures that the alternative density differs sufficiently from the uniform (the likelihood ratio is not nearly constant); Assumption~\ref{as:assumption4} bounds the likelihood ratio away from zero (the finite maximisation over $l$ already makes the policy well-defined; the lower bound enters the Lipschitz-stability constants of the Monte Carlo analysis); Assumption~\ref{as:assumption5} is a boundedness condition.

Only the truncated normal among our simulation models satisfies Assumptions~\ref{as:assumption3}--\ref{as:assumption5}.
The mixture-normal and beta alternatives retain a non-increasing likelihood ratio, and hence the population optimality interpretation, but violate regularity conditions used in the convergence analysis.
The Student-$t$ alternative also violates likelihood-ratio monotonicity and is retained only as an out-of-model stress test.
Model-specific details and the precise scope of each guarantee are given in Supplementary Section~\ref{sec:supp_models}.

\begin{theorem}
\label{thm:optimality_general_k}
Under Assumptions~\ref{as:assumption3}--\ref{as:assumption5}, a minimiser $\mu^*$ of the dual satisfies, for each $\gamma = 0, \ldots, K{-}1$:
\[
F_\gamma(\mu^*) := {\rm FWER}_\gamma(\vec{D}^{\mu^*}) = \alpha \quad \text{if } \mu_\gamma^* > 0, \qquad
F_\gamma(\mu^*) \le \alpha \quad \text{if } \mu_\gamma^* = 0.
\]
\end{theorem}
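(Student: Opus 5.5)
The plan is to treat the dual function $G(\mu) := L(\vec D^\mu,\mu) = \sum_\gamma \mu_\gamma\alpha + \int_Q \max_{0\le l\le K} S_l(\mu,\vec u)\,d\vec u$ as a convex function on $\mu\ge 0$ and read the stated conditions off its first-order optimality (KKT) conditions on the non-negative orthant. Convexity is immediate: for each fixed $\vec u$, every $S_l(\mu,\vec u)$ is affine in $\mu$, so the pointwise maximum over $l$ is convex, and integrating over $Q$ preserves convexity. The work is therefore to show that $G$ is differentiable and to identify its gradient.

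\textbf{Step 1: the gradient.} For a.e.\ $\vec u$, the maximiser $l^*(\mu,\vec u)$ is unique; this is the content of Supplementary Lemma~\ref{res:C1}, which uses Assumption~\ref{as:assumption3}. At such $\vec u$ the integrand is differentiable in $\mu$, with
\[
\partial_{\mu_\gamma}\max_l S_l = -\sum_{i\le l^*} b_{\gamma,i}(\vec u).
\]
Integrability is provided by Lemma~\ref{lem:general_k_coefficients} together with Assumption~\ref{as:assumption5}, which bound $b_{\gamma,i}\le \gamma!(K-\gamma)!\,c_5^{\gamma}\binom{K}{\gamma}$. The integrand is also Lipschitz in $\mu$ with an integrable constant. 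Dominated convergence (equivalently, differentiation of a convex integrand under the integral sign, with the subdifferential a.e.\ a singleton) then gives
\[
\partial_{\mu_\gamma} G(\mu) = \alpha - \int_Q \sum_i b_{\gamma,i} D_i^\mu\,d\vec u = \alpha - F_\gamma(\mu),
\]
using the representation~\eqref{eq:power_general}. So $G$ is differentiable, and in fact $C^1$ once one checks that $F_\gamma$ is continuous.

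\textbf{Step 2: KKT on the orthant.} Since $\mu^*$ minimises the convex differentiable $G$ over $\{\mu\ge0\}$, for each $\gamma$ we have $\partial_\gamma G(\mu^*)=0$ whenever $\mu^*_\gamma>0$, because this is an interior direction and one can move both ways. When $\mu^*_\gamma=0$ we have $\partial_\gamma G(\mu^*)\ge0$. Substituting the gradient from Step 1 gives $F_\gamma(\mu^*)=\alpha$ in the first case and $F_\gamma(\mu^*)\le\alpha$ in the second, which is the statement. A direct one-sided version also works, and avoids needing two-sided differentiability: since $G(\mu^*+t e_\gamma)\ge G(\mu^*)$ and the integrand's right derivative is $-\sum_{i\le l^*_+} b_{\gamma,i}$, the ties contribute only a null set.

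\textbf{Main obstacle.} The delicate point is Step 1, namely that ties $S_l=S_{l'}$ with $l\ne l'$ occur only on a Lebesgue-null set of $\vec u$ for every fixed $\mu$. Without this, $G$ has a genuine kink and only subgradient inclusions hold, which would yield $F_\gamma$ evaluated at some tie-breaking rather than at the max-maximiser $\vec D^{\mu^*}$. I would invoke Supplementary Lemma~\ref{res:C1} directly. If it has to be justified here, the argument is as follows. A tie means $\sum_{i=l+1}^{l'} R_i(\mu,\vec u)=0$, which is a nontrivial real-analytic-type relation in the $g(u_j)$. The strict monotonicity implied by Assumption~\ref{as:assumption3} makes $g$ injective with a Lipschitz inverse, so one can change variables $v_j=g(u_j)$. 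In these variables the tie set is the zero set of a non-zero polynomial, because the $a$-part contains the full product $\prod_j v_j$, which the $b$-terms cannot cancel identically; a non-zero polynomial has measure-zero zero set. The degenerate case $\mu=0$ is handled separately: there $S_l=l\,a_1>0$ under Assumption~\ref{as:assumption4}, so $l^*=K$ uniquely.
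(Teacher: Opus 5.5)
Your proposal is correct and follows the paper's proof: convexity of the dual, a.e.\ uniqueness of the maximiser via the tie-set result (Supplementary Lemma~\ref{res:C1}) giving $\partial h/\partial\mu_\gamma=\alpha-F_\gamma$, then the KKT conditions on the non-negative orthant. The only difference is cosmetic. You justify differentiability directly, by dominated convergence of difference quotients of the Lipschitz integrand or by one-sided directional derivatives, whereas the paper routes it through Danskin's theorem and Rockafellar's Theorem~25.2 via Supplementary Corollary~\ref{res:C2}.
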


The proof is given in Supplementary Theorem~\ref{res:theorem1}.

\subsection{Global monotonicity}
\label{sec:monotonicity}

The key structural result makes SPOT possible.

\begin{theorem}
\label{thm:general_k_monotonicity}
In the product-model setting of Lemma~\ref{lem:general_k_coefficients}, let $\mu,\mu' \in [0,\infty)^K$ satisfy $\mu \le \mu'$ componentwise.
Then $D_k^{\mu'}(\vec{u}) \le D_k^{\mu}(\vec{u})$ for all $k$ and $\vec{u} \in Q$, and consequently $F_\gamma(\mu') \le F_\gamma(\mu)$ for every $\gamma = 0, \ldots, K{-}1$.
\end{theorem}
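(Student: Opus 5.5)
The plan is to prove the pointwise statement $l^*(\mu',\vec u)\le l^*(\mu,\vec u)$ for every $\vec u\in Q$ by a monotone comparative-statics (decreasing-differences) argument. The only structural input is the non-negativity $b_{\gamma,k}\ge 0$ from Lemma~\ref{lem:general_k_coefficients}. The FWER statement will then follow by integrating against the non-negative weights in \eqref{eq:power_general}.

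\textbf{Step 1 (cumulative form).} Write
\[
S_l(\mu,\vec u)=A_l(\vec u)-\sum_{\gamma=0}^{K-1}\mu_\gamma B_{\gamma,l}(\vec u),\qquad A_l=\sum_{i\le l}a_i,\quad B_{\gamma,l}=\sum_{i\le l}b_{\gamma,i},
\]
with $A_0=B_{\gamma,0}=0$. By Lemma~\ref{lem:general_k_coefficients}, $B_{\gamma,l}$ is non-decreasing in $l$ for each $\gamma$. Set $\delta=\mu'-\mu\ge 0$. Then for $l'>l$,
\[
\{S_{l'}(\mu')-S_l(\mu')\}-\{S_{l'}(\mu)-S_l(\mu)\}=-\sum_\gamma \delta_\gamma\{B_{\gamma,l'}-B_{\gamma,l}\}\le 0 .
\]
So increasing $\mu$ can only make larger $l$ relatively less attractive.

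\textbf{Step 2 (tie-breaking).} Fix $\vec u$ and put $l=l^*(\mu,\vec u)$ and $l'=l^*(\mu',\vec u)$. Suppose, for contradiction, that $l'>l$. Since $l$ is the \emph{largest} maximiser of $S_\cdot(\mu,\vec u)$ and $l'>l$, we have the strict inequality $S_{l'}(\mu)-S_l(\mu)<0$. Step 1 then gives $S_{l'}(\mu')-S_l(\mu')<0$. This contradicts $l'$ being a maximiser of $S_\cdot(\mu',\vec u)$. Hence $l'\le l$, and therefore $D_k^{\mu'}(\vec u)=I(k\le l')\le I(k\le l)=D_k^{\mu}(\vec u)$ for every $k$.

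The point I expect to need the most care is exactly this use of the largest-maximiser convention. With an arbitrary selection from the argmax, monotonicity could fail on tie sets. With the paper's convention, the argument is pointwise for every $\vec u$, so no null-set argument (and none of Assumptions~\ref{as:assumption3}--\ref{as:assumption5}) is needed.

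\textbf{Step 3 (integration).} By \eqref{eq:power_general},
\[
F_\gamma(\mu')-F_\gamma(\mu)=\int_Q\sum_k b_{\gamma,k}(\vec u)\{D_k^{\mu'}(\vec u)-D_k^{\mu}(\vec u)\}\,d\vec u .
\]
The integrand is $\le 0$ pointwise, because $b_{\gamma,k}\ge 0$ and the bracket is $\le 0$ by Step 2. Hence $F_\gamma(\mu')\le F_\gamma(\mu)$ for every $\gamma=0,\ldots,K-1$.
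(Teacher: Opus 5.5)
Your proof is correct and follows essentially the same route as the paper. The paper's observation that the gap $\Delta_l = S_l(\mu) - S_l(\mu')$ is non-decreasing in $l$ is exactly your increasing-differences inequality in Step~1, and its contradiction with $l^*(\mu) = \max A(\mu)$ is your Step~2, phrased by deriving equality $S_m(\mu) = S_{l^*(\mu)}(\mu)$ rather than a strict inequality; Step~3 is identical, and, as you note, only $b_{\gamma,k}\ge 0$ and the largest-maximiser convention are used.
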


The proof is given in Supplementary Theorem~\ref{res:theorem2}.

The monotonicity theorem has several immediate consequences.

\begin{corollary}
\label{cor:unique_root}
Under Assumptions~\ref{as:assumption3} and~\ref{as:assumption5}, for fixed $\mu_{-\gamma}$ the function $\mu_\gamma \mapsto F_\gamma(\mu_\gamma; \mu_{-\gamma})$ is non-increasing and continuous.
If $F_\gamma(0; \mu_{-\gamma}) > \alpha$, then the crossing point $\mu_\gamma^*(\mu_{-\gamma}) = \inf\{\mu_\gamma \ge 0 : F_\gamma(\mu_\gamma; \mu_{-\gamma}) \le \alpha\} \in (0, \infty)$ satisfies $F_\gamma(\mu_\gamma^*; \mu_{-\gamma}) = \alpha$, with $F_\gamma > \alpha$ for $\mu_\gamma < \mu_\gamma^*$ and $F_\gamma \le \alpha$ for $\mu_\gamma \ge \mu_\gamma^*$, and is computable by bisection.
\end{corollary}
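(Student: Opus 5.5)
Monotonicity is immediate from Theorem~\ref{thm:general_k_monotonicity}: if $\mu_\gamma \le \mu_\gamma'$ with $\mu_{-\gamma}$ fixed, the two full vectors are componentwise ordered, so $F_\gamma(\mu_\gamma';\mu_{-\gamma}) \le F_\gamma(\mu_\gamma;\mu_{-\gamma})$. The substantive part is continuity. I will write
\[
F_\gamma(\mu) = \int_Q \sum_{k=1}^K b_{\gamma,k}(\vec u)\, I\{k \le l^*(\mu,\vec u)\}\, d\vec u
\]
and apply dominated convergence along any sequence $\mu_\gamma^{(n)} \to \mu_\gamma$. By Assumption~\ref{as:assumption5} and Lemma~\ref{lem:general_k_coefficients}, $b_{\gamma,k} \le \gamma!(K-\gamma)!\binom{K-k}{\gamma-k+1} c_5^{\gamma}$, so the integrand is uniformly bounded on the bounded set $Q$. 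It therefore suffices to show that $l^*(\mu^{(n)},\vec u) \to l^*(\mu,\vec u)$ for almost every $\vec u$.

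Each $S_l(\mu,\vec u)$ is affine in $\mu$, so the maximiser of $S_l$ over $l\in\{0,\dots,K\}$ is locally constant in $\mu_\gamma$ at every $\vec u$ where that maximiser is unique. The main obstacle is to show that, for fixed $\mu$, the tie set $T(\mu)=\{\vec u: S_l(\mu,\vec u)=S_{l'}(\mu,\vec u)\text{ for some } l\neq l'\}$ is Lebesgue-null. Here I would invoke Supplementary Lemma~\ref{res:C1}, cited in \S\ref{sec:dual}, which uses Assumption~\ref{as:assumption3}. The idea is that a tie between $S_l$ and $S_{l'}$ is the zero set of $\sum_{i=l+1}^{l'} R_i(\mu,\vec u)$. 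This is a polynomial in $g(u_1),\dots,g(u_K)$ built from the products and elementary symmetric polynomials of Lemma~\ref{lem:general_k_coefficients}. It contains the term $(l'-l)a_i = (l'-l)(K-1)!\prod_j g(u_j)$, whose degree in the $g$-values is strictly larger than that of every $b_{\gamma,i}$ term, so the polynomial is not identically zero. Because $g$ is strictly monotone with a lower Lipschitz bound, the map $\vec u \mapsto (g(u_1),\dots,g(u_K))$ is injective with inverse Lipschitz, and so it pulls null sets back to null sets. The zero set of a nonzero polynomial is null, hence $T(\mu)$ is null. Off $T(\mu)$ the indicator converges pointwise, and dominated convergence gives continuity.

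For the crossing point, the hypothesis $F_\gamma(0;\mu_{-\gamma})>\alpha$ together with continuity gives $\mu_\gamma^*>0$. To show $\mu_\gamma^*<\infty$, I will argue that $F_\gamma(t;\mu_{-\gamma})\to 0$ as $t\to\infty$. For $\gamma\ge k-1$, a rejection at position $k$ with $b_{\gamma,k}>0$ contributes $-t\,b_{\gamma,k}\to-\infty$ to $S_l$ for every $l\ge k$. The remaining terms are bounded, since $a_i\le (K-1)!c_5^K$ and the other $\mu$-terms are fixed. Hence $D_k^{\mu}=0$ eventually wherever $b_{\gamma,k}(\vec u)>0$, and bounded convergence gives $F_\gamma\to0<\alpha$. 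I need $\alpha>0$ for this step.

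The set $\{t\ge0: F_\gamma(t)\le\alpha\}$ is then nonempty. It is closed by continuity and upward-closed by monotonicity, so it equals $[\mu_\gamma^*,\infty)$. The intermediate value theorem, applied on $[0,\mu_\gamma^*]$, together with the infimum property gives $F_\gamma(\mu_\gamma^*)=\alpha$, with $F_\gamma>\alpha$ to the left of $\mu_\gamma^*$ and $F_\gamma\le\alpha$ to the right. Bisection on the predicate $F_\gamma(t)\le\alpha$ is valid because the predicate is monotone in $t$. One starts from $[0,t_{\max}]$, doubling $t_{\max}$ until $F_\gamma(t_{\max})\le\alpha$, which terminates by the limit above. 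Each halving step preserves the invariant that $\mu_\gamma^*$ lies in the current bracket.
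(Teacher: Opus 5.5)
Your proof is correct and follows the paper's route: monotonicity from Theorem~\ref{thm:general_k_monotonicity}, continuity via the Lebesgue-null tie set of Supplementary Lemma~\ref{res:C1} plus dominated convergence (as in Corollary~\ref{res:C2}), $F_\gamma(t;\mu_{-\gamma})\to 0$ because the penalised partial sums go to $-\infty$, and the intermediate value theorem with bracketed bisection. Your position-by-position form of the limit step is equivalent to the paper's argument via $B_{\gamma,l^*}$, and it holds for every $\vec u$, not merely almost every $\vec u$.
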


The proof is given in Supplementary Corollary~\ref{res:corollary1}.

\begin{remark}
\label{rem:k3_relation}
For $K{=}3$, Theorem~\ref{thm:general_k_monotonicity} subsumes the own-coordinate and cross-coordinate monotonicity results proved individually in \citet{DubeyHuo2024K3} via case analysis of indicator functions; the present proof uses only $b_{\gamma,i} \ge 0$, which holds for all~$K$.
The two approaches are complementary: the explicit Lagrangian decomposition of the $K{=}3$ paper gives finer-grained structural insight for small~$K$, while the elementary symmetric polynomial formulation gives a single, dimension-free proof for arbitrary~$K$.
\end{remark}

\subsection{SPOT and its convergence}
\label{sec:algorithm}

Algorithm~\ref{alg:general_k} presents SPOT.
Corollary~\ref{cor:unique_root} makes the population crossing point well defined; the finite-sample bisection instead acts on the empirical target $\hat{F}_\gamma$, a non-increasing step function by the pathwise monotonicity of Theorem~\ref{thm:general_k_monotonicity}, whose crossing is bracketed and bisected directly (\S\ref{sec:mc_error}).

\begin{algorithm}[!t]
\caption{SPOT: bisection coordinate descent for computing the model-based dual multipliers}
\label{alg:general_k}
\normalsize\onehalfspacing
\begin{algorithmic}[1]
\Require $K$, $\alpha$, alternative density $g$, and Monte Carlo size $N$
\Statex \hspace{\algorithmicindent}Bisection tolerance $\delta$, iterate tolerance $\varepsilon$, sweep cap $T_{\max}$,
\Statex \hspace{\algorithmicindent}and initial upper bracket $U_{\max}$
\State Initialise $\vec{\mu}^{(0)} \gets \vec{0} \in \mathbb{R}^K$ and $s_N \gets \{\alpha(1-\alpha)/N\}^{1/2}$
\State Draw and fix an independent $N$-sample batch under each $\vec{h}_\gamma$, $\gamma=0,\ldots,K{-}1$
\Statex \textbf{Bisection update:} $\Call{UpdateCoordinate}{\gamma,\mu_{-\gamma}}$ returns $0$ if
\Statex \hspace{\algorithmicindent}$\hat F_\gamma(0;\mu_{-\gamma})\le\alpha$; otherwise it brackets the leftmost crossing from
\Statex \hspace{\algorithmicindent}$U_{\max}$ and returns the midpoint of a bracket of width at most $\delta$.
\For{$t=1,\ldots,T_{\max}$}
  \State $\vec{\mu}^{(t)} \gets \vec{\mu}^{(t-1)}$
  \For{$\gamma=0,\ldots,K{-}1$}
    \State $\mu_\gamma^{(t)} \gets \Call{UpdateCoordinate}{\gamma,\mu_{-\gamma}^{(t)}}$
  \EndFor
  \State $r^{(t)} \gets \max_\gamma\bigl|\min\{\mu_\gamma^{(t)},\alpha-\hat F_\gamma(\vec\mu^{(t)})\}\bigr|$
  \State $d_1^{(t)} \gets \|\vec\mu^{(t)}-\vec\mu^{(t-1)}\|_2$
  \State $d_2^{(t)} \gets \|\vec\mu^{(t)}-\vec\mu^{(t-2)}\|_2$ if $t\ge3$
  \If{$r^{(t)}\le s_N$ \textbf{or} $d_1^{(t)}<\varepsilon$ \textbf{or} ($t\ge3$ \textbf{and} $d_2^{(t)}<\varepsilon$)}
    \State \textbf{break}
  \EndIf
\EndFor
\Ensure $\hat{\vec\mu} \gets \vec\mu^{(t)}$ and $D_k^{\hat\mu}(\vec u) \gets I\{k\le l^*(\hat\mu,\vec u)\}$
\end{algorithmic}
\end{algorithm}

Each coordinate update initialises a local bracket at $[0,U_{\max}]$, advances and doubles its upper endpoint until $\hat{F}_\gamma \le \alpha$, and bisects to bracket width $\delta$ on the fixed configuration-$\gamma$ batch $\{\vec{u}^{(n)}\}_{n=1}^N$, reused across all sweeps (common random numbers); the full subroutine is given in Supplementary Section~\ref{sec:supp_impl}.
Because Theorem~\ref{thm:general_k_monotonicity} applies pathwise to every $\vec{u}^{(n)}$, the use of common random numbers ensures that $\hat{F}_\gamma(\mu_\gamma)$ is non-increasing in $\mu_\gamma$ for any fixed batch, so the bisection is well-defined even with finite Monte Carlo samples.
Because each configuration's batch is fixed for the entire run, each $\hat{F}_\gamma$ is a single empirical function and the outer recursion is deterministic given the batches; this is the setting analysed in \S\ref{sec:mc_error} and the Supplementary Material.

\medskip\noindent\textit{Convergence analysis.}

\begin{theorem}[Global convergence of the exact population recursion]
\label{thm:algo_convergence}
Under Assumptions~\ref{as:assumption3}--\ref{as:assumption5}, the dual objective $h(\mu) = \max_{\vec{D}} L(\vec{D}, \mu)$ is convex and continuously differentiable, and the exact population version of SPOT (Algorithm~\ref{alg:general_k} run with the exact target $F_\gamma$ in place of its Monte Carlo estimate $\hat{F}_\gamma$, and with each coordinate minimised exactly rather than to a positive bisection tolerance) performs exact cyclic coordinate minimisation of $h$ over $\mu \ge 0$.
Its iterates satisfy $h(\vec{\mu}^{(t)}) \downarrow \min_{\mu \ge 0} h(\mu)$, and every limit point of $\{\vec{\mu}^{(t)}\}$ is an optimal dual vector $\mu^*$.
No contraction, strong-convexity, Lipschitz-gradient, or active-Jacobian-nonsingularity hypothesis is required (beyond Assumptions~\ref{as:assumption3}--\ref{as:assumption5}).
\end{theorem}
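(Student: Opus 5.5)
We first establish the structural facts about the dual function $h(\mu)=\max_{\vec D}L(\vec D,\mu)=\alpha\sum_\gamma\mu_\gamma+\int_Q\max_{0\le l\le K}S_l(\mu,\vec u)\,d\vec u$. For each fixed $\vec u$, the integrand $\max_l S_l(\mu,\vec u)$ is a pointwise maximum of finitely many affine functions of $\mu$, so it is convex. Integration preserves convexity, which gives convexity of $h$; the integrand is integrable because $a_k,b_{\gamma,k}\le C(K)c_5^{K}$ by Lemma~\ref{lem:general_k_coefficients} and Assumption~\ref{as:assumption5}. For differentiability, we note that on the set where the maximiser $l^*(\mu,\vec u)$ is unique, the integrand is differentiable in $\mu$ with $\partial_{\mu_\gamma}\max_l S_l=-\sum_{i\le l^*}b_{\gamma,i}(\vec u)$. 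By Supplementary Lemma~\ref{res:C1}, under Assumption~\ref{as:assumption3} ties occur only on a Lebesgue-null set for every fixed $\mu$. Dominated convergence applied to difference quotients (these are bounded by $\sum_i b_{\gamma,i}$ because the integrand is Lipschitz in $\mu$ with that constant) then yields
\[
\partial_{\mu_\gamma}h(\mu)=\alpha-\int_Q\sum_i b_{\gamma,i}D_i^\mu\,d\vec u=\alpha-F_\gamma(\mu).
\]
For continuity of the gradient, we show that $\mu\mapsto F_\gamma(\mu)$ is continuous. If $\mu^{(n)}\to\mu$, then $D^{\mu^{(n)}}(\vec u)\to D^\mu(\vec u)$ at every $\vec u$ off the null tie set, and dominated convergence applies again. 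Alternatively, one can invoke the standard fact that a convex function differentiable everywhere is $C^1$.

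Next we identify the exact population recursion with cyclic coordinate minimisation. Fix $\mu_{-\gamma}$; the map $\phi(t)=h(t;\mu_{-\gamma})$ on $t\ge0$ is convex and $C^1$ with $\phi'(t)=\alpha-F_\gamma(t;\mu_{-\gamma})$, which is non-decreasing by Theorem~\ref{thm:general_k_monotonicity}. There are two cases. If $F_\gamma(0;\mu_{-\gamma})\le\alpha$, then $\phi'(0)\ge0$ and $t=0$ is a minimiser over $[0,\infty)$; this matches the update's return value of $0$. Otherwise, Corollary~\ref{cor:unique_root} gives the crossing point $t^*\in(0,\infty)$ with $\phi'(t^*)=0$, which is a minimiser, and it is the point that exact bisection returns (the leftmost root). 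Hence each sweep performs exact cyclic coordinate minimisation. In particular, $h(\mu^{(t)})$ is non-increasing and bounded below by $\min_{\mu\ge0} h$, which exists by strong duality (Theorem~\ref{thm:optimality_general_k} and the non-redundancy framework).

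The convergence claim then follows from a classical result on cyclic coordinate descent for $C^1$ convex functions over a product of intervals. The version we would use states that every limit point is stationary provided each coordinate minimisation is well defined, and then convexity upgrades stationarity to optimality, i.e.\ the KKT conditions $\mu\ge0$, $F_\gamma\le\alpha$, complementary slackness. The \textbf{main obstacle} is that the standard theorem (Bertsekas, Prop.\ 2.7.1) requires a \emph{unique} coordinate minimiser, which can fail here: $\phi'$ may vanish on an interval, because $F_\gamma$ is only non-increasing. We plan to handle this in one of two ways.
\begin{enumerate}[label=(\roman*)]
\item Use the result of Grippo and Sciandrone (2000) for the cyclic order of coordinate blocks, which requires only $C^1$ and convexity together with bounded level sets along the iterates.
\item Argue directly from the leftmost-root selection. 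Take a convergent subsequence $\mu^{(t_j)}\to\bar\mu$, and pass to a further subsequence so that all intermediate points within a sweep converge. Continuity of $F$ then passes the coordinate-wise optimality conditions $\phi'(\mu_\gamma)\ge0$, with equality when $\mu_\gamma>0$, to the limit. To show the intermediate limits coincide with $\bar\mu$, we use that the decrease $h(\mu^{(t)})-h(\mu^{(t+1)})\to0$: if a coordinate moved by a non-vanishing amount, convexity along that coordinate would force the objective to be flat on a segment. We then verify that flatness is still compatible with stationarity, since $\nabla h$ is constant in that coordinate along the segment.
\end{enumerate}
Boundedness of the iterates, needed for limit points to exist, is not required for the statement as phrased ("every limit point"). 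The convergence $h(\mu^{(t)})\downarrow\min h$ additionally needs some compactness. For this we would show the level sets of $h$ on $\mu\ge0$ are bounded using non-redundancy: $h(\mu)\ge\alpha\sum\mu_\gamma-\int_Q\ldots$ grows along any direction $d\ge0$ because $F_\gamma(s d)\to0<\alpha$ as $s\to\infty$, i.e.\ the recession function $\sum_\gamma d_\gamma(\alpha-\lim_s F_\gamma)$ is positive.
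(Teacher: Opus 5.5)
Your proposal is correct and follows essentially the paper's route: convexity and $C^1$ smoothness of $h$ from the null tie set (Lemma~\ref{res:C1}) plus dominated convergence, identification of each exact update with an attained (possibly non-unique) coordinate minimiser via monotonicity, and the Grippo--Sciandrone result for convex differentiable objectives on a compact sublevel set, used to avoid the uniqueness requirement of Bertsekas's Proposition~2.7.1. One simplification: compactness needs neither non-redundancy nor a recession-function argument, because taking $\vec D\equiv\vec 0$ gives $h(\mu)\ge L(\vec 0,\mu)=\alpha\lVert\mu\rVert_1$ on $\mu\ge0$ (Lemma~\ref{res:C4}), and this bound also yields existence of $\min_{\mu\ge0}h$ directly.
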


The proof, in Supplementary Theorem~\ref{res:C5}, applies global convergence theory for exact cyclic coordinate minimisation to the continuously differentiable convex dual on its compact sublevel set; coordinate-minimiser uniqueness is not required.

\begin{proposition}[Local linear rate]
\label{prop:linear_rate}
Under Assumptions~\ref{as:assumption3}--\ref{as:assumption5}, let $\mu^*$ be a dual optimum with active set $A = \{\gamma : \mu_\gamma^* > 0\}$ at which strict complementarity holds, the target functions $F_\gamma$ are continuously differentiable, and the active Jacobian $J = [\partial F_\gamma/\partial \mu_\delta(\mu^*)]_{\gamma,\delta \in A}$ is nonsingular.
Then $h$ is locally strongly convex in the active coordinates, and the exact population iterates converge $R$-linearly: $\|\vec{\mu}^{(t)} - \mu^*\|_2 \le C\,\rho^{\,t}$ for some $\rho \in (0,1)$, $C < \infty$, and all large $t$.
\end{proposition}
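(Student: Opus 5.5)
We obtain the rate by identifying the gradient of $h$, showing that the active-block Hessian is positive definite at $\mu^*$, and then applying local linear-convergence theory for cyclic coordinate descent.

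\textbf{Gradient and Hessian.} By Danskin's theorem, together with the $C^1$ property from Theorem~\ref{thm:algo_convergence},
\[
\partial h/\partial\mu_\gamma = \alpha - F_\gamma(\mu),
\]
since $\vec D^\mu$ maximises $L(\cdot,\mu)$ and ties occur only on a null set. Under the additional hypothesis that each $F_\gamma$ is $C^1$ near $\mu^*$, $h$ is $C^2$ there, with Hessian
\[
\nabla^2 h = -[\partial F_\gamma/\partial\mu_\delta].
\]
Convexity of $h$ makes this matrix positive semidefinite. Its active block is $-J$, which is nonsingular by hypothesis, and a nonsingular positive semidefinite matrix is positive definite. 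Hence there are $m>0$ and a neighbourhood $\mathcal U$ of $\mu^*$ on which $\nabla^2_{AA}h\succeq m I$ and $\nabla^2 h\preceq M I$. This gives local strong convexity in the active coordinates, which is the first claim.

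\textbf{Identification of the active set.} Strict complementarity gives $F_\gamma(\mu^*)<\alpha$ for every $\gamma\notin A$. By continuity, $F_\gamma<\alpha$ on a smaller neighbourhood, so there $\partial h/\partial\mu_\gamma>0$. The exact coordinate minimiser in such a coordinate is then $0$; this is also what UpdateCoordinate returns when $F_\gamma(0;\mu_{-\gamma})\le\alpha$, using the monotonicity in Theorem~\ref{thm:general_k_monotonicity}. For $\gamma\in A$ we have $\mu^*_\gamma>0$, so nearby coordinate minimisers are interior. They solve $F_\gamma=\alpha$ and are unique by strict convexity in that coordinate. Theorem~\ref{thm:algo_convergence} says every limit point is optimal. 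Strong convexity in the active block forces the optimum to be unique locally, among points with $\mu_{A^c}=0$. Taking a subsequence that enters $\mathcal U$, I would show that the whole sequence eventually stays there: $h(\mu^{(t)})$ decreases, and sublevel sets of $h$ intersected with $\mathcal U$ shrink to $\mu^*$. After finitely many sweeps, then, $\mu^{(t)}_{A^c}=0$, and the iteration is exactly cyclic coordinate minimisation of the smooth, strongly convex function $\phi(\mu_A)=h(\mu_A,0)$.

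\textbf{Linear rate.} For $\phi$ I would invoke the standard result of Luo and Tseng, or the Beck--Tetruashvili bound for cyclic coordinate descent on a strongly convex function with Lipschitz gradient. It gives
\[
\phi(\mu^{(t+1)})-\phi^* \le (1-c)\{\phi(\mu^{(t)})-\phi^*\},
\]
with $c$ depending on $m$, $M$ and $|A|$. Strong convexity converts this into $\|\mu^{(t)}-\mu^*\|\le C\rho^t$ with $\rho=(1-c)^{1/2}$.

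\textbf{Main obstacle.} The delicate step is trapping the iterates in $\mathcal U$. The limit-point statement alone does not prevent excursions, and the optimum may be non-unique globally. I would use local uniqueness plus the monotone decrease of $h$. A level $\eta$ can be chosen so that the connected component of $\{h\le h^*+\eta\}$ containing $\mu^*$ lies inside $\mathcal U$. A single coordinate step from a point close to $\mu^*$ moves only a small distance, by continuity of the coordinate minimiser, which follows from the implicit function theorem since $\partial F_\gamma/\partial\mu_\gamma<0$. Therefore the iterates cannot jump to another component, and they remain in $\mathcal U$.
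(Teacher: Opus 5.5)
\textbf{Verdict.} The proposal is essentially sound, but the trapping step has a genuine hole. The fix is the uniqueness argument the paper uses. The rate step is valid but takes a different route from the paper's.

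\textbf{Where you match the paper.} Your Hessian argument is exactly the paper's: $\nabla^2h=-\partial F/\partial\mu$ is positive semidefinite by convexity, so the nonsingular block $-J$ is positive definite. Pinning the inactive coordinates at $0$ by strict complementarity is also the paper's step.

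\textbf{The gap: trapping rests on a misconception.} You allow that the optimum ``may be non-unique globally'' and then take ``a subsequence that enters $\mathcal U$''. But Theorem~\ref{thm:algo_convergence} only says every limit point is \emph{some} optimum. If other optima existed, nothing would force any subsequence near this particular $\mu^*$, so $\mu^{(t)}\to\mu^*$ would not follow. Your local uniqueness is also asserted only on the face $\mu_{A^c}=0$.

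\textbf{The missing observation.} Under the hypotheses, $\mu^*$ is the unique dual optimum. Convexity and $\nabla_A h(\mu^*)=0$ give
\[
h(\mu)\ \ge\ h(\mu^*)+\sum_{\gamma\notin A}\{\alpha-F_\gamma(\mu^*)\}\,\mu_\gamma ,\qquad \mu\ge 0 .
\]
\begin{itemize}
\item By strict complementarity, every coefficient is positive, so any other minimiser lies on the face $\mu_{A^c}=0$.
\item On that face, local strong convexity makes $\mu^*$ an isolated minimiser.
\item The minimiser set of a convex function is convex, so an isolated minimiser is the only one.
\end{itemize}
With uniqueness, compactness of the sublevel set (Lemma~\ref{res:C4}) and Theorem~\ref{thm:algo_convergence} give $\mu^{(t)}\to\mu^*$ for the whole sequence. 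Within-sweep points also eventually lie in any prescribed ball around $\mu^*$, since their $h$-values are at most $h(\mu^{(t)})$. Your connected-component and implicit-function-theorem argument then becomes unnecessary. In any case, sublevel sets of a convex function are convex, hence connected, so there is no ``other component'' to jump to. This is how the paper proceeds.

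\textbf{The rate step: a different route.} The paper linearises the sweep map $S$ at its fixed point $\mu^*_A$. It identifies $S'(\mu^*_A)=-(D+L)^{-1}L^{\top}$ as the Gauss--Seidel iteration matrix of $H=\nabla^2_A h(\mu^*_A)$. It then invokes Ostrowski--Reich (spectral radius below one for symmetric positive definite $H$) to get a purely local $R$-linear rate, with no smoothness constants.

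Your function-value contraction, combined with quadratic growth $\phi-\phi^*\ge \tfrac m2\lVert\mu_A-\mu^*_A\rVert^2$, also works. It has the advantage of an explicit factor in terms of $m$, $M$ and $|A|$. Two points need attention:
\begin{itemize}
\item \emph{Citation.} You need a result covering \emph{exact} cyclic minimisation over more than two coordinates. Beck and Tetruashvili's linear rate is for block coordinate gradient projection and two-block alternating minimisation. Luo and Tseng's 1992 analysis of exact coordinate minimisation is the appropriate reference.
\item \emph{Local constants.} $m$ and $M$ hold only on a neighbourhood. That suffices once you note, via the uniqueness-based trapping above, that all tail iterates and intermediate points lie in a convex sublevel set inside a convex ball where those bounds hold.
\end{itemize}
The paper's Gauss--Seidel route sidesteps both points; yours buys quantitative constants.
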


The nonsingular-Jacobian condition of Proposition~\ref{prop:linear_rate} replaces the global contraction hypothesis of the $K{=}3$ analysis of \citet{DubeyHuo2024K3} by a local, verifiable non-degeneracy condition.
In the scaling runs of \S\ref{sec:convergence} the stopping criterion is met within $4$--$5$ sweeps for every $K \le 12$, with the successive-iterate changes $\|\mu^{(t)} - \mu^{(t-1)}\|_2$ decreasing geometrically (Supplementary Section~\ref{sec:supp_impl}).
The corresponding Monte Carlo guarantee is Theorem~\ref{thm:mc_accuracy}, conditional on its local-regularity and achieved-residual conditions; the centred Gaussian limit additionally requires the stricter $o_p(N^{-1/2})$ residual of Supplementary Theorem~\ref{res:C11}.

\medskip\noindent\textit{Computational complexity.}

\begin{sloppypar}
Each outer iteration performs $K$ bisection updates, each requiring $O(\log(U_{\rm final}/\delta))$ function evaluations, where $U_{\rm final}$ is the final upper bracket after any doubling.
The coefficient tensors $a_k$ and $b_{\gamma,k}$ are precomputed once from the fixed batch at total cost $O(K^4 N)$; with them cached across all sweeps, each outer iteration then costs $O(K^3 N \log(U_{\rm final}/\delta))$. The cache memory footprint and the no-cache variant, which recomputes the tensors each sweep, are detailed in Supplementary Section~\ref{sec:supp_impl}.
\end{sloppypar}

In practice, the dominant cost is evaluating the empirical targets over the Monte Carlo batches.
With $N = 10^5$ and $\delta = 10^{-4}$, SPOT alone takes about $3.1$ seconds for $K = 3$ and $48$ seconds for $K = 12$ on a single core; the full experiment, which also includes the evaluation phase and the baseline procedures, takes about $11$ and $133$ seconds respectively.
SPOT is embarrassingly parallel across Monte Carlo samples.

\subsection{Monte Carlo error}
\label{sec:mc_error}

SPOT (Algorithm~\ref{alg:general_k}) replaces each target function $F_\gamma$ by a Monte Carlo estimate
\begin{equation}
\label{eq:fhat_mc}
\hat{F}_\gamma(\mu) = \frac{1}{N}\sum_{n=1}^{N} I\{V^{(n)}(\vec{D}^{\mu}) > 0\},
\end{equation}
where $V^{(n)}(\vec{D}^\mu)$ is the number of false rejections incurred by the policy $\vec{D}^\mu$ on the $n$th Monte Carlo dataset generated under configuration $\vec{h}_\gamma$. One batch per configuration is drawn before the outer loop, independently across configurations, and reused for all bisection evaluations and all sweeps (common random numbers). Each $\hat{F}_\gamma$ is therefore a fixed empirical target function throughout the run.
Under configuration $\vec{h}_\gamma$, let $\vec{u}(\vec{X})$ be the sorted $p$-values obtained from the data $\vec{X}$ and write $\rho(\vec{X})$ for the sorted position of the first true null; a false rejection occurs precisely when the induced policy rejects down to that position, so
\begin{equation}
\label{eq:mc_indicator}
I\{V > 0\} = I\{l^*(\mu, \vec{u}(\vec{X})) \ge \rho(\vec{X})\},
\end{equation}
which is the event averaged in \eqref{eq:fhat_mc}.
The position $\rho(\vec{X})$ depends on which hypotheses are truly null, not on the sorted $p$-values alone, since a null $p$-value can sort ahead of an alternative; the fixed-index shortcut $I\{l^* \ge \gamma + 1\}$ is generally incorrect.
The population theory of \S\ref{sec:algorithm} does not, however, certify the sweep-level convergence of the recursion: the $K$ empirical targets come from independent batches and so are not the gradient of a single empirical objective. We therefore monitor convergence numerically: the implementation stops when the residual $\max_\gamma|\min\{\mu_\gamma, \alpha - \hat{F}_\gamma(\mu)\}|$ falls below $\{\alpha(1-\alpha)/N\}^{1/2}$, the Monte Carlo standard error of an FWER estimate at the constraint level (the primary criterion). As safeguards for the near-null alternatives whose empirical targets can jump across $\alpha$, it also stops when successive sweeps change $\mu$ by less than $\varepsilon$ in norm or $\mu$ returns to within $\varepsilon$ of its value two sweeps earlier (a period-two check), or, failing these, at the sweep cap $T_{\max}$ (\S\ref{sec:convergence}).

\begin{theorem}[Monte Carlo accuracy conditional on the achieved residual]
\label{thm:mc_accuracy}
Fix $K$ and let $N \to \infty$, and write $A = \{\gamma : \mu^*_\gamma > 0\}$ for the active set at the dual optimum $\mu^*$. Suppose Assumptions~\ref{as:assumption3}--\ref{as:assumption5} hold and, at $\mu^*$, strict complementarity, the local single-crossing condition of Supplementary Corollary~\ref{res:C8}, local continuous differentiability of $F = (F_0, \ldots, F_{K-1})$, and nonsingularity of the active Jacobian hold; suppose also that the bisection tolerance sequence satisfies $\sup_N \delta_N \le \bar{\delta} < \infty$.
Then, for each $\gamma$ and each fixed bounded set $M \subset [0, \infty)^K$, the Monte Carlo estimator satisfies the uniform bound
\begin{equation}
\label{eq:mc_uniform}
\sup_{\mu \in M} |\hat{F}_\gamma(\mu) - F_\gamma(\mu)| = O_p(N^{-1/2}).
\end{equation}
Suppose in addition that SPOT's output sequence $\hat{\mu} = \hat{\mu}_N$ satisfies the complementarity-residual bound
\begin{equation}
\label{eq:mc_residual}
r_N = \max_\gamma \bigl|\min\{\hat{\mu}_\gamma,\ \alpha - \hat{F}_\gamma(\hat{\mu})\}\bigr| = O_p(N^{-1/2}).
\end{equation}
Then
\begin{enumerate}
\item[(a)] $\|\hat{\mu} - \mu^*\| = O_p(N^{-1/2})$;
\item[(b)] $F_\gamma(\hat{\mu}) = \alpha + O_p(N^{-1/2})$ for every $\gamma \in A$;
\item[(c)] ${\rm pr}\{F_\gamma(\hat{\mu}) < \alpha\} \to 1$ for every $\gamma \notin A$.
\end{enumerate}
The conclusion is conditional on the residual bound~\eqref{eq:mc_residual}; it does not assert sweep-level convergence of the empirical recursion, which the population theory of \S\ref{sec:algorithm} does not certify.
\end{theorem}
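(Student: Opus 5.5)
The argument has three stages: a uniform law of large numbers via a VC-class argument, a local inverse-function argument on the complementarity system, and a separate treatment of inactive coordinates.

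\emph{Uniform bound \eqref{eq:mc_uniform}.} For each $\gamma$, the summand in \eqref{eq:fhat_mc} is $f_\mu(\vec X)=I\{l^*(\mu,\vec u(\vec X))\ge\rho(\vec X)\}$ by \eqref{eq:mc_indicator}. The event $\{l^*(\mu,\vec u)\ge r\}$ is a finite Boolean combination of the sets $\{S_l(\mu,\vec u)\ge S_{l'}(\mu,\vec u)\}$ for $0\le l,l'\le K$. Each such set is a half-space in the $\mu$-linear features $(a_i(\vec u),b_{\gamma,i}(\vec u))$ of Lemma~\ref{lem:general_k_coefficients}, because $S_l$ is affine in $\mu$ with data-dependent coefficients. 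The class $\{f_\mu:\mu\in M\}$ is therefore VC with index depending only on $K$, and we intersect it with the finitely many events $\{\rho=r\}$. Standard empirical-process bounds for VC classes of indicators then give $\sup_{\mu}|\hat F_\gamma-F_\gamma|=O_p(N^{-1/2})$; in fact this holds over all of $[0,\infty)^K$. Boundedness of $M$ is not needed here; it enters only through $\bar\delta$ and the localisation below.

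\emph{Consistency.} Let $\Phi(\mu)=(\min\{\mu_\gamma,\alpha-F_\gamma(\mu)\})_\gamma$ be the population complementarity residual. Since $|\min\{a,b\}-\min\{a,c\}|\le|b-c|$, \eqref{eq:mc_uniform} and \eqref{eq:mc_residual} give $\|\Phi(\hat\mu)\|_\infty=O_p(N^{-1/2})$ once $\hat\mu$ lies in a bounded set $M$. To get tightness, I would use the bracketing with $U_{\max}$ and doubling, together with $\sup_N\delta_N\le\bar\delta$. Corollary~\ref{cor:unique_root} and Assumption~\ref{as:assumption5} make $F_\gamma(\mu_\gamma;\mu_{-\gamma})\to0<\alpha$ as $\mu_\gamma\to\infty$, and Theorem~\ref{thm:general_k_monotonicity} makes this uniform over $\mu_{-\gamma}\ge0$. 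Hence each bisection output is $O_p(1)$. Next I need $\Phi(\mu)=0$ to have $\mu^*$ as its only zero near the region of interest. For this I would invoke Theorem~\ref{thm:optimality_general_k} together with the convex-duality characterisation: zeros of $\Phi$ are exactly the KKT points of the convex $C^1$ dual $h$ from Theorem~\ref{thm:algo_convergence}, whose gradient is $\alpha-F$. Nonsingularity plus strict complementarity make $\mu^*$ an isolated, hence unique, minimiser. Continuity of $\Phi$ and compactness then yield $\hat\mu\to\mu^*$ in probability by the usual argmin/zero-set argument.

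\emph{Rate and parts (a)--(c).} Strict complementarity makes $\Phi$ locally smooth near $\mu^*$. For $\gamma\in A$, $\Phi_\gamma=\alpha-F_\gamma$, since $\mu^*_\gamma>0$ and the minimum picks the second argument. For $\gamma\notin A$, $\Phi_\gamma=\mu_\gamma$, since $\alpha-F_\gamma(\mu^*)>0$. The Jacobian of $\Phi$ at $\mu^*$ is therefore block triangular, with blocks $-J$ and the identity, and it is nonsingular by hypothesis. By local $C^1$ of $F$ and the inverse function theorem, $\|\mu-\mu^*\|\le C\|\Phi(\mu)\|$ on a neighbourhood of $\mu^*$. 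Consistency places $\hat\mu$ in this neighbourhood with probability tending to one, which gives (a). Part (b) follows because $F_\gamma$ is locally Lipschitz at $\mu^*$ with $F_\gamma(\mu^*)=\alpha$. Part (c) follows from continuity and the strict slack $F_\gamma(\mu^*)<\alpha$. The local single-crossing condition of Supplementary Corollary~\ref{res:C8} is used to rule out bisection outputs sitting on a flat stretch away from $\mu^*$, so that the residual bound transfers to the population.

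The main obstacle will be the consistency/uniqueness step, namely turning \eqref{eq:mc_residual} into $\hat\mu\to\mu^*$ globally. The local inverse-function argument handles everything once $\hat\mu$ is close to $\mu^*$. I would first try the convex-dual KKT characterisation combined with the isolatedness of $\mu^*$.
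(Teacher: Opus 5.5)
Your proposal is correct and follows essentially the same route as the paper's proof (Supplementary Lemma~\ref{res:C10} and Theorem~\ref{res:C11}(a)). The steps match: a VC/Donsker argument for the uniform bound, then confinement of the iterates to a bounded set via monotonicity and the bracketing with $\bar\delta$, then consistency from the unique zero of the complementarity map $\Phi$ on that compact set, and finally a local error bound from strict complementarity and nonsingularity of $J$.

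One correction: the local single-crossing condition cannot rule out flat stretches away from $\mu^*$, since it is local; that job is done by uniqueness of the zero of $\Phi$. In the paper it only supplies the local Lipschitz bound of Corollary~\ref{res:C8} used for (b), and your local $C^1$ hypothesis on $F$ already gives that bound.
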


Supplementary Theorem~\ref{res:C11} gives the proof and also establishes, under a stricter $o_p(N^{-1/2})$ residual, a centred Gaussian limit for $\sqrt{N}(\hat{\mu}_A - \mu^*_A)$ and exact recovery of the active set $A$ by the thresholded estimator $\{\gamma : \hat{\mu}_\gamma > \tau_N\}$ for any $\tau_N \to 0$ with $\sqrt{N}\,\tau_N \to \infty$.
For each fixed, non-random $\mu$, the estimator \eqref{eq:fhat_mc} is an unbiased average of $N$ Bernoulli indicators, with standard error $\{F_\gamma(\mu)[1 - F_\gamma(\mu)]/N\}^{1/2}$; at a constraint-level point where $F_\gamma(\mu) = \alpha$ this is $\{\alpha(1-\alpha)/N\}^{1/2} \approx 6.9\times 10^{-4}$ for $\alpha = 0.05$, $N = 10^5$ (unbiasedness need not survive adaptive same-batch evaluation at the fitted $\hat{\mu}$).
\begin{corollary}[Conservative reduced-level targeting]
\label{cor:conservative}
Fix $0 < \eta < 1/2$ and let $z_\eta$ be the upper-$\eta$ standard normal quantile. Under the conditions of Theorem~\ref{thm:mc_accuracy}, suppose SPOT targets the reduced level $\alpha_N = \alpha - z_\eta\{\alpha(1-\alpha)/N\}^{1/2}$ and its final output $\hat{\mu}$ attains a reduced-target complementarity residual
\[
r_N^{(\alpha_N)} = \max_\gamma \bigl|\min\{\hat{\mu}_\gamma,\ \alpha_N - \hat{F}_\gamma(\hat{\mu})\}\bigr| = o_p(N^{-1/2}).
\]
Then ${\rm pr}\{F_\gamma(\hat{\mu}) \le \alpha\} \to 1 - \eta$ at each active constraint $\gamma \in A$, and replacing $\eta$ by $\eta/K$ gives simultaneous control,
\begin{equation}
\label{eq:simultaneous}
\liminf_{N \to \infty} {\rm pr}\{F_\gamma(\hat{\mu}) \le \alpha \text{ for all } \gamma \in A\} \ge 1 - \eta.
\end{equation}
If $\tilde{\mu}$ is a nominal-target output satisfying $r_N = O_p(N^{-1/2})$ as in Theorem~\ref{thm:mc_accuracy}, the power cost of the reduced target is
\begin{equation}
\label{eq:power_cost}
\bigl|\Pi_K(\vec D^{\hat{\mu}}) - \Pi_K(\vec D^{\tilde{\mu}})\bigr| = O_p(N^{-1/2}).
\end{equation}
\end{corollary}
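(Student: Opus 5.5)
The plan is to treat the reduced-level run as the Theorem~\ref{thm:mc_accuracy} argument applied with the perturbed level $\alpha_N$ in place of $\alpha$, and then to do a first-order expansion at each active constraint. The key observation is the following. At the dual optimum, strict complementarity and the nonsingular active Jacobian make the map $\alpha \mapsto \mu^*(\alpha)$ locally $C^1$ by the implicit function theorem applied to $F_A(\mu_A, 0_{A^c}) = \alpha \mathbf{1}$. The inactive coordinates stay at zero with slack $\alpha - F_\gamma(\mu^*) > 0$ uniformly nearby. Since $\alpha_N - \alpha = O(N^{-1/2})$, the level-$\alpha_N$ optimum satisfies $\mu^*(\alpha_N) = \mu^* + O(N^{-1/2})$ with the same active set.

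First step. Write the reduced residual condition as a small-perturbation statement about the nominal system. For $\gamma \in A$, the estimate $\hat\mu_\gamma$ stays bounded away from zero with probability tending to one. This follows from part (a)-type consistency: rerunning the proof of Theorem~\ref{thm:mc_accuracy} with target $\alpha_N$ gives $\|\hat\mu - \mu^*(\alpha_N)\| = O_p(N^{-1/2})$. Hence $\hat F_\gamma(\hat\mu) = \alpha_N + o_p(N^{-1/2})$.

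Second step. Decompose
\[
F_\gamma(\hat\mu) - \alpha = \{F_\gamma(\hat\mu) - \hat F_\gamma(\hat\mu)\} + \{\hat F_\gamma(\hat\mu) - \alpha_N\} - z_\eta\{\alpha(1-\alpha)/N\}^{1/2}.
\]
The middle term is $o_p(N^{-1/2})$. For the first term, I would use the stochastic-equicontinuity part of the uniform bound~\eqref{eq:mc_uniform}: the indicator class $\{\vec u \mapsto I\{l^*(\mu,\vec u) \ge \rho\} : \mu \in M\}$ is a VC-type class, and $F_\gamma$ is continuous. Together these give $\sqrt N\{\hat F_\gamma(\hat\mu) - F_\gamma(\hat\mu)\} = \sqrt N\{\hat F_\gamma(\mu^*) - F_\gamma(\mu^*)\} + o_p(1)$, because $\hat\mu \to \mu^*$ and the $L^2$ distance between indicators at $\hat\mu$ and at $\mu^*$ vanishes. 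By the CLT, since $F_\gamma(\mu^*) = \alpha$, the leading term converges to $N(0, \alpha(1-\alpha))$. Therefore
\[
\sqrt N\{F_\gamma(\hat\mu) - \alpha\} \to N(0, \alpha(1-\alpha)) - z_\eta\{\alpha(1-\alpha)\}^{1/2}
\]
in distribution, with the sign of the Gaussian immaterial by symmetry. The probability that this is at most zero tends to $1 - \eta$.

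Simultaneous version. Replacing $\eta$ by $\eta/K$ makes each active marginal probability of failure tend to $\eta/K$. A union bound over $|A| \le K$ constraints then gives~\eqref{eq:simultaneous}. This needs only the marginal limits, not the joint Gaussian limit, although the independence of the batches across configurations would also give a joint limit.

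Power cost. Because $h$ is $C^1$ and $\Pi_K(\vec D^\mu) = \int_Q a\cdot D^\mu$, the power is a Lipschitz function of $\mu$ near $\mu^*$. I would obtain this from the same local continuous differentiability used for $F$: the map $\mu \mapsto \Pi_K(\vec D^\mu)$ equals $h(\mu) - \sum_\gamma \mu_\gamma\{\alpha - F_\gamma(\mu)\}$, and each piece is locally $C^1$. Both $\hat\mu$ and $\tilde\mu$ lie within $O_p(N^{-1/2})$ of $\mu^*$, by part (a) for the nominal run and the first step for the reduced run. Hence $\|\hat\mu - \tilde\mu\| = O_p(N^{-1/2})$, and~\eqref{eq:power_cost} follows.

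Main obstacle. I expect the hardest step to be the stochastic-equicontinuity replacement of $\hat\mu$ by $\mu^*$ inside the empirical process. This is needed because $\hat\mu$ is computed from the same batch. I would get it from the Donsker property of the indicator class, which is finite-dimensionally indexed with boundaries given by finitely many polynomial-type comparisons of the $S_l$. The remaining ingredient is $L^2$-continuity, $\mathbb E|I\{l^*(\mu,\cdot) \ge \rho\} - I\{l^*(\mu^*,\cdot) \ge \rho\}| \to 0$. This follows from Supplementary Lemma~\ref{res:C1} (ties are Lebesgue-null) together with the continuity of $F$ in Corollary~\ref{cor:unique_root}.
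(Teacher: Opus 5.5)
Your argument is correct and follows the paper's skeleton. It establishes root-$N$ consistency of the reduced-level output and reduces the residual condition to $\hat F_\gamma(\hat\mu)=\alpha_N+o_p(N^{-1/2})$ at active $\gamma$. It then swaps $\hat\mu$ for $\mu^*$ by stochastic equicontinuity, applies the CLT at $\mu^*$ and a union bound, and handles the power cost by the triangle inequality through $\mu^*$. Three sub-steps differ.

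\emph{Consistency.} The paper never introduces a level-$\alpha_N$ optimum. It bounds the \emph{nominal} complementarity map at $\hat\mu$,
$\|\Phi(\hat\mu)\|_\infty\le r_N^{(\alpha_N)}+|\alpha_N-\alpha|+\max_\gamma\sup_{\mathcal B}|\hat F_\gamma-F_\gamma|=O_p(N^{-1/2})$.
The compact-separation and error-bound steps of Theorem~\ref{res:C11}(a) then apply verbatim to a fixed map with unique zero $\mu^*$. Your implicit-function detour through $\mu^*(\alpha_N)$ is valid but unnecessary. Taken literally, ``rerunning the proof with target $\alpha_N$'' would also need uniformity over a moving level and global uniqueness of the level-$\alpha_N$ optimum on the box, which the direct bound avoids.

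\emph{$L^2$-continuity for equicontinuity.} The paper uses the Lipschitz symmetric-difference bound behind Corollary~\ref{res:C8}. Your route is the null tie set plus dominated convergence, or equivalently continuity of $F_\gamma$ plus pathwise monotonicity, sandwiching $\mu$ and $\mu^*$ between their componentwise minimum and maximum. It gives only a modulus rather than a rate, but that is all equicontinuity needs, and it does not use single crossing.

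\emph{Power cost.} The paper shows the two induced policies differ on a set of Lebesgue measure $O_p(N^{-1/2})$, via Lemma~\ref{res:C7} and the single-crossing identity, and then uses boundedness of $a_k$. You instead use $\Pi_K(\vec D^\mu)=h(\mu)-\sum_\gamma\mu_\gamma\{\alpha-F_\gamma(\mu)\}$ with $h\in C^1$ and $F$ locally $C^1$ (local Lipschitz would suffice). Your version is shorter and reuses smoothness already assumed, but it relies on $\Pi_K$ being the Lagrangian objective. The paper's policy-closeness bound controls any bounded linear functional of the policy.
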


The proof is given in Supplementary Corollary~\ref{res:C12}.
For the policies computed by SPOT, the maximum empirical family-wise error rates in Table~\ref{tab:scaling} range from $0.050$ to $0.052$ after rounding and are within $0.003$ of $\alpha = 0.05$.

\section{Simulation study}
\label{sec:simulations}

We evaluate the policies computed by SPOT for $K \in \{3, 4, 5, 6, 8, 10, 12\}$, comparing them with Bonferroni, Holm, Hochberg, and Hommel at $\alpha = 0.05$.
All simulations use $N_{\rm eval} = 50{,}000$ Monte Carlo replications for power and FWER evaluation and $N_{\rm opt} = 100{,}000$ samples for the coordinate-descent optimisation.
All reported standard errors are Monte Carlo standard errors, computed as $\hat{\sigma}/\sqrt{N_{\rm eval}}$.

\subsection{The power gap grows with \texorpdfstring{$K$}{K}}
\label{sec:sim_scaling}

The central empirical finding is that the power advantage \emph{increases} with $K$.
Table~\ref{tab:scaling} reports average power $\Pi_K$ and maximum empirical family-wise error rate under the truncated normal model at $\theta = -2.0$.

The gain over Hommel's method grows from $0.083$ (15\%) at $K{=}3$, to $0.169$ (37\%) at $K{=}6$, to $0.276$ (83\%) at $K{=}12$.
While Hommel's $\Pi_K$ drops from $0.555$ to $0.334$ over this range, the power of the policies computed by SPOT remains nearly flat ($0.638$ to $0.610$), reflecting their model-based likelihood weighting with adaptive joint rejection boundaries.

The maximum empirical FWER for the policies computed by SPOT across the settings of Table~\ref{tab:scaling} is $0.052$.
With $N_{\rm eval} = 50{,}000$, the Monte Carlo standard error of an FWER estimate near $0.05$ is $\sqrt{0.05 \times 0.95/50{,}000} \approx 0.001$.
Across the $112$ robustness settings, the largest reported maximum FWER is $0.053$; across the level-sensitivity runs, the largest excess over nominal is $0.003$. These deviations are on the $N^{-1/2}$ Monte Carlo scale. The $0.053$ maximum occurs in an out-of-model Student-$t$ setting, so Theorem~\ref{thm:mc_accuracy} does not apply.

\begin{table}[htbp]
\caption{Scaling with $K$ under the truncated normal alternative ($\theta=-2.0$, $\alpha=0.05$; $N_{\rm eval}=50{,}000$ per configuration): SPOT sustains average power $\Pi_K$ while standard procedures steadily lose power, with maximum empirical FWER near $0.05$. Parentheses give pointwise binomial 95\% confidence intervals for its FWER at the empirically maximising configuration (unadjusted for this selection). Hochberg is omitted; its power lies between Holm's and Hommel's.}
\label{tab:scaling}
\centering
\begin{tabular}{@{}llrrrrr@{}}
\toprule
$K$ & Metric & Bonferroni & Holm & Hommel & SPOT & 95\% CI \\
\midrule
3 & $\Pi_3$ & $0.449$ & $0.529$ & $0.555$ & $\mathbf{0.638}$ & \\
  & max FWER & $0.048$ & $0.048$ & $0.049$ & $0.051$ & $(0.049,0.053)$ \\
4 & $\Pi_4$ & $0.405$ & $0.484$ & $0.518$ & $\mathbf{0.633}$ & \\
  & max FWER & $0.048$ & $0.048$ & $0.049$ & $0.050$ & $(0.049,0.052)$ \\
5 & $\Pi_5$ & $0.372$ & $0.446$ & $0.483$ & $\mathbf{0.630}$ & \\
  & max FWER & $0.049$ & $0.049$ & $0.049$ & $0.052$ & $(0.050,0.054)$ \\
6 & $\Pi_6$ & $0.347$ & $0.415$ & $0.454$ & $\mathbf{0.623}$ & \\
  & max FWER & $0.049$ & $0.049$ & $0.050$ & $0.052$ & $(0.051,0.054)$ \\
8 & $\Pi_8$ & $0.309$ & $0.366$ & $0.404$ & $\mathbf{0.617}$ & \\
  & max FWER & $0.049$ & $0.049$ & $0.049$ & $0.050$ & $(0.048,0.051)$ \\
10 & $\Pi_{10}$ & $0.283$ & $0.330$ & $0.365$ & $\mathbf{0.614}$ & \\
   & max FWER & $0.048$ & $0.048$ & $0.048$ & $0.050$ & $(0.048,0.052)$ \\
12 & $\Pi_{12}$ & $0.262$ & $0.303$ & $0.334$ & $\mathbf{0.610}$ & \\
   & max FWER & $0.047$ & $0.047$ & $0.048$ & $0.051$ & $(0.049,0.053)$ \\
\bottomrule
\end{tabular}
\end{table}
\FloatBarrier

Figure~\ref{fig:scaling} displays the power advantage and computation time as functions of~$K$.

\begin{figure}[htbp]
\centering
\includegraphics[width=30pc]{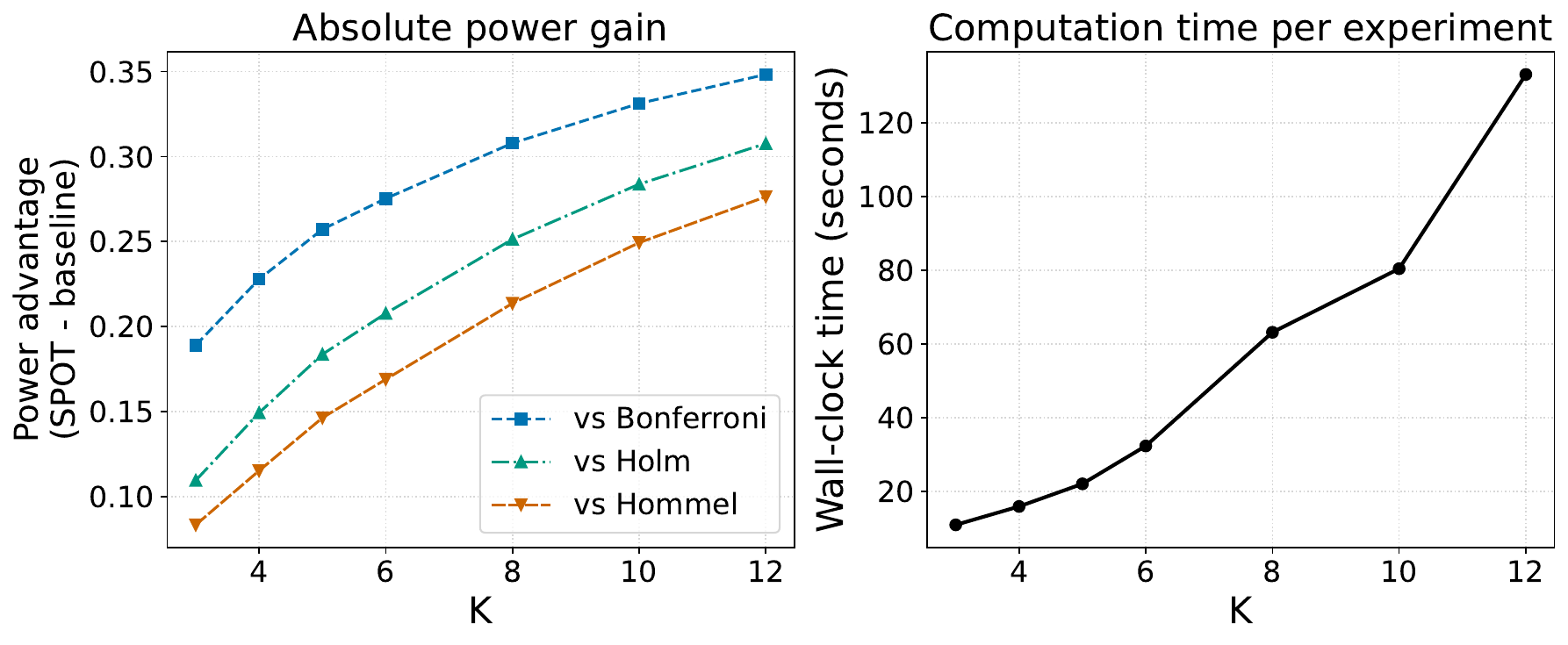}
\caption{Absolute power gain over standard procedures at $\theta=-2.0$ (left) and total wall-clock time per experiment (right) as functions of $K$.
The timing, which is consistent with polynomial scaling, was measured on a single CPU core and includes both the coordinate-descent optimisation ($N_{\rm opt} = 100{,}000$ samples) and the evaluation phase ($N_{\rm eval} = 50{,}000$ samples) of SPOT and the baseline procedures.}
\label{fig:scaling}
\alttext{Left: The average power advantage of SPOT grows with the number of hypotheses. Right: computation time increases approximately polynomially.}
\end{figure}

\subsection{Why the power gap grows}
\label{sec:why_gap_grows}

Standard step-up and step-down procedures compare each ordered $p$-value $u_{(k)}$ to a fixed critical constant depending only on $k$ and $K$ (Holm: $\alpha/(K-k+1)$; Hommel: the most liberal of several such comparisons).
For an exchangeable problem the ordered $p$-values carry all the information available to a symmetric procedure; the distinction is what each does with them.
The baselines use prespecified, distribution-free constants, whereas the optimal decision $l^*(\mu^*, \vec{u})$ maximises the cumulative net benefits $\sum_{i=1}^l R_i(\mu^*, \vec{u})$, in which every $R_i$ weighs all $K$ likelihood ratios $g(u_k)$ through the coefficients $b_{\gamma,k}$, producing data-adaptive joint rejection boundaries.
As $K$ grows the scope for this joint weighting widens, and with it the power gap: in the settings studied, several very small $p$-values can provide enough joint-likelihood evidence to support additional borderline rejections. A step-down procedure such as Holm also rejects when the leading $p$-values are tiny, but its critical constants $\alpha/(K - k + 1)$ are fixed in advance and do not adapt to how small those $p$-values are; it is this adaptation to the joint likelihood magnitudes, not the sequential processing, that the baselines lack.

\subsection{Robustness across distributional settings}
\label{sec:sim_robustness}

We examine four models spanning different tail behaviours and violations of the boundedness assumption: a \emph{truncated normal} ($\theta \in \{-1, -1.5, -2, -2.5, -3, -3.5, -4\}$), the only model satisfying all of Assumptions~\ref{as:assumption3}--\ref{as:assumption5}; a two-sided \emph{mixture normal} and a \emph{beta} ($\theta < 1$) alternative, whose densities are unbounded as $u \to 0$ and so violate Assumption~\ref{as:assumption5}; and, as an out-of-model stress test, a heavy-tailed \emph{Student-$t$} alternative whose likelihood ratio is U-shaped, violating the monotone-likelihood-ratio condition as well.
Full model definitions, parameter grids, and the numerical-truncation details are given in Supplementary Section~\ref{sec:supp_models}.

Across all four models and all $K$ tested, SPOT attains the highest average power $\Pi_K$ of all methods at every signal strength; the gap over the baselines widens from $K = 3$ to $K = 6$ almost everywhere, the sole exception being the strongest truncated normal signal, where both gaps fall below $0.001$.
Figures~\ref{fig:power_curves_K3} and~\ref{fig:power_curves_K6} plot the power curves at $K = 3$ and $K = 6$, with the average power $\Pi_K$ in the left column and the any-discovery power $\Pi_{\rm any}$ in the right: in every family SPOT (solid line) attains the highest average power across the signal range, and comparing the two figures shows the advantage widening as $K$ grows. Hommel is the strongest baseline for average power $\Pi_K$ at all $112$ settings, yet SPOT leads it throughout; the intermediate cases $K = 4$ and $K = 5$ are in Supplementary Figures~\ref{fig:supp_K4} and~\ref{fig:supp_K5}.
The any-discovery power $\Pi_{\rm any}$ (right column) is not the optimised objective: for the out-of-model Student-$t$ alternative at larger degrees of freedom the standard procedures can slightly exceed SPOT on this secondary metric, while in the three monotone-likelihood-ratio product-model families SPOT leads on both.
Convergence diagnostics for all $112$ runs are reported in \S\ref{sec:convergence}.

\begin{figure}[tbp]
\centering
\includegraphics[width=0.972\linewidth]{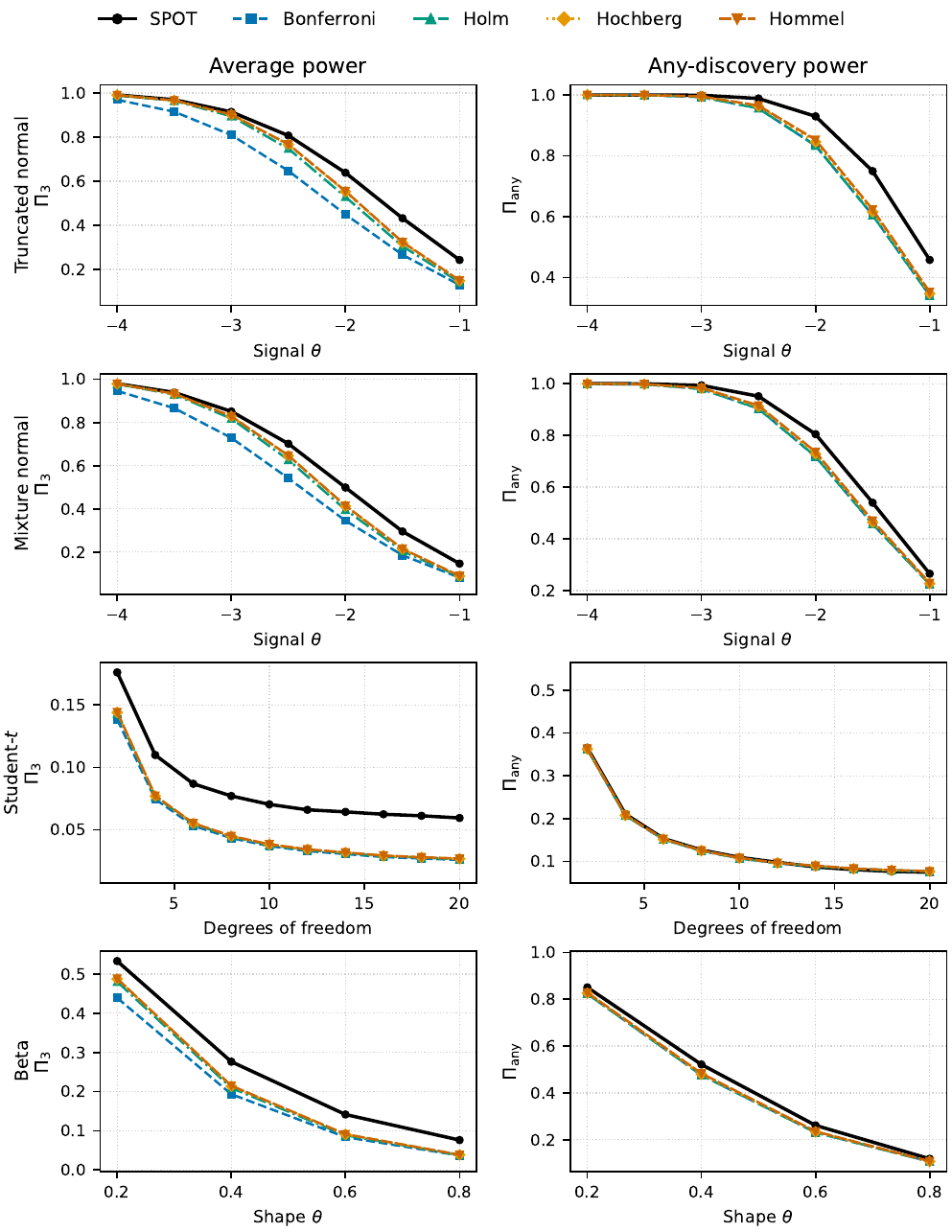}
\caption{Power at $K = 3$ and $\alpha = 0.05$ across four alternatives: optimised average power $\Pi_3$ (left) and any-discovery power $\Pi_{\rm any}$ (right). SPOT leads on $\Pi_3$ throughout.}
\label{fig:power_curves_K3}
\alttext{Left panels: SPOT leads average power throughout. Right panels: SPOT leads except for near ties under Student t.}
\end{figure}

\begin{figure}[tbp]
\centering
\includegraphics[width=0.972\linewidth]{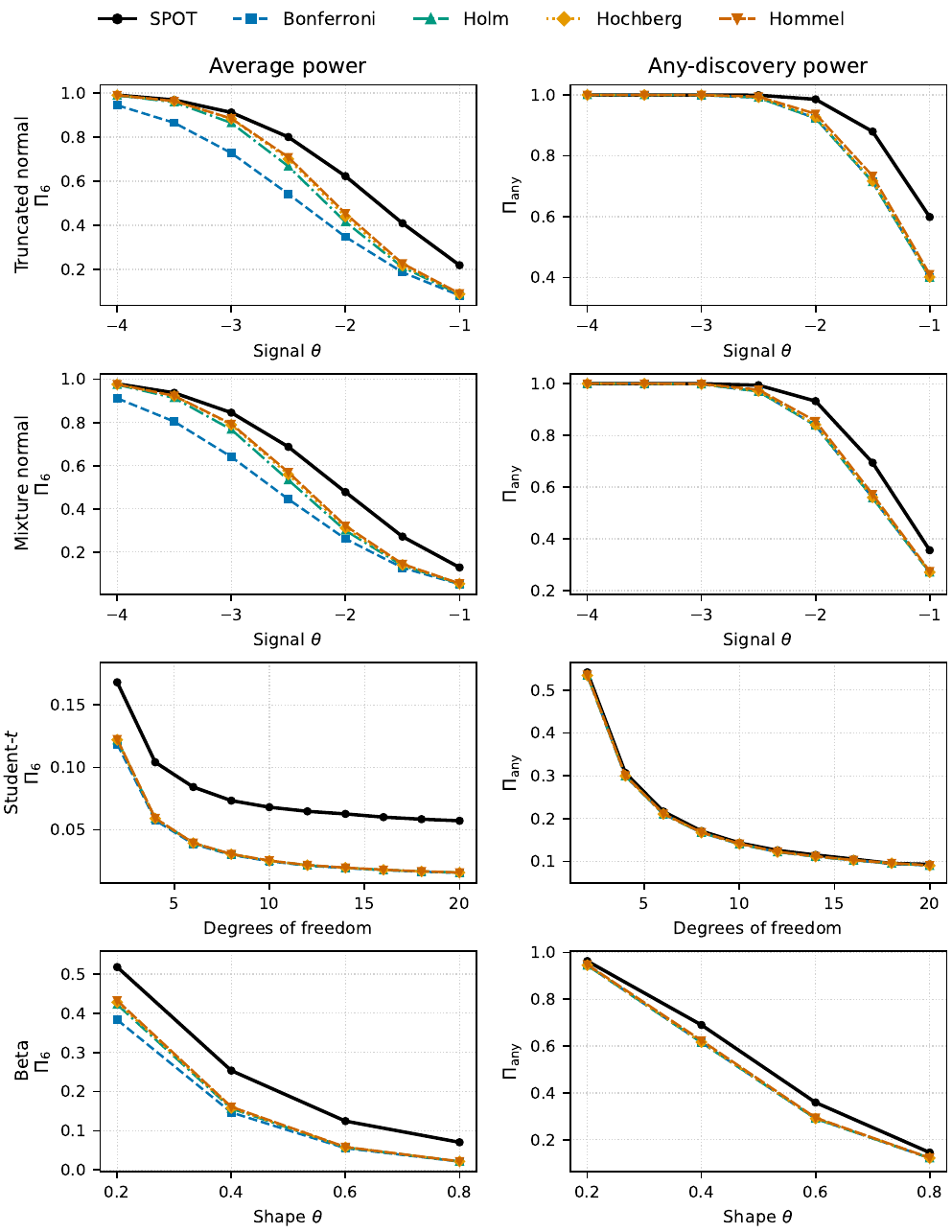}
\caption{Power at $K = 6$, with the layout and scales of Figure~\ref{fig:power_curves_K3}. SPOT leads average power throughout, with a larger margin over Hommel except in the saturated truncated-normal setting ($\theta = -4$).}
\label{fig:power_curves_K6}
\alttext{Left panels: SPOT leads average power throughout. Right panels: baselines sometimes lead under Student t.}
\end{figure}
\FloatBarrier

Although the mixture, beta, and Student-$t$ models fall partly or wholly outside the stated conditions, they show favourable numerical behaviour beyond the proved regime; these experiments do not extend the convergence or optimality guarantees, and the Student-$t$ model additionally lacks the likelihood-ratio-ordered optimality interpretation.

\subsection{Convergence diagnostics}
\label{sec:convergence}

The residual criterion was met within four to five sweeps for every scaling run with $K \le 12$, with $\|\mu^{(t)} - \mu^{(t-1)}\|_2$ decreasing geometrically at empirical successive-iterate ratios of about $0.1$ to $0.5$ after the initial transient, consistent with the linear rate of Proposition~\ref{prop:linear_rate}.
Across the full $112$-setting grid of \S\ref{sec:sim_robustness}, $92$ runs met the one-standard-error residual threshold (median $5$ sweeps among these $92$; $7$ sweeps across all $112$); the remaining $20$, all out-of-model Student-$t$ settings, terminated at residuals of one to three Monte Carlo standard errors.
These are finite-sample diagnostics, not a verification of Theorem~\ref{thm:mc_accuracy}; its asymptotic and local-regularity conditions are not checked here.
Full convergence trajectories (Supplementary Figure~\ref{fig:supp_convergence}) and timings and outer-iteration counts (Supplementary Table~\ref{tab:supp_timing}) are given in Supplementary Section~\ref{sec:supp_impl}; per-setting diagnostics are stored with the released results.

\subsection{Sensitivity analyses}
\label{sec:sensitivity_N}

The power advantage remains substantial across the significance levels examined (Table~\ref{tab:sensitivity_alpha}) and stable across the Monte Carlo sample sizes considered.
Repeating the $K = 6$ truncated normal experiment at $\alpha \in \{0.01, 0.05, 0.10\}$ leaves the relative gain over Hommel's method substantial at every level (78\%, 37\%, and 21\% respectively; Table~\ref{tab:sensitivity_alpha}), largest at the most stringent level, where the joint allocation of the error budget matters most.
Rerunning SPOT with $N_{\rm opt} \in \{50{,}000,\; 100{,}000,\; 200{,}000\}$ changes $\Pi_6$ by less than $0.003$ and the maximum empirical FWER by at most $0.001$, supporting the adequacy of $N_{\rm opt} = 100{,}000$ in this $K = 6$ setting.

\begin{table}[!htbp]
\caption{Sensitivity to the significance level $\alpha$ for the truncated normal alternative at $\theta = -2.0$, $K = 6$ (average power $\Pi_6$). The relative power gain over Hommel's method remains substantial across the levels examined and is largest at the most stringent one. The final column reports SPOT's maximum empirical family-wise error rate over the $K$ null configurations, which stays within $0.003$ of the nominal level $\alpha$ throughout.}
\label{tab:sensitivity_alpha}
\centering
\begin{tabular}{lrrrrr}
\toprule
$\alpha$ & Bonferroni & Holm & Hommel & SPOT & SPOT: max FWER \\
\midrule
$0.01$ & $0.176$ & $0.192$ & $0.202$ & $\mathbf{0.359}$ & $0.011$ \\
$0.05$ & $0.347$ & $0.415$ & $0.454$ & $\mathbf{0.623}$ & $0.052$ \\
$0.10$ & $0.449$ & $0.563$ & $0.622$ & $\mathbf{0.750}$ & $0.102$ \\
\bottomrule
\end{tabular}
\end{table}
\FloatBarrier

\section{Applications}
\label{sec:realdata}

We present three applications to published data at $K = 5$, $6$, and $21$ (a $K = 3$ application is in \citealp{DubeyHuo2024K3}), illustrating the changes in rejection decisions and the role of exchangeability.
Throughout, two-sided $p$-values are modelled with the mixture-normal density of \S\ref{sec:sim_robustness}, $g(u) = \exp(-\theta^2/2)\cosh\{\theta\,\Phi^{-1}(1-u/2)\}$, at a common standardised effect $\theta$ varied in sensitivity analyses.

\subsection{Replicability of social science experiments}
\label{sec:camerer}

\citet{Camerer18} replicated $21$ experimental studies originally published in \emph{Nature} and \emph{Science} between 2010 and 2015, in a two-stage design: Stage~1 was powered at $90\%$ to detect $75\%$ of the original effect size in a two-sided test, and studies that did not replicate in Stage~1 (that is, that failed to attain two-sided $p < 0.05$ in the original direction) proceeded to a larger second stage.
We run SPOT with $K = 21$ jointly on the Stage-1 replication $p$-values of \emph{all} $21$ studies, as reported in Supplementary Table~3 of \citet{Camerer18}.
The family is defined outcome-independently: it comprises every replication conducted in Stage~1, so no hypothesis enters the analysis because of its observed outcome.
Six $p$-values reported as $<0.001$ are set to $0.001$; the rejection sets reported below are unchanged when these six values are instead set to $10^{-10}$, for every assumed effect considered (a check of the two endpoint substitutions, not of the entire interval, since the optimal rejection regions need not be monotone in an individual $p$-value).

\medskip\noindent\textit{Study design.}
Each replication is a two-sided test of the original finding, treated as valid and continuously calibrated under its null (so the null $p$-value is uniform); we model the alternative at a common effect $\theta = 2.0$, varied below.
Exchangeability is an idealisation: the studies share the protocol but not their true effects, so the rejections are model-based decisions, not study-specific truth claims.

\medskip\noindent\textit{Results.}
The rejection decisions are summarised in Table~\ref{tab:camerer}, with the full study-level breakdown in Supplementary Table~\ref{tab:supp_camerer}.
Bonferroni ($\alpha/21 \approx 0.0024$) rejects the six studies with $p = 0.001$; Holm, Hochberg, and Hommel additionally reject the Derex et al.\ replication ($p = 0.003$), for seven rejections.
The policy returned by SPOT rejects nine: the seven baseline rejections plus the Karpicke--Blunt ($p = 0.006$) and Nishi et al.\ ($p = 0.011$) replications.
The two additional rejections reflect the strength-borrowing mechanism of \S\ref{sec:why_gap_grows}: the six $p$-values at $0.001$ contribute overwhelming evidence through the joint likelihood, offsetting the penalty from the moderate $p$-values.
At the reported effect $\theta = 2.0$, the twelve studies with $p \ge 0.022$ remain unrejected, the borrowed strength being insufficient to lift $p$-values this moderate.
The exact population procedure is designed for strong family-wise error control under the working exchangeable product model. Because the mixture-normal alternative used here lies outside the regularity conditions of Theorem~\ref{thm:mc_accuracy} and its local conditions are not verified for this $g$, the reported finite-Monte-Carlo decisions should be interpreted as model-based numerical decisions rather than as theorem-certified family-wise error control.

\medskip\noindent\textit{Sensitivity to the assumed effect.}
As the assumed common effect varies over $\theta \in \{1.5, 2.0, 2.5, 3.0\}$, the rejection count decreases through $12, 9, 8, 8$ (Table~\ref{tab:camerer}; Supplementary Table~\ref{tab:supp_camerer_sensitivity}), since a larger assumed effect concentrates $g$ near $u = 0$ so that moderate $p$-values carry less evidence; at $\theta = 1.5$ the twelve rejections coincide exactly with the twelve studies that replicated individually in Stage~1 of \citet{Camerer18}.
The dependence of the decisions on the assumed $g$ argues for reporting such a sensitivity analysis whenever SPOT is used in practice.

\begin{table}[!htbp]
\caption{Distilled rejection decisions for the $21$ Stage-1 replications of \citet{Camerer18} ($K = 21$, $\alpha = 0.05$), grouped by $p$-value. R, rejection; $\cdot$, no rejection; $\dagger$, reported as ${<}0.001$ and set to $0.001$. Holm, Hochberg, and Hommel coincide. The last four columns give SPOT under the two-sided mixture-normal alternative at the assumed common effect $\theta$; the reported effect is $\theta = 2.0$. Full study-level decisions and effect-size sensitivity are in Supplementary Tables~\ref{tab:supp_camerer} and~\ref{tab:supp_camerer_sensitivity}.}
\label{tab:camerer}
\centering
\setlength{\tabcolsep}{5pt}\renewcommand{\arraystretch}{1.1}
\begin{tabular}{@{}lccccccc@{}}
\toprule
 & & & & \multicolumn{4}{c}{SPOT, $\theta =$} \\
\cmidrule(l){5-8}
Studies & $p$-value & Bonf. & Holm--Hommel & $1.5$ & $2$ & $2.5$ & $3$ \\
\midrule
Six studies & $0.001\,\dagger$ & R & R & R & R & R & R \\
Derex et al.\ (2013) & $0.003$ & $\cdot$ & R & R & R & R & R \\
Karpicke and Blunt (2011) & $0.006$ & $\cdot$ & $\cdot$ & R & R & R & R \\
Nishi et al.\ (2015) & $0.011$ & $\cdot$ & $\cdot$ & R & R & $\cdot$ & $\cdot$ \\
Three studies & $0.022$--$0.025$ & $\cdot$ & $\cdot$ & R & $\cdot$ & $\cdot$ & $\cdot$ \\
Nine studies & $\ge 0.089$ & $\cdot$ & $\cdot$ & $\cdot$ & $\cdot$ & $\cdot$ & $\cdot$ \\
\addlinespace
Total rejected & & $6$ & $7$ & $12$ & $9$ & $8$ & $8$ \\
\bottomrule
\end{tabular}
\end{table}
\FloatBarrier

\subsection{Boundary cases: SPRINT and psychology replications}
\label{sec:sprint}\label{sec:osc}

Two further applications illustrate the boundary of the method's usefulness.
In the SPRINT blood-pressure trial \citep{SPRINT15} ($K = 5$ cardiovascular endpoints), where the endpoints are measured on the same participants and have heterogeneous effects (so the product model's independence and common-effect assumptions fail), the policy returned by SPOT rejects the same two endpoints (heart failure and cardiovascular death, $p = 0.002$ and $0.005$) as every standard procedure.
In this case the null-like endpoints (acute coronary syndrome at $p = 0.99$, stroke at $p = 0.50$) drive the cumulative net benefit negative, so the borderline myocardial-infarction endpoint ($p = 0.19$) is not lifted and the strength-borrowing mechanism offers no advantage. Because the product-model assumptions fail here, no family-wise error guarantee is claimed for this application; the agreement of the observed rejection sets is descriptive.

In a six-study subset of the Reproducibility Project: Psychology \citep{OSC15} ($K = 6$), where the replication $p$-values are uniformly large (smallest $0.079$, far from the Bonferroni threshold $0.05/6 \approx 0.008$), the standard methods reject no hypotheses and SPOT likewise returns no rejections, because there is no strong signal from which the likelihood-based rule can borrow strength.
The contrast with the Camerer application, where six overwhelming $p$-values anchor the joint likelihood and lift moderate $p$-values into rejection, is instructive: joint modelling amplifies collective evidence but does not manufacture individual signal.
Endpoint lists, the hierarchical SPRINT analysis (which recovers the two Group-A rejections while confining exchangeability to more plausible groups), and the OSC subset construction are given in Supplementary Section~\ref{sec:supp_apps}.

\section{Discussion}
\label{sec:conclusion}

\subsection{Summary of contributions}

We have shown that the dual constraint coefficients of \citet{RHPA22} admit a closed form through elementary symmetric polynomials of the likelihood ratios (Lemma~\ref{lem:general_k_coefficients}), whose non-negativity yields global monotonicity of all target functions for arbitrary $K$, without case analysis (Theorem~\ref{thm:general_k_monotonicity}).
SPOT makes the optimal dual multipliers computationally accessible for arbitrary~$K$: its exact population recursion drives the dual objective to its optimum with every limit point optimal, needing no contraction or strong-convexity hypothesis (Theorem~\ref{thm:algo_convergence}); its exact population iterates converge linearly near the optimum under the local non-degeneracy conditions of Proposition~\ref{prop:linear_rate}; and its Monte Carlo output is $\sqrt{N}$-consistent under the conditions of Theorem~\ref{thm:mc_accuracy} and its achieved $O_p(N^{-1/2})$ residual (\S\ref{sec:mc_error}).
The power gains are substantial and, in the truncated-normal scaling experiment, grow with~$K$ when the exchangeable product model of \S\ref{sec:pvalue} holds, reaching an 83\% gain over Hommel at $K{=}12$ (Table~\ref{tab:scaling}) at zero additional cost in data collection; when it is misspecified, as in the SPRINT application (\S\ref{sec:sprint}), the fitted rejection set happens to coincide with those of the standard procedures, and no family-wise error guarantee is claimed.
SPOT therefore makes model-assisted optimal family-wise error control computationally accessible: any policy induced by a limit point of the exact population recursion is optimal within the exchangeable product model, while its Monte Carlo output additionally requires the local conditions and achieved residual of Theorem~\ref{thm:mc_accuracy}.

\subsection{Limitations}

The main restriction is the exchangeable product model of Section~\ref{sec:pvalue}: the $p$-values are assumed mutually independent, all hypotheses share the same null distribution and the same alternative density $g$, and $g$ is non-increasing, so that ordering by $p$-value coincides with ordering by likelihood ratio. This independence and common-$g$ structure of the product model (\S\ref{sec:pvalue}), beyond the exchangeability of Assumption~\ref{as:assumption1}, is what delivers the closed-form coefficients of Lemma~\ref{lem:general_k_coefficients}.
This can be plausible for prespecified, non-overlapping subgroups with comparable designs, but excludes settings where hypotheses have different effect sizes or different sample sizes.
The theory treats $g$ and its parameters as fixed and correctly specified. If $g$ is estimated, even from independent training data or by sample splitting, additional robustness or asymptotic analysis is needed to account for estimation error and preserve the stated guarantees; reuse of the testing $p$-values introduces further dependence.

Convergence of the exact population recursion's objective is global under Assumptions~\ref{as:assumption3}--\ref{as:assumption5}, needing no contraction, strong-convexity, Lipschitz-gradient, or active-Jacobian-nonsingularity hypothesis (Theorem~\ref{thm:algo_convergence}); the \emph{linear rate} of Proposition~\ref{prop:linear_rate} additionally requires strict complementarity, continuous differentiability of the target functions, and a nonsingular active Jacobian at $\mu^*$, standard non-degeneracy conditions we do not verify a priori but that are consistent with the geometric decrease seen in the truncated-normal scaling runs of Supplementary Figure~\ref{fig:supp_convergence}.
Establishing them under primitive conditions on $g$ would strengthen the rate guarantee.

Finally, the Monte Carlo evaluation introduces sampling error. Under the conditions of Theorem~\ref{thm:mc_accuracy}, an empirical output whose achieved complementarity residual is $O_p(N^{-1/2})$ is $\sqrt{N}$-consistent for $\mu^*$, and its realised FWER at active constraints is $\alpha + O_p(N^{-1/2})$. Under the stricter residual conditions of Corollary~\ref{cor:conservative}, reduced-level targeting gives marginal control with asymptotic probability $1-\eta$, or simultaneous control with asymptotic probability at least $1-\eta$ after replacing $\eta$ by $\eta/K$, at an $O_p(N^{-1/2})$ power cost (\S\ref{sec:mc_error}).

\subsection{Extensions}

Several directions for future work are natural.

\emph{Relaxing exchangeability.}
A hierarchical approach \citep{RHPA22} could extend applicability: partition hypotheses into groups that are each exchangeable, with independent $p$-values and a common non-increasing alternative density (so that the within-group product model of \S\ref{sec:pvalue} holds), apply the optimal policy within each at level $\alpha/G$, and aggregate; this preserves the within-group guarantees while accommodating heterogeneity across groups, the Bonferroni loss partially offset by the within-group gain (\S\ref{sec:sprint}).
A fully non-exchangeable extension would need coefficients depending on the hypotheses' identities; the elementary symmetric polynomial structure would not carry over directly, though the monotonicity principle might.

\emph{Relationship to false discovery rate control.}
The family-wise error rate and the false discovery rate \citep{BH95} control different quantities (${\rm pr}(V > 0) \le \alpha$ versus ${\rm E}\{V/\max(R, 1)\} \le q$) and suit different regimes: family-wise control when even a single false rejection is costly and $K$ is moderate \citep{Bretz09, Camerer18, list16}, false-discovery-rate control for large-$K$ screening where individual false positives are tolerable.
The choice is orthogonal to our contribution: within the exchangeable product model, SPOT makes computationally accessible the procedure that maximises the specified average-power criterion under strong family-wise error control, using model-based likelihood weighting that produces the 15--83\% gains in the truncated-normal scaling experiment of Section~\ref{sec:sim_scaling}.
The Benjamini--Hochberg procedure, by contrast, adapts through a data-dependent rejection rank and costs only $O(K\log K)$, needing only valid (super-uniform) true-null $p$-values under mutual independence of the full $p$-value vector, or the appropriate positive-regression-dependence (PRDS) condition, whereas our method requires the exchangeable product model with a common non-increasing $g$ (Supplementary Table~\ref{tab:supp_camerer_sensitivity} shows the decisions can depend on $g$).
The linear structure of \eqref{eq:power_general} suggests analogous dual formulations for other error metrics, including $k$-family-wise error rate \citep{romano}, false discovery rate, and weighted power objectives; whether the elementary symmetric polynomial structure and its monotonicity carry over is an open question.

\emph{Large $K$.}
While our simulations extend to $K{=}12$ and the largest application to $K{=}21$, the theory applies to arbitrary~$K$.
Here ``general-$K$'' refers to arbitrary-$K$ algebra and polynomial-time computation in the moderate-$K$ regime where strong family-wise control is typically relevant, not to the many-thousands-of-hypotheses scale of false-discovery-rate screening.
For very large $K$, two computational challenges arise: the elementary symmetric polynomials may overflow when the $g$-values span many orders of magnitude (log-domain arithmetic addresses this), and the per-iteration cost ($O(K^3 N \log(U_{\rm final}/\delta))$ with cached coefficient tensors, or $O(K^4 N + K^3 N \log(U_{\rm final}/\delta))$ when they must be recomputed each sweep) may become prohibitive (saddlepoint or importance-sampling approximations could reduce the variance or the required Monte Carlo size).

\section*{Supplementary material}
The supplementary material accompanying this article contains all proofs and the Monte Carlo analysis, the implementation and convergence diagnostics, the simulation and application details, and code documentation (Sections~\ref{sec:supp_proofs}--\ref{sec:supp_code}).

\section*{Data availability}
All data analysed are published summary statistics from the cited sources \citep{Camerer18, SPRINT15, OSC15} (the Reproducibility Project data file is at \url{https://osf.io/fgjvw/}). The code, data, and supplementary materials reproducing every result accompany this submission, and will also be made publicly available in a GitHub repository, archived with a digital object identifier, before acceptance, in accordance with the Series~B code policy.

\section*{Declaration of the use of generative AI and AI-assisted technologies}
We used generative AI tools (ChatGPT, Google Gemini, and Claude) for language editing and code formatting support only. All data, results and mathematical derivations are the authors' own work.

\section*{Funding}
Dubey acknowledges partial support from the Stewart Topper Fellowship at Georgia Tech.
Huo was partially supported by the National Science Foundation under Grant~2229876, the A.~Russell Chandler III Professorship, and the NIH-sponsored Georgia Clinical and Translational Science Alliance.

\section*{Competing interests}
The authors declare no competing interests.

\bibliographystyle{abbrvnat}
\bibliography{cite}

\clearpage

\setcounter{section}{0}
\setcounter{subsection}{0}
\setcounter{subsubsection}{0}
\setcounter{equation}{0}
\setcounter{table}{0}
\setcounter{figure}{0}
\renewcommand{\thesection}{S\arabic{section}}
\renewcommand{\thetable}{S\arabic{table}}
\renewcommand{\thefigure}{S\arabic{figure}}
\renewcommand{\theequation}{\arabic{equation}}
\renewcommand{\theHsection}{S.\arabic{section}}
\renewcommand{\theHsubsection}{\theHsection.\arabic{subsection}}
\renewcommand{\theHsubsubsection}{\theHsubsection.\arabic{subsubsection}}
\renewcommand{\theHtable}{S.\arabic{table}}
\renewcommand{\theHfigure}{S.\arabic{figure}}
\renewcommand{\theHequation}{S.\arabic{equation}}

\arxivrestoretitlecommands
\title{\Large Supplementary Material for:\\
Optimal multiple testing under family-wise error control:\\
elementary symmetric polynomials and a scalable algorithm}
\author{Prasanjit Dubey \quad and \quad Xiaoming Huo\\[4pt]
\normalsize H.~Milton Stewart School of Industrial and Systems Engineering,\\
\normalsize Georgia Institute of Technology, Atlanta, GA 30332, USA}
\date{}
\maketitle

This supplement contains proofs of all theoretical results stated in the main paper (Section~\ref{sec:supp_proofs}), the algorithm implementation, computational complexity, and convergence diagnostics (Section~\ref{sec:supp_impl}), the simulation models and scope of assumptions (Section~\ref{sec:supp_models}), sensitivity analyses (Section~\ref{sec:supp_sensitivity}), additional power curves for $K = 3, \ldots, 6$ (Section~\ref{sec:supp_power}), application details (Section~\ref{sec:supp_apps}), and documentation of the simulation code (Section~\ref{sec:supp_code}).
The code reproducing all results accompanies this submission and will be made publicly available in a GitHub repository, archived with a digital object identifier, before acceptance.
We use the notation of the main paper throughout.
Throughout, SPOT denotes the bisection coordinate-descent algorithm in Algorithm~\ref{alg:general_k} of the main paper.
In particular, $\mu_{-\gamma}$ denotes the vector $\mu$ with the $\gamma$th component removed.
Throughout, the significance level is fixed with $0 < \alpha < 1$, and the number of hypotheses $K$ is held fixed as the Monte Carlo sample size $N \to \infty$.

\section{Proofs}
\label{sec:supp_proofs}

\subsection{Proof of \texorpdfstring{Lemma~\ref{lem:general_k_coefficients}}{Lemma 1}}

\begin{lemmaS}{1}[Coefficient characterisation; restated]
\label{res:lemma1}
Under the product $p$-value model of Section~\ref{sec:pvalue} (Assumptions~\ref{as:assumption1}--\ref{as:assumption2}), the family-wise error rate constraint coefficients are
\[
b_{\gamma,k}(\vec{u}) = \gamma!\,(K{-}\gamma)!\; \prod_{j=1}^{k-1} g(u_j) \;\cdot\; e_{\gamma-k+1}\bigl\{g(u_{k+1}), \ldots, g(u_K)\bigr\},
\]
where $b_{\gamma,k} = 0$ whenever $\gamma < k{-}1$ or $\gamma{-}k{+}1 > K{-}k$.
The power coefficients are $a_k(\vec{u}) = (K{-}1)!\prod_{j=1}^K g(u_j)$, identical for all~$k$.
\end{lemmaS}

\begin{proof}
We derive $b_{\gamma,k}(\vec{u})$ by decomposing the family-wise error rate under configuration $\vec{h}_\gamma$ according to the role of each hypothesis.

Under $\vec{h}_\gamma$, there are $\gamma$ alternatives and $K - \gamma$ nulls.
The coefficient $b_{\gamma,k}(\vec{u})$ aggregates the contribution to the family-wise error rate of rejecting hypothesis $k$ (in the sorted order), across the configurations in which $k$ is the smallest-indexed null hypothesis in the sorted order, the null whose rejection triggers the family-wise error.

Fix a configuration where $k$ is the smallest-indexed null hypothesis in the sorted order.
This requires:
\begin{enumerate}[label=(\alph*)]
\item All indices $1, \ldots, k-1$ (those with smaller $p$-values) must be alternatives.
This is possible only if $k - 1 \le \gamma$, i.e., $\gamma \ge k - 1$.
Each such alternative contributes a factor of $g(u_j)$ to the joint density on $Q$.

\item Among the $K - k$ indices $k+1, \ldots, K$ (those with larger $p$-values), exactly $\gamma - (k-1) = \gamma - k + 1$ must be alternatives.
This is possible only if $0 \le \gamma - k + 1 \le K - k$.
\end{enumerate}

When $\gamma < k - 1$ or $\gamma - k + 1 > K - k$, no valid assignment exists and $b_{\gamma,k} = 0$.
Otherwise, the contribution from the $k - 1$ alternatives below $k$ is
\[
\prod_{j=1}^{k-1} g(u_j).
\]
The contribution from the $K - k$ indices above $k$ requires summing over all $\binom{K-k}{\gamma-k+1}$ ways to choose which $\gamma - k + 1$ of them are alternatives.
For each such selection $S \subseteq \{k+1, \ldots, K\}$ with $|S| = \gamma - k + 1$, the alternatives in $S$ contribute $\prod_{j \in S} g(u_j)$ while the nulls contribute $1$ (since null $p$-values are uniform).
Summing over all selections gives
\[
\sum_{\substack{S \subseteq \{k+1, \ldots, K\} \\ |S| = \gamma-k+1}} \prod_{j \in S} g(u_j) = e_{\gamma-k+1}\bigl\{g(u_{k+1}), \ldots, g(u_K)\bigr\}.
\]
The prefactor $\gamma!(K-\gamma)!$ accounts for the number of permutations of the $\gamma$ alternatives among themselves and the $K - \gamma$ nulls among themselves, arising from the reduction to the ordered simplex $Q$.

For the power coefficient, under $\vec{h}_K$ all hypotheses are alternatives, so
\[
a_k(\vec{u}) = K! \cdot K^{-1} \cdot \prod_{j=1}^K g(u_j) = (K-1)! \prod_{j=1}^K g(u_j),
\]
where $K!$ is the permutation factor and $K^{-1}$ comes from the averaging in the power definition. This is independent of $k$.
\end{proof}

\medskip\noindent\textit{Worked example for $K = 3$.}
To make the coefficient formula of Lemma~\ref{res:lemma1} concrete, we write out the full coefficient matrix for $K = 3$; let $g_j = g(u_j)$.
The constraints are indexed by $\gamma \in \{0, 1, 2\}$ and the positions by $k \in \{1, 2, 3\}$, giving
\begin{gather*}
b_{0,1} = 6, \quad b_{0,2} = 0, \quad b_{0,3} = 0, \\
b_{1,1} = 2(g_2 + g_3), \quad b_{1,2} = 2g_1, \quad b_{1,3} = 0, \\
b_{2,1} = 2g_2 g_3, \quad b_{2,2} = 2g_1 g_3, \quad b_{2,3} = 2g_1 g_2,
\end{gather*}
and power coefficients $a_k = 2!\prod_{j=1}^3 g_j = 2g_1 g_2 g_3$ for all $k$.
These match the explicit Lagrangian decomposition derived by case analysis in \citet{DubeyHuo2024K3}, confirming the general formula.

\subsection{Proof of \texorpdfstring{Theorem~\ref{thm:optimality_general_k}}{Theorem 1}}

\begin{theoremS}{1}[Optimality conditions; restated]
\label{res:theorem1}
Under Assumptions~\ref{as:assumption3}--\ref{as:assumption5}, a minimiser $\mu^*$ of the dual satisfies, for each $\gamma = 0, \ldots, K{-}1$:
\[
F_\gamma(\mu^*) := {\rm FWER}_\gamma(\vec{D}^{\mu^*}) = \alpha \quad \text{if } \mu_\gamma^* > 0, \qquad
F_\gamma(\mu^*) \le \alpha \quad \text{if } \mu_\gamma^* = 0.
\]
\end{theoremS}

\begin{proof}
This is complementary slackness for the linear program over policies.
Since the power and error representations are linear in $\vec{D}$, the dual objective is
\[
L(\vec{D}^\mu, \mu) = \Pi_K(\vec{D}^\mu) + \sum_{\gamma=0}^{K-1} \mu_\gamma\bigl\{\alpha - F_\gamma(\mu)\bigr\}.
\]
The dual function $h(\mu) = \max_{\vec{D}} L(\vec{D}, \mu)$ is convex in $\mu$ as a pointwise maximum of linear functions.
By Lemma~\ref{res:C1} and Corollary~\ref{res:C2} (established in the convergence section below, independently of this theorem), the maximising policy is unique for almost every $\vec{u}$ and $h$ is differentiable, the Danskin subgradient being the singleton $\nabla h(\mu^*)$ with $\partial h/\partial\mu_\gamma = \alpha - F_\gamma(\mu^*)$.
Because $h$ is minimised over $\mu \ge 0$, the Karush--Kuhn--Tucker conditions require $v := \nabla h(\mu^*)$ to satisfy $v_\gamma \ge 0$ and $v_\gamma\, \mu_\gamma^* = 0$ for every $\gamma$, with $v_\gamma = \alpha - F_\gamma(\mu^*)$.

If $\mu_\gamma^* > 0$, complementary slackness forces $v_\gamma = 0$, hence $F_\gamma(\mu^*) = \alpha$.
If $\mu_\gamma^* = 0$, the condition $v_\gamma \ge 0$ gives $F_\gamma(\mu^*) \le \alpha$.
\end{proof}

\subsection{Proof of \texorpdfstring{Theorem~\ref{thm:general_k_monotonicity}}{Theorem 2}}

\begin{theoremS}{2}[Global monotonicity; restated]
\label{res:theorem2}
In the product-model setting of Lemma~\ref{res:lemma1}, let $\mu,\mu' \in [0,\infty)^K$ satisfy $\mu \le \mu'$ componentwise.
Then $D_k^{\mu'}(\vec{u}) \le D_k^{\mu}(\vec{u})$ for all $k$ and $\vec{u} \in Q$, and consequently $F_\gamma(\mu') \le F_\gamma(\mu)$ for every $\gamma = 0, \ldots, K{-}1$.
\end{theoremS}

\begin{proof}
The argument proceeds in three steps, relying only on the non-negativity of the coefficients $b_{\gamma,k}$ established in Lemma~\ref{res:lemma1}.

\emph{Step~1: $R_i$ decreases in $\mu$.}
Since $\mu_\gamma' \ge \mu_\gamma$ for all $\gamma$ and $b_{\gamma,i}(\vec{u}) \ge 0$ for all $\gamma$, $i$ and $\vec{u}$, we have
\[
R_i(\mu', \vec{u}) = a_i(\vec{u}) - \sum_{\gamma=0}^{K-1} \mu_\gamma'\, b_{\gamma,i}(\vec{u}) \le a_i(\vec{u}) - \sum_{\gamma=0}^{K-1} \mu_\gamma\, b_{\gamma,i}(\vec{u}) = R_i(\mu, \vec{u})
\]
for every $i = 1, \ldots, K$ and $\vec{u} \in Q$.

\emph{Step~2: $l^*$ decreases in $\mu$.}
Define the partial sums $S_l(\mu) = \sum_{i=1}^l R_i(\mu, \vec{u})$ for $l \ge 1$ and $S_0 = 0$.
By Step~1, $S_l(\mu') \le S_l(\mu)$ for all $l \ge 1$, while $S_0(\mu') = S_0(\mu) = 0$.
Moreover, the gap $\Delta_l := S_l(\mu) - S_l(\mu')$ is non-decreasing in $l$ with $\Delta_0 = 0$, since $\Delta_l - \Delta_{l-1} = R_l(\mu) - R_l(\mu') \ge 0$ by Step~1.

We claim $l^*(\mu') \le l^*(\mu)$.
Define the ``acceptance set'' $A(\mu) = \{l \ge 0 : S_l(\mu) \ge S_j(\mu) \text{ for all } j\}$ and let $l^*(\mu) = \max A(\mu)$ (taking the largest maximiser as a tie-breaking convention).

For any $l \in A(\mu')$ with $l \ge 1$, we have $S_l(\mu') \ge S_0(\mu') = 0$, and thus $S_l(\mu) \ge S_l(\mu') \ge 0$.
This means $l$ achieves a non-negative value under $\mu$, so $\max_{j \ge 0} S_j(\mu) \ge 0$ and therefore $l^*(\mu) \ge 1$.

Now let $m = l^*(\mu')$.
Then $S_m(\mu) \ge S_m(\mu') \ge 0 = S_0(\mu)$, so the maximum of $\{S_l(\mu)\}_{l=0}^K$ is at least $S_m(\mu) \ge 0$.
Suppose for contradiction that $l^*(\mu) < m$.
Since $l^*(\mu) = \max A(\mu)$, the maximum property gives $S_{l^*(\mu)}(\mu) \ge S_m(\mu)$.
Conversely, using $\Delta_l$ non-decreasing and $m \in A(\mu')$,
\[
S_m(\mu) - S_{l^*(\mu)}(\mu) = \underbrace{\bigl[S_m(\mu') - S_{l^*(\mu)}(\mu')\bigr]}_{\ge\,0} + \underbrace{\bigl[\Delta_m - \Delta_{l^*(\mu)}\bigr]}_{\ge\,0} \;\ge\; 0.
\]
Combining these inequalities yields $S_m(\mu) = S_{l^*(\mu)}(\mu)$, hence $m \in A(\mu)$ with $m > l^*(\mu)$, contradicting $l^*(\mu) = \max A(\mu)$.
Therefore $l^*(\mu) \ge m = l^*(\mu')$.

\emph{Step~3: $F_\gamma$ decreases in $\mu$.}
Since $D_k^\mu(\vec{u}) = I(k \le l^*)$ and $l^*$ is non-increasing in $\mu$, $D_k^{\mu'}(\vec{u}) \le D_k^{\mu}(\vec{u})$ for all $k$ and $\vec{u}$.
The family-wise error rate functional is
\[
F_\gamma(\mu) = \int_Q \sum_{k=1}^K b_{\gamma,k}(\vec{u})\, D_k^\mu(\vec{u})\, d\vec{u}.
\]
Since $b_{\gamma,k} \ge 0$ and $D_k^{\mu'} \le D_k^\mu$, we obtain $F_\gamma(\mu') \le F_\gamma(\mu)$.
\end{proof}

\subsection{Proof of \texorpdfstring{Corollary~\ref{cor:unique_root}}{Corollary 1}}

\begin{corollaryS}{1}[Leftmost crossing point; restated]
\label{res:corollary1}
Suppose Assumptions~\ref{as:assumption3} and~\ref{as:assumption5} hold. For fixed $\mu_{-\gamma}$, the function $\mu_\gamma \mapsto F_\gamma(\mu_\gamma; \mu_{-\gamma})$ is non-increasing and continuous.
If $F_\gamma(0; \mu_{-\gamma}) > \alpha$, then the crossing point $\mu_\gamma^*(\mu_{-\gamma}) = \inf\{\mu_\gamma \ge 0 : F_\gamma(\mu_\gamma; \mu_{-\gamma}) \le \alpha\} \in (0, \infty)$ satisfies $F_\gamma(\mu_\gamma^*; \mu_{-\gamma}) = \alpha$, with $F_\gamma > \alpha$ for $\mu_\gamma < \mu_\gamma^*$ and $F_\gamma \le \alpha$ for $\mu_\gamma \ge \mu_\gamma^*$, and is computable by bisection.
\end{corollaryS}

\begin{proof}
Non-increasing follows from Theorem~\ref{res:theorem2} applied with $\mu$ and $\mu'$ differing only in coordinate $\gamma$.
If $\mu_\gamma < \mu_\gamma'$, then $\mu \le \mu'$ componentwise, so $F_\gamma(\mu') \le F_\gamma(\mu)$.

For the limiting behaviour as $\mu_\gamma \to \infty$: $F_\gamma(0; \mu_{-\gamma}) > \alpha$ by hypothesis.
For each $l$, write $B_{\gamma,l}(\vec{u}) = \sum_{i=1}^l b_{\gamma,i}(\vec{u}) \ge 0$, so that the cumulative net benefit splits as
\[
S_l(\mu, \vec{u}) = \sum_{i=1}^l R_i(\mu,\vec{u}) = C_l(\mu_{-\gamma}, \vec{u}) - \mu_\gamma\, B_{\gamma,l}(\vec{u}),
\]
where $C_l$ collects the terms that do not involve $\mu_\gamma$.
As $\mu_\gamma \to \infty$, every candidate $l$ with $B_{\gamma,l}(\vec{u}) > 0$ has $S_l \to -\infty$, so for almost every $\vec{u}$ the selected maximiser $l^*(\mu,\vec{u})$ eventually lies among the candidates with $B_{\gamma,l^*}(\vec{u}) = 0$.
Since $F_\gamma(\mu) = \int_Q \sum_{k=1}^K b_{\gamma,k}(\vec{u})\, D_k^\mu(\vec{u})\, d\vec{u} = \int_Q B_{\gamma,l^*(\mu,\vec{u})}(\vec{u})\, d\vec{u}$, and $B_{\gamma,l^*} \le \sum_{k=1}^K b_{\gamma,k}$ is integrable (at $\mu = \vec{0}$ every position is rejected, so $\int_Q \sum_k b_{\gamma,k}\,d\vec{u} = {\rm FWER}_\gamma(\vec{D} \equiv \vec{1}) \le 1$), dominated convergence gives $F_\gamma(\mu_\gamma; \mu_{-\gamma}) \to 0$.
This uses only the non-negativity of the coefficients, not strict positivity of $g$.

Define the crossing point
\[
\mu_\gamma^* := \inf\{\mu_\gamma \ge 0 : F_\gamma(\mu_\gamma; \mu_{-\gamma}) \le \alpha\}.
\]
By Corollary~\ref{res:C2} (Assumptions~\ref{as:assumption3} and~\ref{as:assumption5}), $F_\gamma(\cdot\,;\mu_{-\gamma}) = \alpha - \partial h/\partial\mu_\gamma$ is continuous. Since $F_\gamma(0;\mu_{-\gamma}) > \alpha$ and $F_\gamma(\mu_\gamma;\mu_{-\gamma}) \to 0 < \alpha$, the intermediate value theorem gives $\mu_\gamma^* \in (0,\infty)$ with $F_\gamma(\mu_\gamma^*;\mu_{-\gamma}) = \alpha$; non-increasing monotonicity then gives $F_\gamma(\mu_\gamma;\mu_{-\gamma}) > \alpha$ for every $\mu_\gamma < \mu_\gamma^*$ and $F_\gamma(\mu_\gamma;\mu_{-\gamma}) \le \alpha$ for every $\mu_\gamma \ge \mu_\gamma^*$, so $\mu_\gamma^*$ is the leftmost point of the sublevel set $\{\mu_\gamma \ge 0 : F_\gamma \le \alpha\}$.
This is the population crossing point. The finite-sample bisection in SPOT (Algorithm~\ref{alg:general_k}) acts instead on the empirical target $\hat F_\gamma$, a non-increasing step function (Theorem~\ref{res:theorem2} applies pathwise), refining bracketing intervals $[\mu_{\rm lo}, \mu_{\rm hi}]$ with $\hat F_\gamma(\mu_{\rm lo}) > \alpha \ge \hat F_\gamma(\mu_{\rm hi})$ to the prescribed tolerance.
\end{proof}

\subsection{Convergence (\texorpdfstring{Theorem~\ref{thm:algo_convergence}}{Theorem 3}), local rate (\texorpdfstring{Proposition~\ref{prop:linear_rate}}{Proposition 1}), and the Monte Carlo analysis}
\label{sec:supp_theory}

This section proves Theorem~\ref{thm:algo_convergence} (global convergence) and Proposition~\ref{prop:linear_rate} (local linear rate)
of the main paper, and establishes the Monte Carlo consistency of SPOT. We first show that the dual
objective is continuously differentiable (Lemma~\ref{res:C1}, Corollary~\ref{res:C2}), which makes the
exact population recursion an exact block-coordinate minimisation of a smooth convex function; we then prove
a \emph{global} convergence theorem (Lemma~\ref{res:C3}--Theorem~\ref{res:C5}); finally we bound the
error incurred by the Monte Carlo implementation (Assumption~\ref{res:C9}--Corollary~\ref{res:C12}). Throughout,
$R_i(\mu,\vec u) = a_i(\vec u) - \sum_{\gamma=0}^{K-1}\mu_\gamma\,b_{\gamma,i}(\vec u)$,
$S_l(\mu,\vec u) = \sum_{i=1}^l R_i(\mu,\vec u)$ with $S_0\equiv 0$,
$l^*(\mu,\vec u) = \max\bigl\{\argmax_{0\le l\le K} S_l(\mu,\vec u)\bigr\}$ (ties broken by the largest maximiser),
$D_i^\mu(\vec u) = I\{i\le l^*(\mu,\vec u)\}$,
$F_\gamma(\mu) = \int_Q\sum_{i=1}^K b_{\gamma,i}(\vec u)\,D_i^\mu(\vec u)\,d\vec u$,
and $h(\mu) = \max_{\vec D} L(\vec D,\mu)$ is the convex dual function of the main paper.

\subsubsection*{Smoothness of the dual objective}

For fixed $\mu$, define the \emph{tie set}
\[
T_\mu = \bigl\{\vec u\in Q : \textstyle\max_{0\le l\le K} S_l(\mu,\vec u)\ \text{is attained at two or more distinct } l\bigr\}.
\]
Off $T_\mu$ the maximiser $l^*(\mu,\vec u)$, hence the policy $\vec D^\mu(\vec u)$, is uniquely
determined. We use two standard facts.

\medskip\noindent\textbf{Fact A.} \emph{The zero set of a polynomial $P:\mathbb R^K\to\mathbb R$
that is not identically zero has Lebesgue measure zero.} (Induction on $K$: for $K=1$ a nonzero
polynomial has finitely many roots; for the step, write $P(v)=\sum_j c_j(v_1,\dots,v_{K-1})v_K^{\,j}$,
choose $j$ with $c_j\not\equiv 0$, apply the inductive hypothesis to $\{c_j=0\}$ and Fubini.)

\medskip\noindent\textbf{Fact B.} \emph{A Lipschitz map $\Psi:A\subseteq\mathbb R^K\to\mathbb R^K$
carries Lebesgue-null subsets of $A$ to Lebesgue-null sets.} (A cube of side $h$ has image of
diameter $\le\mathrm{Lip}(\Psi)\sqrt K\,h$, hence outer measure $\le C h^K$; sum over a cover.)

\begin{lemmaS}{C.1}[Decision boundaries are negligible]
\label{res:C1}
Under Assumption~\ref{as:assumption3}, for every fixed $\mu\in[0,\infty)^K$ the tie set $T_\mu$ has Lebesgue measure
zero in $Q$.
\end{lemmaS}

\begin{proof}
Since $T_\mu\subseteq\bigcup_{0\le l'<l\le K} E_{l,l'}$ with
$E_{l,l'}=\{\vec u\in Q: S_l(\mu,\vec u)=S_{l'}(\mu,\vec u)\}$ and the union is finite, it suffices to
show each $E_{l,l'}$ is null. Fix $l'<l$.

\emph{Step 1: the difference is a nonzero polynomial in the likelihood ratios.}
Write $v_j=g(u_j)$ and $\vec v=(v_1,\dots,v_K)$. By Lemma~\ref{res:lemma1}, $a_i(\vec u)=(K{-}1)!\,v_1v_2\cdots v_K$
for every $i$, and
$b_{\gamma,i}(\vec u)=\gamma!\,(K{-}\gamma)!\,\bigl(\prod_{j=1}^{i-1}v_j\bigr)\,e_{\gamma-i+1}(v_{i+1},\dots,v_K)$
is a polynomial in $\vec v$ of total degree $(i{-}1)+(\gamma{-}i{+}1)=\gamma\le K-1$. Hence
\begin{multline*}
S_l(\mu,\vec u)-S_{l'}(\mu,\vec u)=\sum_{i=l'+1}^l R_i(\mu,\vec u)\\
=(l-l')\,(K{-}1)!\,v_1v_2\cdots v_K-\sum_{i=l'+1}^l\sum_{\gamma=0}^{K-1}\mu_\gamma\,b_{\gamma,i}(\vec u)
=:P_{l,l'}(\vec v),
\end{multline*}
a polynomial in $\vec v$ whose coefficients depend on $(\mu,l,l',K)$ but not on $\vec u$. The
subtracted double sum has degree at most $K-1$ and so contains no multiple of the squarefree
degree-$K$ monomial $v_1v_2\cdots v_K$; therefore the coefficient of $v_1v_2\cdots v_K$ in $P_{l,l'}$
is $(l-l')\,(K{-}1)!\neq 0$, and $P_{l,l'}\not\equiv 0$.

\emph{Step 2: its zero set is negligible in $\vec v$.}
By Fact~A, $Z=\{\vec v\in\mathbb R^K:P_{l,l'}(\vec v)=0\}$ has Lebesgue measure zero.

\emph{Step 3: pull back to $\vec u$ without inflating measure.}
Let $\Phi(\vec u)=(g(u_1),\dots,g(u_K))$, so $E_{l,l'}=\Phi^{-1}(Z)$. Assumption~\ref{as:assumption3} gives, for all
$u\neq u'$, $|g(u)-g(u')|\ge c_3|u-u'|>0$; hence $g$ is injective, and with $v=g(u),\,v'=g(u')$,
$|g^{-1}(v)-g^{-1}(v')|=|u-u'|\le c_3^{-1}|v-v'|$, so $g^{-1}$ is $c_3^{-1}$-Lipschitz on $g([0,1])$.
Thus $\Phi$ is injective with Lipschitz inverse $\Psi(\vec v)=(g^{-1}(v_1),\dots,g^{-1}(v_K))$ on
$\Phi(Q)$, and $E_{l,l'}=\Psi(Z\cap\Phi(Q))$ is the Lipschitz image of a subset of the null set $Z$;
by Fact~B it is null. The finite union over $0\le l'<l\le K$ gives $|T_\mu|=0$.
\end{proof}

\begin{corollaryS}{C.2}[Continuity of $F_\gamma$ and smoothness of $h$]
\label{res:C2}
Under Assumptions~\ref{as:assumption3} and~\ref{as:assumption5}, each $F_\gamma$ is continuous on $[0,\infty)^K$. Consequently
$h$ is convex and continuously differentiable, with $\partial h/\partial\mu_\gamma(\mu)=\alpha-F_\gamma(\mu)$.
\end{corollaryS}

\begin{proof}
Fix $\mu$ and let $\mu^{(n)}\to\mu$. For $\vec u\notin T_\mu$ the maximiser $l^*(\mu,\vec u)$ is the
unique index attaining $\max_l S_l(\mu,\vec u)$, so $S_{l^*}(\mu,\vec u)>S_l(\mu,\vec u)$ for all
$l\neq l^*$; since each $S_l(\cdot,\vec u)$ is affine, hence continuous, in $\mu$, the same strict
inequalities hold at $\mu^{(n)}$ for all large $n$, giving $l^*(\mu^{(n)},\vec u)=l^*(\mu,\vec u)$ and
$D_i^{\mu^{(n)}}(\vec u)\to D_i^\mu(\vec u)$ for every $i$. By Lemma~\ref{res:C1}, $|T_\mu|=0$, so this holds for
almost every $\vec u$. Under Assumption~\ref{as:assumption5}, $g\le c_5$, so every $b_{\gamma,i}$ is bounded on $Q$ by the finite constant
$B_1:=\max_{\gamma,i}\sup_{\vec u\in Q}b_{\gamma,i}(\vec u)<\infty$; hence $|\sum_i b_{\gamma,i}D_i^{\mu^{(n)}}|\le K B_1$, a constant integrable envelope on the
bounded set $Q$. Dominated convergence yields $F_\gamma(\mu^{(n)})\to F_\gamma(\mu)$. By Danskin's
theorem applied to the pointwise maximum $h$ (as in the proof of Theorem~\ref{res:theorem1}),
$\partial h/\partial\mu_\gamma=\alpha-F_\gamma$; since $h$ is convex and these partial derivatives exist everywhere,
$h$ is differentiable \citep[Thm.~25.2]{Rockafellar1970}, with gradient continuous wherever it exists
\citep[Thm.~25.5]{Rockafellar1970}, so $h\in C^1$.
These open-domain results apply because $h(\mu)=\max_{\vec D}L(\vec D,\mu)$ is finite and convex on all of $\mathbb R^K$, and Lemma~\ref{res:C1}'s polynomial argument is unaffected by the sign of $\mu$, so the partial derivatives exist on an open set containing $[0,\infty)^K$.
\end{proof}

\subsubsection*{Global convergence}

For fixed $\gamma$ and $\mu_{-\gamma}$, write $\varphi(t)=h(t;\mu_{-\gamma})$ for the dual objective
along coordinate $\gamma$; then $\varphi'(t)=\alpha-F_\gamma(t;\mu_{-\gamma})$ is non-decreasing
(Corollary~\ref{res:C2} and Theorem~\ref{res:theorem2}), so $\varphi$ is convex.

\begin{lemmaS}{C.3}[Coordinate minimiser]
\label{res:C3}
Under Assumptions~\ref{as:assumption3} and~\ref{as:assumption5}, with $\gamma,\mu_{-\gamma}$ fixed: \emph{(i)} if $F_\gamma(0;\mu_{-\gamma})>\alpha$, the minimisers of $\varphi$ on $[0,\infty)$
form the level set $\{\mu_\gamma\ge0:F_\gamma(\mu_\gamma;\mu_{-\gamma})=\alpha\}$, whose leftmost point is the crossing
point $\mu_\gamma^*(\mu_{-\gamma})$ of Corollary~\ref{res:corollary1} and which is a single point precisely when $F_\gamma$ crosses the level $\alpha$ strictly (the no-plateau proviso); \emph{(ii)} if $F_\gamma(0;\mu_{-\gamma})<\alpha$, the unique minimiser is
$\mu_\gamma=0$; \emph{(iii)} if $F_\gamma(0;\mu_{-\gamma})=\alpha$, then $\mu_\gamma=0$ is a minimiser and the minimiser set is again $\{\mu_\gamma\ge0:F_\gamma(\mu_\gamma;\mu_{-\gamma})=\alpha\}$, an interval containing $0$.
\end{lemmaS}

\begin{proof}
(i) $\varphi'(0)=\alpha-F_\gamma(0)<0$, so $\mu_\gamma=0$ is not a minimiser; since $\varphi$ is convex with
non-decreasing derivative $\varphi'=\alpha-F_\gamma$, its minimisers form the level set
$\{\mu_\gamma\ge0:F_\gamma(\mu_\gamma)=\alpha\}$, whose leftmost point is the crossing point $\mu_\gamma^*$ of
Corollary~\ref{res:corollary1}; by monotonicity this level set is an interval, a single point exactly when the crossing is strict. (ii) As $F_\gamma$ is
non-increasing, $F_\gamma(t)\le F_\gamma(0)<\alpha$ for all $t\ge 0$, so
$\varphi'(t)=\alpha-F_\gamma(t)\ge\alpha-F_\gamma(0)>0$; thus $\varphi$ is strictly increasing and is
uniquely minimised at $t=0$. (iii) Here $F_\gamma(t)\le F_\gamma(0)=\alpha$ for all $t$, so $\varphi'\ge0$ and $\varphi$ is
non-decreasing; its minimisers are the initial interval on which $\varphi'$ vanishes, i.e., by monotonicity of $F_\gamma$,
the set $\{t\ge0:F_\gamma(t)=\alpha\}$, which contains $t=0$.
\end{proof}

\noindent\emph{Remark (plateaus).} The coordinate minimiser is unique precisely when the level set
$\{\mu_\gamma\ge0:F_\gamma(\mu_\gamma;\mu_{-\gamma})=\alpha\}$ is a single point.
Non-uniqueness occurs whenever $F_\gamma(\cdot;\mu_{-\gamma})$ is flat at level $\alpha$ over a nondegenerate
interval (a \emph{plateau}). This can arise at the left boundary (case (iii) of Lemma~\ref{res:C3}, where
$F_\gamma(0;\mu_{-\gamma})=\alpha$ and $F_\gamma$ remains at $\alpha$ on an interval $[0,\bar t\,]$) or in the interior of
$[0,\infty)$, where $F_\gamma$ is locally constant at $\alpha$ away from $0$; boundary tangency is not the only mechanism. This does not affect Theorem~\ref{res:C5}, whose hypotheses require each coordinate minimum only to
be \emph{attained}. Where the classical statement of \citet[Prop.~2.7.1]{Bertsekas1999}
is preferred, the absence of plateaus at level $\alpha$ can be imposed as an additional generic condition.

\begin{lemmaS}{C.4}[Coercivity and compact iterate set]
\label{res:C4}
For every $\mu\in[0,\infty)^K$,
\[
h(\mu)\;\ge\;L(\vec 0,\mu)\;=\;\alpha\sum_{\gamma=0}^{K-1}\mu_\gamma\;=\;\alpha\,\lVert\mu\rVert_1 .
\]
Consequently $h$ is coercive on $[0,\infty)^K$; the sublevel set
$\mathcal S_0=\{\mu\ge 0:h(\mu)\le h(\vec\mu^{(0)})\}$ is compact; and every iterate of
the exact population recursion of Theorem~\ref{res:C5} satisfies $\vec\mu^{(t)}\in\mathcal S_0$ and
$\lVert\vec\mu^{(t)}\rVert_1\le h(\vec\mu^{(0)})/\alpha$.
\end{lemmaS}

\begin{proof}
Taking the feasible policy $\vec D\equiv\vec 0$ in $h(\mu)=\max_{\vec D}L(\vec D,\mu)$ and using the
Lagrangian representation $L(\vec D,\mu)=\alpha\sum_{\gamma}\mu_\gamma+\int_Q\sum_i D_i(\vec u)R_i(\mu,\vec u)\,d\vec u$
of the main paper gives $h(\mu)\ge L(\vec 0,\mu)=\alpha\sum_\gamma\mu_\gamma$, since
the integral vanishes at $\vec D\equiv\vec 0$; as $\mu\ge 0$ this equals $\alpha\lVert\mu\rVert_1$.
Hence $h(\mu)\to\infty$ as $\lVert\mu\rVert_1\to\infty$, i.e.\ $h$ is coercive. Also $h$ is finite
(at $\mu=\vec 0$, $R_i=a_i\ge 0$, so $h(\vec 0)=\int_Q\sum_i a_i\,d\vec u=K!\int_Q\prod_{j}g(u_j)\,d\vec u<\infty$
by Assumption~\ref{as:assumption5}). By Corollary~\ref{res:C2} $h$ is continuous, so $\mathcal S_0$ is closed; coercivity makes
it bounded, hence compact. Each exact population coordinate update minimises $h$ along one coordinate, so
$h(\vec\mu^{(t)})\le h(\vec\mu^{(t-1)})\le\cdots\le h(\vec\mu^{(0)})$; thus $\vec\mu^{(t)}\in\mathcal S_0$
and $\alpha\lVert\vec\mu^{(t)}\rVert_1\le h(\vec\mu^{(t)})\le h(\vec\mu^{(0)})$.
\end{proof}

\begin{theoremS}{C.5}[Global convergence]
\label{res:C5}
Suppose Assumptions~\ref{as:assumption3}--\ref{as:assumption5} hold, and consider the exact population version of SPOT (Algorithm~\ref{alg:general_k} with each Monte Carlo target $\hat F_\gamma$ replaced by the exact $F_\gamma$ and each coordinate minimised exactly). Then its objective values satisfy
$h(\vec\mu^{(t)})\downarrow\min_{\mu\ge 0}h(\mu)$, and every limit point of $\{\vec\mu^{(t)}\}$ is an optimal
dual vector $\mu^*$. No contraction, strong-convexity, Lipschitz-gradient, or active-Jacobian-nonsingularity hypothesis is used.
\end{theoremS}

\begin{proof}
By Corollary~\ref{res:C2}, $h$ is convex and continuously differentiable on $[0,\infty)^K$, with
$\partial h/\partial\mu_\gamma=\alpha-F_\gamma$. By Lemma~\ref{res:C3}, for every fixed $\mu_{-\gamma}$ the coordinate
subproblem $\min_{\mu_\gamma\ge0}h(\mu_\gamma;\mu_{-\gamma})$ \emph{attains} its minimum, at the crossing
point $\mu_\gamma^*(\mu_{-\gamma})$ of Corollary~\ref{res:corollary1} when $F_\gamma(0;\mu_{-\gamma})>\alpha$, and at
$\mu_\gamma=0$ otherwise. By Lemma~\ref{res:C4} the iterates lie in the compact sublevel set $\mathcal S_0$. Thus
the exact population recursion is exact cyclic block-coordinate minimisation of a convex, continuously differentiable function (with attained coordinate minima) over the separable box $[0,\infty)^K$ with compact sublevel sets. Being convex and differentiable, $h$ is pseudoconvex, so Proposition~6 of \citet{GrippoSciandrone2000} applies on the compact sublevel set $\mathcal S_0$ and yields that every limit point of the iterates is a global minimiser of $h$, \emph{without} a unique-minimiser requirement; Theorem~4.1 of \citet{Tseng2001} gives the same conclusion, its pairwise-pseudoconvexity and regularity hypotheses holding trivially for a differentiable convex objective. Concretely, each limit point is a coordinate-wise minimum, which for the smooth
convex $h$ is a stationary point on $[0,\infty)^K$ and hence, by convexity, a global minimiser. Since the
objective is monotone non-increasing and bounded below and the iterates remain in $\mathcal S_0$,
$h(\vec\mu^{(t)})\downarrow\min_{\mu\ge 0}h(\mu)$ and at least one limit point exists; every limit point
minimises the dual, i.e.\ is an optimal $\mu^*$.

When the coordinate minima are unique (the no-plateau proviso of the remark below), the same conclusion
follows more elementarily from Proposition~2.7.1 of \citet{Bertsekas1999}.
\end{proof}

\noindent\emph{Remark (no-plateau proviso).} The proviso enters only the elementary Proposition~2.7.1 route,
whose uniqueness requirement serves to exclude Powell-type cycling; it demands that no coordinate section of
$F_\gamma$ be flat at level $\alpha$ over a nondegenerate interval. When the Bertsekas route is used, this can be
imposed as an additional generic assumption; the route through \citet[Proposition~6]{GrippoSciandrone2000} and
\citet[Theorem~4.1]{Tseng2001} used above requires nothing of the sort.

\begin{theoremS}{C.6}[Local linear rate; contraction hypothesis removed]
\label{res:C6}
Under Assumptions~\ref{as:assumption3}--\ref{as:assumption5}, let $\mu^*$ be a dual optimum with active set $A=\{\gamma:\mu^*_\gamma>0\}$, and assume:
\emph{(a)} strict complementarity, $F_\gamma(\mu^*)<\alpha$ for $\gamma\notin A$; \emph{(b)} $F=(F_0,\dots,F_{K-1})$
is continuously differentiable in a neighbourhood of $\mu^*$; and \emph{(c)} Assumption~\ref{res:C9}, the active
Jacobian $J=[\partial F_\gamma/\partial\mu_\delta(\mu^*)]_{\gamma,\delta\in A}$ is nonsingular. Then $\mu^*$ is the
unique dual optimum, $h$ is $C^{2}$ and strongly convex in the active coordinates on a neighbourhood of
$\mu^*$, and the iterates of the exact population recursion of Theorem~\ref{res:C5} converge $R$-linearly (Theorem~\ref{res:C5} first bringing them into the local neighbourhood): there exist $\rho\in(0,1)$ and $C<\infty$
with $\lVert\vec\mu^{(t)}-\mu^*\rVert\le C\rho^{\,t}$ for all large $t$.
\end{theoremS}

\noindent\emph{Remark.} The successive-iterate ratios reported in \S\ref{sec:convergence} of the main paper, ranging from about $0.1$ to $0.5$ after the initial transient, are finite-run descriptive counterparts of $\rho$, not a verification of the rate.

\begin{proof}
On a neighbourhood of $\mu^*$, by (b) the map $\nabla h=(\alpha-F_\gamma)_\gamma$ is $C^{1}$, so $h$ is $C^{2}$ with
Hessian $\nabla^2 h(\mu)=-\,\partial F/\partial\mu$. Convexity of $h$ gives $\nabla^2 h\succeq 0$; restricted to the
active coordinates the Hessian is $-J$, which by (c) is nonsingular, hence $-J\succ 0$. Thus $h$ is strongly
convex on the active block near $\mu^*$; for directions that increase an inactive coordinate, $\partial h/\partial\mu_\gamma(\mu^*)=\alpha-F_\gamma(\mu^*)>0$ by (a), so $h$ increases strictly there as well, making $\mu^*$ an isolated (strict) local minimiser on the orthant and hence, $h$ being convex, the unique global minimiser; Theorem~\ref{res:C5} then gives $\vec\mu^{(t)}\to\mu^*$. By (a) and continuity, once
$\vec\mu^{(t)}$ is close to $\mu^*$ the inactive coordinates satisfy $\partial h/\partial\mu_\gamma=\alpha-F_\gamma>0$
and are pinned at $0$ by their coordinate update; the tail of the SPOT recursion is therefore exact cyclic
coordinate minimisation of the $C^{2}$, strongly convex function $h$ restricted to the active block, with
Lipschitz gradient there. The rate now follows from a classical Gauss--Seidel argument.
Let $S$ denote the map taking the active block through one full sweep of exact coordinate minimisation of $h(\cdot\,;\mu_{A^c}=\mathbf 0)$; each coordinate update solves $\partial_\gamma h = 0$ given the already-updated earlier coordinates, and since the diagonal entries of the Hessian $H := \nabla_A^2 h(\mu^*_A)$ are positive, the implicit function theorem makes each update, hence $S$, continuously differentiable near $\mu^*_A$, with $S(\mu^*_A) = \mu^*_A$.
Differentiating the stationarity relations at the fixed point and writing $H = D + L + L^{\top}$, with $D$ diagonal and $L$ strictly lower triangular, gives $(D + L)\,S'(\mu^*_A) = -L^{\top}$, so $S'(\mu^*_A) = -(D + L)^{-1}L^{\top}$ is the Gauss--Seidel iteration matrix of $H$; for symmetric positive definite $H$ its spectral radius is strictly less than one (the Ostrowski--Reich theorem; see, e.g., \citealp[\S\S 10.1 and 10.3]{OrtegaRheinboldt2000}, also for the local convergence of nonlinear Gauss--Seidel).
Hence $\lVert\vec\mu^{(t)}_A-\mu^*_A\rVert \le C\rho^{\,t}$ for any $\rho$ strictly between the spectral radius and one and all large $t$, which yields the stated bound; \citet{LuoTseng1992} and \citet{BeckTetruashvili2013} give related global statements.
\end{proof}

\noindent\emph{Remark (linear rate).} Theorem~\ref{res:C6} is Proposition~\ref{prop:linear_rate} of the main paper: it delivers a linear
convergence rate under local $C^1$ smoothness, strict complementarity, and a nonsingular active
Jacobian (Assumption~\ref{res:C9}), a local non-degeneracy condition used in place of a global contraction hypothesis on the
update map (which would essentially presuppose the convergence it is meant to establish). The global convergence (with no rate assumption)
is Theorem~\ref{thm:algo_convergence}, proved as Theorem~\ref{res:C5} above.

\subsubsection*{Lipschitz stability of a single threshold, and why global Lipschitz is not claimed}

\begin{lemmaS}{C.7}[Lipschitz stability of one rejection threshold]
\label{res:C7}
Under Assumptions~\ref{as:assumption3}--\ref{as:assumption5} there is a finite constant $C_0=C_0(K,c_3,c_4,c_5)$ such that, for every
$j\in\{1,\dots,K\}$ and all $\mu,\mu'\in[0,\infty)^K$,
\[
\lambda\bigl(\{\vec u\in Q:R_j(\mu,\vec u)>0\}\,\triangle\,\{\vec u\in Q:R_j(\mu',\vec u)>0\}\bigr)\;\le\;C_0\,\lVert\mu-\mu'\rVert_1 .
\]
\end{lemmaS}

\begin{proof}
Write $v_k=g(u_k)$. By Lemma~\ref{res:lemma1}, $a_j=(K{-}1)!\,v_1\cdots v_K$, and, crucially, the coefficient
\[
b_{\gamma,j}(\vec u)=\gamma!(K{-}\gamma)!\,\Bigl(\prod_{s<j}v_s\Bigr)\,e_{\gamma-j+1}(v_{j+1},\dots,v_K)
\]
\emph{does not depend on $v_j$} (the null at position $j$ contributes nothing; Remark~\ref{rem:structure}). Hence, as a function of $v_j$ with the other
coordinates fixed, $R_j(\mu,\vec u)=a_j-\sum_\gamma\mu_\gamma b_{\gamma,j}$ is affine with slope
\[
\frac{\partial R_j}{\partial v_j}=(K{-}1)!\prod_{s\neq j}v_s\;\ge\;(K{-}1)!\,c_4^{\,K-1}=:\kappa>0
\]
uniformly in $\mu$ and in $\vec v\in[c_4,c_5]^K$ (Assumption~\ref{as:assumption4}). Set
$\varepsilon:=\sup_{\vec u}\lvert R_j(\mu,\vec u)-R_j(\mu',\vec u)\rvert=\sup_{\vec u}\bigl\lvert\sum_\gamma(\mu_\gamma-\mu'_\gamma)b_{\gamma,j}(\vec u)\bigr\rvert
\le B_1\lVert\mu-\mu'\rVert_1$, where $B_1=\max_{\gamma,j}\sup_{\vec u}b_{\gamma,j}(\vec u)<\infty$ by Assumption~\ref{as:assumption5}. If
$R_j(\mu,\vec u)$ and $R_j(\mu',\vec u)$ have opposite signs then $\lvert R_j(\mu,\vec u)\rvert\le\varepsilon$, so the
symmetric difference is contained in $\{\vec u\in Q:\lvert R_j(\mu,\vec u)\rvert\le\varepsilon\}$. Passing to $\vec v$,
this is $\Psi\bigl(\{\vec v\in\Phi(Q):\lvert R_j\rvert\le\varepsilon\}\bigr)$ with $\Psi=(g^{-1},\dots,g^{-1})$
the $c_3^{-1}$-Lipschitz inverse of Lemma~\ref{res:C1}. For each fixed $\vec v_{-j}$ the affine-in-$v_j$ function has
$\{v_j:\lvert R_j\rvert\le\varepsilon\}$ an interval of length $\le 2\varepsilon/\kappa$; since $\Phi(Q)\subseteq[c_4,c_5]^K$,
Fubini gives $\lambda_{\vec v}\{\lvert R_j\rvert\le\varepsilon\}\le(c_5-c_4)^{K-1}\,2\varepsilon/\kappa$, and the Lipschitz
image bound (Fact~B) multiplies this by at most $c_3^{-K}$. Collecting constants,
\[
\lambda\bigl(\{\lvert R_j(\mu,\cdot)\rvert\le\varepsilon\}\bigr)\le
\frac{2\,c_3^{-K}(c_5-c_4)^{K-1}}{(K{-}1)!\,c_4^{\,K-1}}\,\varepsilon
\le C_0\lVert\mu-\mu'\rVert_1,\quad
C_0:=\frac{2\,c_3^{-K}(c_5-c_4)^{K-1}B_1}{(K{-}1)!\,c_4^{\,K-1}} .
\qedhere
\]
\end{proof}

\begin{corollaryS}{C.8}[Local Lipschitz continuity of $F_\gamma$]
\label{res:C8}
Suppose Assumptions~\ref{as:assumption3}--\ref{as:assumption5} hold, together with the following \emph{local single-crossing condition}: there is a neighbourhood $U$ of $\mu^*$ in $[0,\infty)^K$ such that for every $\mu\in U$ and almost every $\vec u\in Q$,
\[
R_k(\mu,\vec u)>0\ \text{ for all }\ k\le l^*(\mu,\vec u)
\qquad\text{and}\qquad
R_k(\mu,\vec u)\le 0\ \text{ for all }\ k>l^*(\mu,\vec u),
\]
i.e., the net-benefit sequence $k\mapsto R_k(\mu,\vec u)$ changes sign at most once, from positive to non-positive.
Then, for every $k$ and every $\mu\in U$, $\{\vec u: l^*(\mu,\vec u)\ge k\}$ coincides with $\{\vec u: R_k(\mu,\vec u)>0\}$ up to a null set, each $F_\gamma$ is Lipschitz on $U$, and, if in addition $F_\gamma$ is differentiable at $\mu^*$ (the
non-degeneracy assumed in Proposition~\ref{prop:linear_rate} and Assumption~\ref{res:C9}), the Jacobian $J$ is well defined.
\end{corollaryS}

\begin{proof}
Fix $\mu\in U$ and $\vec u$ off the stated null set, and let $k\in\{1,\dots,K\}$.
If $R_k(\mu,\vec u)>0$ then $k\le l^*(\mu,\vec u)$, since $R_j\le 0$ for every $j>l^*$; conversely, if $k\le l^*(\mu,\vec u)$ then $R_k(\mu,\vec u)>0$ by the condition.
Hence $\{l^*\ge k\}=\{R_k(\mu,\cdot)>0\}$ up to a null set, so
$F_\gamma(\mu)=\int_Q\sum_k b_{\gamma,k}\,I\{k\le l^*(\mu,\cdot)\}\,d\vec u$ changes, between $\mu,\mu'\in U$, only on
$\bigcup_k(\{R_k(\mu,\cdot)>0\}\triangle\{R_k(\mu',\cdot)>0\})$; bounding $b_{\gamma,k}\le B_1$ and applying
Lemma~\ref{res:C7} to each $k$ gives $\lvert F_\gamma(\mu)-F_\gamma(\mu')\rvert\le K B_1 C_0\lVert\mu-\mu'\rVert_1$, the local
Lipschitz bound. (A monotone, locally Lipschitz $F_\gamma$ is differentiable at Lebesgue-almost every point;
differentiability at the specific optimum $\mu^*$, hence existence of $J$, is the non-degeneracy hypothesis of
Proposition~\ref{prop:linear_rate}, not established here.)
\end{proof}

\noindent\emph{Remark (global Lipschitz continuity is not established; the $O(1/t)$ route is not used).}
Corollary~\ref{res:C8} is deliberately local. A \emph{global} Lipschitz bound on $F_\gamma$ via the same route would require the
non-consecutive partial-sum differences $S_p-S_q$ $(p>q)$ to be transverse, but these are not affine in any single
coordinate and their gradients can vanish on the zero set through $\mu$-dependent cancellation, leaving, heuristically,
only anti-concentration bounds of Carbery--Wright type \citep{CarberyWright2001} under the transformed measure, suggesting H\"older- rather than Lipschitz-type control. We record this as motivation only; what we actually rely on is the concrete failure of single crossing below, not this bound. The
single-crossing mechanism genuinely fails beyond the local regime: with the $K{=}3$ coefficients above and the distinct, decreasing likelihood-ratio values $\vec v=(3,\,1,\,\tfrac12)$ (compatible with the injectivity of $g$ under Assumption~\ref{as:assumption3}), $\mu=(0,\,\tfrac25,\,\tfrac25)$ one gets
$R=(1.4,\,-0.6,\,0.6)$ and $S=(0,\,1.4,\,0.8,\,1.4)$, whose maximum is attained at the non-consecutive indices $l=1$ and $l=3$; the net-benefit sequence is therefore not
unimodal and the boundary of $\{l^*\ge 2\}$ is not a single-increment level set.
We therefore neither establish nor assume global Lipschitz continuity of $F_\gamma$.
This is why convergence is routed through the coordinate-minimisation theory of \citet[Proposition~6]{GrippoSciandrone2000} and \citet[Prop.~2.7.1]{Bertsekas1999} (Theorem~\ref{res:C5}), which needs no
Lipschitz gradient, with the linear rate supplied locally by Theorem~\ref{res:C6}. A global Lipschitz gradient would be
required for the $O(1/t)$ rate of \citet{BeckTetruashvili2013}, which we therefore do not claim.

\medskip
\noindent The local rate (Theorem~\ref{res:C6}) and the Monte Carlo analysis invoke one non-degeneracy condition at the dual optimum~$\mu^*$, with active set $A=\{\gamma:\mu^*_\gamma>0\}$.

\begin{assumptionS}{C.9}[Local non-degeneracy]
\label{res:C9}
The Jacobian $J=[\partial F_\gamma/\partial\mu_\delta(\mu^*)]_{\gamma,\delta\in A}$ is nonsingular.
\end{assumptionS}

\noindent This is the standard non-degeneracy condition underlying the local linear rate (Proposition~\ref{prop:linear_rate} of the main
paper); only \emph{nonsingularity} of $J$ is needed here, a local condition used in place of a global contraction hypothesis.

\subsubsection*{Population and Monte Carlo versions of SPOT}

SPOT (Algorithm~\ref{alg:general_k}) replaces each population target $F_\gamma(\mu)={\rm FWER}_\gamma(\vec D^\mu)$ by a Monte Carlo
average. For each $\gamma$, draw $\vec X^{(1)},\dots,\vec X^{(N)}$ i.i.d.\ under configuration $\vec h_\gamma$
(the first $\gamma$ hypotheses alternatives, the remaining $K-\gamma$ true nulls) once, before the outer loop,
independently across configurations, and reuse that batch for every update of coordinate $\gamma$: all bisection
evaluations, across all sweeps (common random numbers). Each $\hat F_\gamma$ is therefore a single fixed empirical
function throughout the run, and the outer recursion is deterministic given the batches; this is exactly how the
accompanying implementation operates. Each $\vec X^{(n)}$ induces sorted $p$-values
$\vec u^{(n)}\in Q$ and a set $\mathcal N^{(n)}\subseteq\{1,\dots,K\}$ of sorted positions occupied by true
nulls, with $|\mathcal N^{(n)}|=K-\gamma$. The policy $\vec D^\mu$ rejects sorted ranks $1,\dots,l^*(\mu,\vec u^{(n)})$,
so its number of false rejections is $V^{(n)}(\vec D^\mu)=|\mathcal N^{(n)}\cap\{1,\dots,l^*(\mu,\vec u^{(n)})\}|$.
The estimator is the empirical family-wise error frequency
\begin{equation}
\label{eq:fhat_correct}
\hat F_\gamma(\mu)=\frac1N\sum_{n=1}^N\psi_\gamma(\mu,\vec X^{(n)}),\qquad
\psi_\gamma(\mu,\vec X)=I\{V(\vec D^\mu)>0\}=I\{\,l^*(\mu,\vec u)\ge\rho(\vec X)\,\},
\end{equation}
where $\rho(\vec X)=\min\mathcal N(\vec X)$ is the smallest sorted rank of a true null. Because only $\gamma$
hypotheses are alternatives, any $\gamma+1$ sorted positions include at least one null, so $\rho(\vec X)\in\{1,\dots,\gamma+1\}$.

\medskip\noindent\emph{Remark (consistency of the Monte Carlo target).}
By construction,
\[
\mathbb E\,\psi_\gamma(\mu,\vec X)=\mathbb P_{\vec h_\gamma}\{V(\vec D^\mu)>0\}={\rm FWER}_\gamma(\vec D^\mu)=F_\gamma(\mu),
\]
so $\hat F_\gamma$ is unbiased for the same $F_\gamma$ defined in the main paper by the linear form
$F_\gamma(\mu)=\int_Q\sum_k b_{\gamma,k}(\vec u)\,D_k^\mu(\vec u)\,d\vec u$; the two representations of
${\rm FWER}_\gamma$ agree \citep[eq.~(7)]{RHPA22}. The estimator is \emph{not} $N^{-1}\sum_n I\{l^*(\mu,\vec u^{(n)})\ge\gamma+1\}$:
that would equal $\mathbb P(l^*\ge\gamma+1)$, which omits the event that a true null is rejected while sitting
below an alternative, and understates $F_\gamma$ for $\gamma\ge 1$. For instance, with the $K{=}3$ coefficients
of the main paper under $\vec h_1$ (one alternative, two nulls, sorted density $2\{g(u_1)+g(u_2)+g(u_3)\}$),
$\int_Q\sum_k b_{1,k}D_k^\mu\,d\vec u
=\int_Q\bigl[\,2\{g(u_2)+g(u_3)\}\,I\{l^*{=}1\}+2\{g(u_1)+g(u_2)+g(u_3)\}\,I\{l^*{\ge}2\}\bigr]d\vec u$,
whereas $\mathbb P_{\vec h_1}(l^*\ge 2)=\int_Q 2\{g(u_1)+g(u_2)+g(u_3)\}\,I\{l^*{\ge}2\}\,d\vec u$; the two differ by
the nonnegative term $\int_Q 2\{g(u_2)+g(u_3)\}\,I\{l^*{=}1\}\,d\vec u$, exactly the probability that the rank-one
rejection is a null.

\begin{lemmaS}{C.10}[Uniform Monte Carlo error]
\label{res:C10}
Suppose Assumptions~\ref{as:assumption3}--\ref{as:assumption5} hold, with $K$ fixed as $N\to\infty$ and $0<\alpha<1$. Fix $\gamma$ and a bounded set $M\subseteq[0,\infty)^K$. Then $\hat F_\gamma$ in \eqref{eq:fhat_correct} satisfies
\[
\sup_{\mu\in M}\bigl|\hat F_\gamma(\mu)-F_\gamma(\mu)\bigr|=O_p(N^{-1/2}),\qquad
\sqrt N\,(\hat F_\gamma-F_\gamma)\ \rightsquigarrow\ \mathbb G_\gamma\ \text{ in }\ \ell^\infty(M),
\]
where $\mathbb G_\gamma$ is a tight mean-zero Gaussian process with covariance
$\mathrm{Cov}\{\psi_\gamma(\mu,\cdot),\psi_\gamma(\mu',\cdot)\}$; and at any $\mu$ with $F_\gamma(\mu)=\alpha$,
$\mathrm{var}\{\hat F_\gamma(\mu)\}=\alpha(1-\alpha)/N$.
\end{lemmaS}

\begin{proof}
The variance statement is immediate: $\psi_\gamma(\mu,\cdot)$ is a Bernoulli variable with mean $F_\gamma(\mu)$, so
$\mathrm{var}\{\hat F_\gamma(\mu)\}=F_\gamma(\mu)\{1-F_\gamma(\mu)\}/N$, which is $\alpha(1-\alpha)/N$ when $F_\gamma(\mu)=\alpha$.
For the uniform statements it suffices to show the class $\mathcal F_\gamma=\{\psi_\gamma(\mu,\cdot):\mu\in M\}$ is a
Donsker class of $\{0,1\}$-valued functions. Being Donsker means $\sqrt N(\hat F_\gamma-F_\gamma)\rightsquigarrow\mathbb G_\gamma$
in $\ell^\infty(M)$ with the stated covariance, and the asymptotic tightness of this process gives
$\sup_{\mu\in M}|\hat F_\gamma-F_\gamma|=O_p(N^{-1/2})$.

The members of $\mathcal F_\gamma$ are indicators of the sets $A_\mu=\{\vec X:l^*(\mu,\vec u)\ge\rho(\vec X)\}$. Because
$\rho$ does not depend on $\mu$ and takes values in $\{1,\dots,\gamma+1\}$,
\[
A_\mu=\bigcup_{m=1}^{\gamma+1}\Bigl(\{\rho=m\}\cap\{l^*(\mu,\vec u)\ge m\}\Bigr),
\]
a fixed finite union of intersections with the $\mu$-free sets $\{\rho=m\}$. Finite unions and intersections of
Vapnik--Chervonenkis (VC) classes with fixed sets are VC \citep[Lemma~2.6.17]{vanderVaartWellner1996}, so it suffices to show that for each $m$
the class $\mathcal C_m=\{\{l^*(\mu,\cdot)\ge m\}:\mu\in M\}$ is VC.

Fix $m\in\{1,\dots,\gamma+1\}$. Off the null tie set (Lemma~\ref{res:C1}, Lebesgue-null and hence $\mathbb P_{\vec h_\gamma}$-null,
since under Assumptions~\ref{as:assumption4}--\ref{as:assumption5} the law of $\vec u$ is absolutely continuous), $l^*(\mu,\vec u)$ is the unique maximiser
of $l\mapsto S_l(\mu,\vec u)$, so
\[
\{l^*(\mu,\vec u)\ge m\}=\bigcup_{l=m}^{K}\ \bigcap_{l'=0}^{m-1}\{\vec u:\Delta_{l,l'}(\mu,\vec u)\ge 0\},\qquad
\Delta_{l,l'}=S_l-S_{l'} .
\]
For $l>l'$, $\Delta_{l,l'}(\mu,\vec u)=c_{l,l'}(\vec u)-\sum_{\delta=0}^{K-1}\mu_{\delta}\,d_{l,l',\delta}(\vec u)$ with the fixed
coefficient functions $c_{l,l'}=\sum_{i=l'+1}^{l}a_i$ and $d_{l,l',\delta}=\sum_{i=l'+1}^{l}b_{\delta,i}$ (and $\Delta_{l,l'}=-\Delta_{l',l}$).
Thus, as $\mu$ ranges over $M$, the function $\vec u\mapsto\Delta_{l,l'}(\mu,\vec u)$ ranges within the fixed
$(K{+}1)$-dimensional vector space of functions spanned by $\{c_{l,l'},d_{l,l',0},\dots,d_{l,l',K-1}\}$. By the
sub-graph criterion for finite-dimensional function spaces \citep[Lemma~2.6.15]{vanderVaartWellner1996}, the class
$\{\{\Delta_{l,l'}(\mu,\cdot)\ge 0\}:\mu\in M\}$ is VC of index at most $K+3$. Since $\{l^*(\mu,\cdot)\ge m\}$ is a fixed
Boolean combination of at most $\binom{K+1}{2}$ such sets, $\mathcal C_m$ is VC \citep[Lemma~2.6.17]{vanderVaartWellner1996}. Hence
$\mathcal F_\gamma$ is a VC-subgraph class with envelope $1$; VC classes have polynomially bounded uniform covering
numbers \citep[Thm.~2.6.7]{vanderVaartWellner1996} and are therefore Donsker \citep[Thm.~2.5.2]{vanderVaartWellner1996}.
\end{proof}

\subsubsection*{Consistency and the $\sqrt N$-rate of the Monte Carlo solution}

\begin{theoremS}{C.11}[Root-$N$ error bounds and the centred limit]
\label{res:C11}
Suppose Assumptions~\ref{as:assumption3}--\ref{as:assumption5} and~\ref{res:C9} hold, together with strict complementarity at the dual optimum $\mu^*$
(so $F_\gamma(\mu^*)<\alpha$ for $\gamma\notin A$, where $A=\{\gamma:\mu^*_\gamma>0\}$) and, in addition to the local single-crossing condition of
Corollary~\ref{res:C8} (hence its local Lipschitz and symmetric-difference bounds), that $F=(F_0,\dots,F_{K-1})$ is $C^1$ near $\mu^*$. Let $\hat\mu$ be the output of SPOT (Algorithm~\ref{alg:general_k}) on the
Monte Carlo estimators $\hat F_\gamma$ of \eqref{eq:fhat_correct}; write $\hat\Phi_{N,\gamma}(\mu):=\min\{\mu_\gamma,\,\alpha-\hat F_\gamma(\mu)\}$
for the empirical complementarity residual and $r_N:=\max_\gamma\lvert\hat\Phi_{N,\gamma}(\hat\mu)\rvert$ for its achieved value. Let $\delta_N$ denote the bisection tolerance and suppose $\sup_N\delta_N\le\bar\delta<\infty$.

\emph{(a) Root-$N$ bounds under the implemented stopping rule.} If $r_N=O_p(N^{-1/2})$ (the regime of the implemented rule, whose
residual target is $\{\alpha(1-\alpha)/N\}^{1/2}$), then
$\lVert\hat\mu-\mu^*\rVert=O_p(N^{-1/2})$; in particular $\hat\mu\xrightarrow{p}\mu^*$,
$\hat\mu_\gamma=O_p(N^{-1/2})$ for every $\gamma\notin A$, and $F_\gamma(\hat\mu)=\alpha+O_p(N^{-1/2})$ for $\gamma\in A$ while
$F_\gamma(\hat\mu)<\alpha$ with probability tending to one for $\gamma\notin A$.

\emph{(b) Centred limit under an idealised stopping rule.} If moreover $r_N=o_p(N^{-1/2})$ (a stronger output condition than the implemented rule, imposed for the centred limit rather than proved to be attained; see the remark following the theorem), then $\hat\mu_\gamma=o_p(N^{-1/2})$ for every $\gamma\notin A$, the thresholded
set $\{\gamma:\hat\mu_\gamma>\tau_N\}$ equals $A$ with probability tending to one for any $\tau_N\to0$ with $\sqrt N\,\tau_N\to\infty$, and,
with $J=[\partial F_\gamma/\partial\mu_\delta(\mu^*)]_{\gamma,\delta\in A}$,
\[
\sqrt N(\hat\mu_A-\mu^*_A)=-\,J^{-1}\sqrt N\,(\hat F_A-F_A)(\mu^*)+o_p(1)\ \rightsquigarrow\ \mathcal N\!\bigl(0,\,J^{-1}\Sigma J^{-\top}\bigr),
\]
where $\Sigma=\lim_N N\,\mathrm{Cov}\{(\hat F_\gamma(\mu^*))_{\gamma\in A}\}$ has diagonal $\Sigma_{\gamma\gamma}=\alpha(1-\alpha)$.
\end{theoremS}

\noindent Part~(a), together with the uniform Monte Carlo bound of Lemma~\ref{res:C10}, is Theorem~\ref{thm:mc_accuracy} of the main paper; part~(b) supplies the centred Gaussian limit and support recovery mentioned there.

\noindent\emph{Remark (the idealised residual).} The condition $r_N=o_p(N^{-1/2})$ in part (b) is a stronger output condition than the implemented one-standard-error stopping rule; it is imposed for the centred limit, not proved to be attained by the recursion, and a finite-sample diagnostic cannot certify an asymptotic rate. It refers to the full-vector residual $r_N=\max_\gamma\lvert\hat\Phi_{N,\gamma}(\hat\mu)\rvert$ recomputed at the final iterate, which bounds all coordinates simultaneously (avoiding any per-coordinate bound that later sweeps could perturb).

\begin{proof}
Encode the Karush--Kuhn--Tucker conditions for $\min_{\mu\ge0}h(\mu)$ ($\mu_\gamma\ge0$, $\alpha-F_\gamma(\mu)\ge0$,
$\mu_\gamma\{\alpha-F_\gamma(\mu)\}=0$) by the complementarity map $\Phi_\gamma(\mu)=\min\{\mu_\gamma,\ \alpha-F_\gamma(\mu)\}$,
so that $\hat\Phi_{N}$ is its plug-in version. Since $\partial h/\partial\mu_\gamma=\alpha-F_\gamma$
(Corollary~\ref{res:C2}), $\Phi(\mu)=0$ iff $\mu$ is a KKT point iff, $h$ being convex, $\mu$ minimises the dual; by Theorem~\ref{res:C6} that
minimiser is unique, $\mu^*$.

\emph{Boundedness.} For each $\gamma$ choose a deterministic $U_\gamma$ with $F_\gamma(U_\gamma;\vec 0_{-\gamma})<\alpha/2$, possible
because $F_\gamma(\cdot\,;\vec 0_{-\gamma})\to0$ (proof of Corollary~\ref{res:corollary1}); let the bisection tolerance satisfy $\sup_N\delta_N\le\bar\delta<\infty$
and define the deterministic box $\mathcal B=\prod_\gamma[0,U_\gamma+\bar\delta]$. Then $\mu^*\in\mathcal B$: its inactive coordinates are zero, while an active $\mu^*_\gamma>U_\gamma$ would give, by monotonicity (Theorem~\ref{res:theorem2}), $F_\gamma(\mu^*)\le F_\gamma(U_\gamma;\vec 0_{-\gamma})<\alpha/2$, contradicting $F_\gamma(\mu^*)=\alpha$. By the pathwise monotonicity of Theorem~\ref{res:theorem2} in \emph{all}
coordinates, $\hat F_\gamma(U_\gamma;\mu_{-\gamma})\le\hat F_\gamma(U_\gamma;\vec 0_{-\gamma})\le F_\gamma(U_\gamma;\vec 0_{-\gamma})+
\sup_{\mathcal B}\lvert\hat F_\gamma-F_\gamma\rvert<\alpha$ with probability tending to one (Lemma~\ref{res:C10}), uniformly in $\mu_{-\gamma}\ge0$,
so every bisection crossing for coordinate $\gamma$ lies in $[0,U_\gamma]$; the returned midpoint can exceed the crossing by at most $\delta_N\le\bar\delta$, so all iterates, including $\hat\mu$, lie in $\mathcal B$ with probability tending to one.

\emph{(a).} As $b\mapsto\min\{a,b\}$ is $1$-Lipschitz,
$\sup_{\mathcal B}\lVert\hat\Phi_N-\Phi\rVert_\infty\le\max_\gamma\sup_{\mathcal B}\lvert\hat F_\gamma-F_\gamma\rvert=O_p(N^{-1/2})$
by Lemma~\ref{res:C10}, whence $\lVert\Phi(\hat\mu)\rVert_\infty\le r_N+O_p(N^{-1/2})=O_p(N^{-1/2})$.
On $\mathcal B$, $\Phi$ is continuous with unique zero $\mu^*$, so
$\inf\{\lVert\Phi(\mu)\rVert_\infty:\mu\in\mathcal B,\ \lVert\mu-\mu^*\rVert\ge\varepsilon\}>0$ for every $\varepsilon>0$ and
$\lVert\Phi(\hat\mu)\rVert_\infty\xrightarrow{p}0$ forces $\hat\mu\xrightarrow{p}\mu^*$.
Near $\mu^*$ the map $\Phi$ admits a local error bound: writing $x=\mu-\mu^*$, for $\gamma\notin A$ the slack
$\alpha-F_\gamma(\mu)$ is bounded away from zero, so $\Phi_\gamma(\mu)=\mu_\gamma=x_\gamma$, while for $\gamma\in A$,
$\mu_\gamma$ is bounded away from zero and the $C^1$ expansion gives
$\Phi_\gamma(\mu)=\alpha-F_\gamma(\mu)=-[Jx_A]_\gamma-[Bx_{A^c}]_\gamma+o(\lVert x\rVert)$, with $B$ the (bounded) matrix of inactive
partial derivatives. Writing $I=A^c$, the inactive block gives $x_I=\Phi_I(\mu)$ and the active block $\Phi_A(\mu)=-Jx_A-Bx_I+o(\lVert x\rVert)$; since $J$ is nonsingular (Assumption~\ref{res:C9}),
\[
\lVert x_A\rVert\le\lVert J^{-1}\rVert\bigl\{\lVert\Phi_A(\mu)\rVert+\lVert B\rVert\,\lVert\Phi_I(\mu)\rVert+o(\lVert x\rVert)\bigr\}.
\]
Combining with $\lVert x_I\rVert=\lVert\Phi_I(\mu)\rVert$ and shrinking the neighbourhood to absorb the $o(\lVert x\rVert)$ remainder yields constants $C,r>0$ such that $\lVert\mu-\mu^*\rVert\le C\lVert\Phi(\mu)\rVert$ whenever $\lVert\mu-\mu^*\rVert_\infty\le r$; equivalently $\lVert\Phi(\mu)\rVert_\infty\ge c\lVert\mu-\mu^*\rVert_\infty$ on the same neighbourhood, with $c=C^{-1}$ up to the fixed constants relating the $\ell_2$ and $\ell_\infty$ norms on $\mathbb R^K$.
Hence $\lVert\hat\mu-\mu^*\rVert=O_p(N^{-1/2})$.
For $\gamma\notin A$ the empirical slack $\alpha-\hat F_\gamma(\hat\mu)\xrightarrow{p}\alpha-F_\gamma(\mu^*)>0$, so with probability
tending to one $\hat\Phi_{N,\gamma}(\hat\mu)=\hat\mu_\gamma$ and therefore $\hat\mu_\gamma\le r_N$; this gives
$\hat\mu_\gamma=O_p(N^{-1/2})$ for every $\gamma\notin A$.
Finally, local Lipschitz continuity of $F_\gamma$ (Corollary~\ref{res:C8}) gives
$\lvert F_\gamma(\hat\mu)-\alpha\rvert=\lvert F_\gamma(\hat\mu)-F_\gamma(\mu^*)\rvert=O_p(N^{-1/2})$ for $\gamma\in A$, and
$F_\gamma(\hat\mu)\xrightarrow{p}F_\gamma(\mu^*)<\alpha$ for $\gamma\notin A$.

\emph{(b).} Under $r_N=o_p(N^{-1/2})$ the inactive bound of (a) sharpens to $\hat\mu_\gamma=o_p(N^{-1/2})$ for every $\gamma\notin A$;
since $\hat\mu_\gamma\xrightarrow{p}\mu^*_\gamma>0$ for $\gamma\in A$, any threshold $\tau_N\to0$ with $\sqrt N\tau_N\to\infty$
separates the two groups with probability tending to one, which is the stated support recovery.
For the expansion, note that with probability tending to one $\hat\mu_\gamma>r_N$ for $\gamma\in A$, so
$\hat\Phi_{N,\gamma}(\hat\mu)=\alpha-\hat F_\gamma(\hat\mu)$ and $\lvert\hat F_\gamma(\hat\mu)-\alpha\rvert\le r_N=o_p(N^{-1/2})$.
By Lemma~\ref{res:C10} the class $\{\psi_\gamma(\mu,\cdot):\gamma\in A,\ \mu\in\mathcal B\}$ is Donsker, so
$\mu\mapsto\sqrt N(\hat F_A-F_A)(\mu)$ is stochastically equicontinuous and, with $\hat\mu\xrightarrow{p}\mu^*$,
\[
\hat F_A(\hat\mu)-F_A(\hat\mu)=(\hat F_A-F_A)(\mu^*)+o_p(N^{-1/2}).
\]
Equicontinuity is in the $L_2(P)$ semimetric: for $\mu,\mu'$ near $\mu^*$,
$\mathbb E\{\psi_\gamma(\mu,\vec X)-\psi_\gamma(\mu',\vec X)\}^2=\mathbb P(A_\mu\,\triangle\,A_{\mu'})\le C\lVert\mu-\mu'\rVert_1\to0$,
because the law of $\vec u$ under $\vec h_\gamma$ has a bounded density on $Q$ (Assumptions~\ref{as:assumption4}--\ref{as:assumption5}) and the rejection
regions differ on a set of Lebesgue measure $O(\lVert\mu-\mu'\rVert_1)$ by the argument of Corollary~\ref{res:C8}; the classes are
pointwise measurable, being indicators of finite Boolean combinations of sets varying continuously with the finite-dimensional
parameter $\mu$, so no measurability difficulties arise.
By the $C^1$ assumption (with derivative existence guaranteed there, not by Corollary~\ref{res:C8}, which supplies only Lipschitz continuity),
\[
F_A(\hat\mu)-\alpha\mathbf 1=J(\hat\mu_A-\mu^*_A)+B\,\hat\mu_{A^c}+o_p(\lVert\hat\mu-\mu^*\rVert)
=J(\hat\mu_A-\mu^*_A)+o_p(N^{-1/2}),
\]
using $\lVert\hat\mu_{A^c}\rVert=o_p(N^{-1/2})$ and, from (a), $\lVert\hat\mu-\mu^*\rVert=O_p(N^{-1/2})$.
Combining the three displays with $\lvert\hat F_A(\hat\mu)-\alpha\mathbf 1\rvert\le r_N=o_p(N^{-1/2})$ gives
$J(\hat\mu_A-\mu^*_A)=-(\hat F_A-F_A)(\mu^*)+o_p(N^{-1/2})$, i.e., in the manner of the Z-estimator master theorem
\citep[Thm.~5.21]{vanderVaart1998},
\begin{align*}
\sqrt N(\hat\mu_A-\mu^*_A)&=-J^{-1}\sqrt N(\hat F_A-F_A)(\mu^*)+o_p(1)
\rightsquigarrow\ \mathcal N(0,J^{-1}\Sigma J^{-\top})
\end{align*}
by the multivariate central limit theorem. Finally $\psi_\gamma(\mu^*,\cdot)$ is
Bernoulli$(\alpha)$ for $\gamma\in A$, so $\Sigma_{\gamma\gamma}=\alpha(1-\alpha)$.
\end{proof}

\noindent\emph{Remark (exact zeros).} Although part (b) only bounds the inactive multipliers, the implemented update sets
$\hat\mu_\gamma=0$ exactly whenever $\hat F_\gamma(0;\cdot)\le\alpha$ at the time coordinate $\gamma$ is visited, so in practice the
inactive coordinates are typically returned as exact zeros; the theorem does not rely on this.

\noindent\emph{Note on the covariance $\Sigma$.} The diagonal entries $\Sigma_{\gamma\gamma}=\alpha(1-\alpha)$ are unambiguous. The
off-diagonals $\Sigma_{\gamma\gamma'}$ ($\gamma\ne\gamma'$, both active) depend on how the Monte Carlo draws are coupled across the
distinct configurations $\vec h_\gamma,\vec h_{\gamma'}$: independent batches make $\Sigma$ diagonal, while a shared uniform stream
(common random numbers across configurations) induces nonzero off-diagonals. Either way the sandwich $J^{-1}\Sigma J^{-\top}$ is
correct once the coupling (hence $\Sigma$) is specified. Our implementation draws the configuration batches independently,
so $\Sigma$ is diagonal with $\Sigma_{\gamma\gamma}=\alpha(1-\alpha)$. Corollary~\ref{res:C12} uses only the
$\gamma$-marginal rate and is unaffected by the coupling.

\begin{corollaryS}{C.12}[Realised family-wise error rate and conservative targeting]
\label{res:C12}
Under the conditions of Theorem~\ref{res:C11}(a): for each $\gamma\in A$, $F_\gamma(\hat\mu)=\alpha+O_p(N^{-1/2})$, and for $\gamma\notin A$,
$F_\gamma(\hat\mu)<\alpha$ with probability tending to one. Fix $0<\eta<1/2$ and let $z_\eta$ be defined by $\Pr(Z>z_\eta)=\eta$, $Z\sim N(0,1)$. If the bisection instead targets the reduced level
$\alpha_N=\alpha-z_\eta\{\alpha(1-\alpha)/N\}^{1/2}$, with the reduced-target complementarity residual
\[
\hat\Phi_{N,\gamma}^{(\alpha_N)}(\mu)=\min\{\mu_\gamma,\ \alpha_N-\hat F_\gamma(\mu)\},\qquad
r_N^{(\alpha_N)}=\lVert\hat\Phi_N^{(\alpha_N)}(\hat\mu)\rVert_\infty,
\]
satisfying $r_N^{(\alpha_N)}=o_p(N^{-1/2})$, then
$\mathrm{pr}\{F_\gamma(\hat\mu)\le\alpha\}\to1-\eta$ for each $\gamma\in A$; replacing $\eta$ by $\eta/K$ gives simultaneous control,
\[
\liminf_{N\to\infty}\mathrm{pr}\{F_\gamma(\hat\mu)\le\alpha\ \text{for all}\ \gamma\in A\}\ge1-\eta.
\]
Moreover, if $\tilde\mu$ is a nominal-target output satisfying $r_N=O_p(N^{-1/2})$ as in Theorem~\ref{res:C11}(a), the power cost of the reduced target is
\[
\lvert\Pi_K(D^{\hat\mu})-\Pi_K(D^{\tilde\mu})\rvert=O_p(N^{-1/2}).
\]
\end{corollaryS}

\noindent Corollary~\ref{res:C12} is Corollary~\ref{cor:conservative} of the main paper.

\begin{proof}
By Corollary~\ref{res:C8}, $F_\gamma$ is Lipschitz near $\mu^*$, so $F_\gamma(\hat\mu)-F_\gamma(\mu^*)=O_p(\lVert\hat\mu-\mu^*\rVert)=O_p(N^{-1/2})$
by Theorem~\ref{res:C11}(a); as $F_\gamma(\mu^*)=\alpha$ ($\gamma\in A$) this is the first claim, and $F_\gamma(\mu^*)<\alpha$ gives the second.
For the targeted level, let $\hat\mu$ now denote the reduced-target output, that is SPOT (Algorithm~\ref{alg:general_k}) run with level $\alpha_N$, whose achieved residual is $r_N^{(\alpha_N)}=\lVert\hat\Phi_N^{(\alpha_N)}(\hat\mu)\rVert_\infty=o_p(N^{-1/2})$. Since $\alpha_N\to\alpha$, we have $\alpha_N>\alpha/2$ for all sufficiently large $N$, so repeating the boundedness argument of Theorem~\ref{res:C11} with target $\alpha_N$ gives $\hat\mu\in\mathcal B$ with probability tending to one. Evaluating the population complementarity map $\Phi$ at the nominal level $\alpha$ at this $\hat\mu$, the $1$-Lipschitz dependence of $b\mapsto\min\{a,b\}$ on each argument gives
\[
\lVert\Phi(\hat\mu)\rVert_\infty\le r_N^{(\alpha_N)}+\lvert\alpha_N-\alpha\rvert+\max_\gamma\sup_{\mathcal B}\lvert\hat F_\gamma-F_\gamma\rvert=O_p(N^{-1/2}),
\]
the three terms being $o_p(N^{-1/2})$, $O(N^{-1/2})$ (as $\lvert\alpha_N-\alpha\rvert=z_\eta\{\alpha(1-\alpha)/N\}^{1/2}$), and $O_p(N^{-1/2})$ (Lemma~\ref{res:C10}) respectively. Since $\Phi$ is continuous on the fixed box $\mathcal B$ with unique zero $\mu^*$, the compact-separation argument of Theorem~\ref{res:C11}(a) forces $\hat\mu\xrightarrow{p}\mu^*$, and its local error bound then upgrades this to $\lVert\hat\mu-\mu^*\rVert=O_p(N^{-1/2})$, keeping it in the regime where the expansions below apply.
Then, for active $\gamma$, $\hat\mu_\gamma\xrightarrow{p}\mu^*_\gamma>0$, so with probability tending to one the reduced-target residual reduces to $\hat\Phi_{N,\gamma}^{(\alpha_N)}(\hat\mu)=\alpha_N-\hat F_\gamma(\hat\mu)$, whence $\hat F_\gamma(\hat\mu)=\alpha_N+o_p(N^{-1/2})$ (from $r_N^{(\alpha_N)}=o_p(N^{-1/2})$); so by stochastic equicontinuity (Lemma~\ref{res:C10}),
$F_\gamma(\hat\mu)=\alpha_N+\{F_\gamma-\hat F_\gamma\}(\hat\mu)+o_p(N^{-1/2})=\alpha_N-\{\hat F_\gamma(\mu^*)-\alpha\}+o_p(N^{-1/2})$. Writing
$Z_N=\hat F_\gamma(\mu^*)-\alpha$, so that $\sqrt N\,Z_N/\{\alpha(1-\alpha)\}^{1/2}\rightsquigarrow\mathcal N(0,1)$,
\[
\mathrm{pr}\{F_\gamma(\hat\mu)\le\alpha\}=\mathrm{pr}\{Z_N\ge\alpha_N-\alpha+o_p(N^{-1/2})\}
=\mathrm{pr}\Bigl\{\tfrac{\sqrt N\,Z_N}{\{\alpha(1-\alpha)\}^{1/2}}\ge-z_\eta+o_p(1)\Bigr\}\to1-\eta,
\]
since $\alpha_N-\alpha=-z_\eta\{\alpha(1-\alpha)/N\}^{1/2}$. This makes precise the conservative reduced-level targeting of \S\ref{sec:mc_error} of the main paper.
For its power cost, let $\tilde\mu$ denote the nominal-target empirical output of SPOT (Algorithm~\ref{alg:general_k} at level $\alpha$) and $\hat\mu$ the reduced-target output; both lie within $O_p(N^{-1/2})$ of $\mu^*$ by Theorem~\ref{res:C11}(a) and the bound above, so
\[
\lVert\tilde\mu-\hat\mu\rVert\le\lVert\tilde\mu-\mu^*\rVert+\lVert\hat\mu-\mu^*\rVert=O_p(N^{-1/2}).
\]
The two induced policies therefore differ on a set of Lebesgue measure $O_p(\lVert\tilde\mu-\hat\mu\rVert)=O_p(N^{-1/2})$ (Lemma~\ref{res:C7} and the single-crossing identity of Corollary~\ref{res:C8}), and with the power weights $a_k$ bounded under Assumption~\ref{as:assumption5} the average-power difference is $O_p(N^{-1/2})$.
\end{proof}

\noindent\emph{Remark (simultaneity).} Corollary~\ref{res:C12} controls each active configuration marginally; simultaneous
$1-\eta$ control over all $\gamma\in A$ follows by replacing $\eta$ with $\eta/K$ (a union bound over the at most $K$
active constraints, which gives simultaneous control at level at least $1-\eta$), as noted in the
main paper, or from the joint Gaussian limit of Theorem~\ref{res:C11} on calibrating a joint (max-)Gaussian critical value for the active coordinates.

\section{SPOT implementation, computational complexity, and convergence diagnostics}
\label{sec:supp_impl}

\subsection{Bisection subroutine}
The bisection subroutine for updating coordinate $\gamma$ in SPOT (Algorithm~\ref{alg:general_k}) proceeds as follows.
Given the current vector $\mu$ with all coordinates except $\gamma$ fixed, and the configuration-$\gamma$ batch of $N$ Monte Carlo samples $\{\vec{u}^{(n)}\}_{n=1}^N$, drawn once before the outer loop and reused for every update of coordinate $\gamma$ across all sweeps (common random numbers):
\begin{enumerate}
\item[(a)] Evaluate $\hat F_\gamma(0)$. If $\hat F_\gamma(0)\le\alpha$, return $0$.
\item[(b)] Otherwise set $\mathit{lo} = 0$, $\mathit{hi} = U_{\max}$; while $\hat{F}_\gamma(\mathit{hi}) > \alpha$, set $\mathit{lo} = \mathit{hi}$ and $\mathit{hi} = 2\mathit{hi}$. This terminates because $\hat{F}_\gamma(\mu_\gamma) \to 0$ as $\mu_\gamma \to \infty$, and leaves $\hat F_\gamma(\mathit{lo})>\alpha\ge\hat F_\gamma(\mathit{hi})$.
\item[(c)] \textbf{While} $\mathit{hi} - \mathit{lo} > \delta$: set $\mathit{mid} = (\mathit{lo} + \mathit{hi})/2$ and compute $\hat{F}_\gamma(\mathit{mid})$ from the $N$ samples; if $\hat{F}_\gamma(\mathit{mid}) > \alpha$ set $\mathit{lo} = \mathit{mid}$, else set $\mathit{hi} = \mathit{mid}$.
\item[(d)] Return $(\mathit{lo} + \mathit{hi})/2$.
\end{enumerate}
Because Theorem~\ref{thm:general_k_monotonicity} of the main paper applies pathwise to every $\vec{u}^{(n)}$, the use of common random numbers ensures that $\hat{F}_\gamma(\mu_\gamma)$ is non-increasing in $\mu_\gamma$ for any fixed batch, so the bisection is well-defined even with finite Monte Carlo samples.

\medskip\noindent\textit{Per-iteration cost.}
Each outer iteration performs $K$ bisection updates, each of $O(\log(U_{\rm final}/\delta))$ function evaluations, where $U_{\rm final}$ is the final upper bracket after any doubling.
The coefficient tensors $a_k$ and $b_{\gamma,k}$ are precomputed from the fixed batch at cost $O(K^3 N)$ per configuration, $O(K^4 N)$ in total; when memory permits they are cached across all sweeps at $O(K^3 N)$ storage (the $K$ configuration tensors $b_{\gamma,k}$, each of size $K^2 N$), after which each bisection evaluation costs only $O(K^2 N)$ via the linear combination $R_i = a_i - \sum_\gamma \mu_\gamma b_{\gamma,i}$, so each outer iteration costs $O(K^3 N \log(U_{\rm final}/\delta))$.
When they cannot be cached and are recomputed each sweep, the per-iteration cost rises to $O(K^4 N + K^3 N \log(U_{\rm final}/\delta))$.

\medskip\noindent\textit{Implementation settings.}
All runs use the upper bracket $U_{\max} = 50$ and the sweep cap $T_{\max}$ set to $100$ for the scaling and sensitivity runs, $400$ for the $112$-setting robustness grid, and $200$ for the applications; only one out-of-model Student-$t$ run in the grid reached its cap.

\subsection{Convergence diagnostics}
\label{sec:supp_convergence}
Figure~\ref{fig:supp_convergence} plots the convergence trajectory $\|\mu^{(t)} - \mu^{(t-1)}\|_2$ on a logarithmic scale for several $K$ values under the truncated normal model at $\theta = -2.0$.
With each configuration's Monte Carlo batch fixed, the recursion is deterministic conditional on the batches. In the plotted runs, the successive-sweep changes decrease approximately geometrically until the stopping tolerance is met, a descriptive counterpart of the population linear rate in Proposition~\ref{prop:linear_rate} rather than a consequence of fixing the batches.

\begin{figure}[htbp]
\centering
\includegraphics[width=25pc]{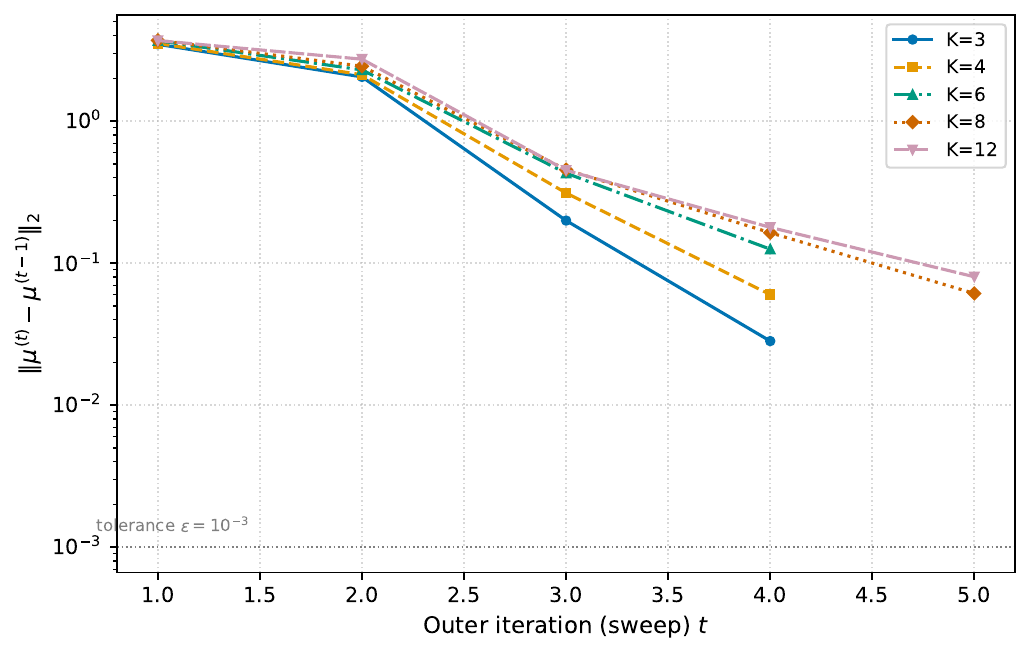}
\caption{Convergence of SPOT (Algorithm~\ref{alg:general_k}): $\|\mu^{(t)} - \mu^{(t-1)}\|_2$ versus outer iteration $t$ on a logarithmic scale, for several $K$ values under the truncated normal model at $\theta = -2.0$.
The approximately linear decrease is consistent with geometric convergence, with successive-iterate ratios of about $0.1$ to $0.5$ after the initial transient.}
\label{fig:supp_convergence}
\alttext{Successive multiplier changes decrease approximately geometrically for all five hypothesis counts.}
\end{figure}
\FloatBarrier

Table~\ref{tab:supp_timing} reports the wall-clock time and outer-iteration count for each $K$.
With bisection tolerance $\delta = 10^{-4}$ and outer tolerance $\varepsilon = 10^{-3}$, the residual criterion is met within $4$ to $5$ sweeps for every $K \le 12$, and the successive-iterate ratios stay bounded well below one (Figure~\ref{fig:supp_convergence}), a descriptive counterpart of the linear rate of Proposition~\ref{prop:linear_rate} rather than a verification of it.
Across the full $112$-setting grid the criterion was met in $92$ settings (median $5$ sweeps among these $92$, and $7$ sweeps across all $112$).
The remaining $20$, all in the out-of-model Student-$t$ family, terminated at residuals between one and three Monte Carlo standard errors ($7 \times 10^{-4}$ to $2 \times 10^{-3}$): $19$ by the iterate-change safeguard and one at the sweep cap.
Of these, the $92$ meeting the implemented one-standard-error stopping threshold satisfy that finite-sample diagnostic; the $20$ safeguard exits did not, though their residuals of one to three standard errors are themselves of $N^{-1/2}$ scale, and a single finite-$N$ run neither establishes nor refutes the asymptotic condition $r_N = O_p(N^{-1/2})$. Formal application of Theorem~\ref{res:C11} additionally requires an asymptotic sequence together with all its model and local regularity conditions (strict complementarity, single crossing, smoothness, and a nonsingular $J$), which we do not verify here, most of these settings being out-of-model.
For near-null alternatives the empirical targets can jump across $\alpha$ between iterates, so the residual need not fall below the Monte Carlo floor even as the sweeps stabilise; we therefore classify these as slow or coupled empirical terminations rather than as evidence of a particular limit. Per-setting diagnostics are stored with the results.

\section{Simulation models and scope of assumptions}
\label{sec:supp_models}

The simulation study of \S\ref{sec:simulations} of the main paper uses four models spanning different tail behaviours and violations of the boundedness assumption.
\begin{enumerate}
\item[(a)] \emph{Truncated normal.}  $H_{0k}: X_k \sim N(0,1)$ versus $H_{Ak}: X_k \sim N(\theta,1)$, both truncated to $[-6, 6]$, with $\theta \in \{-1, -1.5, -2, -2.5, -3, -3.5, -4\}$.
The truncation makes $g$ bounded and bounded away from zero on $[0,1]$, giving Assumptions~\ref{as:assumption5} and~\ref{as:assumption4}. Assumption~\ref{as:assumption3} holds as well: $g(u) = C\exp\{\theta\,F_0^{-1}(u) - \theta^2/2\}$ for a constant $C > 0$, so $g'(u) = \theta\,g(u)/f_0\{F_0^{-1}(u)\}$, and on the compact truncated domain $g$ is bounded below and $f_0$ bounded above, whence $|g'(u)| \ge |\theta|\,(\inf g)/(\sup f_0) > 0$; since $g$ is monotone this gives the lower-Lipschitz bound $|g(u) - g(u')| \ge c_3|u - u'|$.
\item[(b)] \emph{Mixture normal.}  $H_{Ak}: X_k \sim 0.5\,N(\theta,1) + 0.5\,N(-\theta,1)$, with $p$-values $u_k = 2\Phi(-|X_k|)$ and $\theta$ on the same grid $\{-1, -1.5, \ldots, -4\}$ as in~(a).
This is a two-sided, multimodal alternative.
The density $g(u) = \exp(-\theta^2/2)\cosh\{\theta\,\Phi^{-1}(1 - u/2)\}$ is unbounded as $u \to 0$, violating Assumption~\ref{as:assumption5}; it also violates Assumption~\ref{as:assumption3}, having vanishing slope $g'(1) = 0$ at $u = 1$, so no global lower-Lipschitz constant exists.
\item[(c)] \emph{Student-$t$ (out-of-model).}  $H_{Ak}: X_k \sim t_{\rm df}$, with ${\rm df} \in \{2, 4, 6, \ldots, 20\}$, representing heavy-tailed alternatives.
The uncapped likelihood ratio $g(u) = f_t\{\Phi^{-1}(1-u/2); {\rm df}\}/\phi\{\Phi^{-1}(1-u/2)\}$ is unbounded near $u = 0$, violating Assumption~\ref{as:assumption5}; it is also U-shaped rather than non-increasing, decreasing only for $u < 2\{1 - \Phi(1)\} \approx 0.32$, so the reduction to $p$-value-ordered policies does not apply and the computed policy is not guaranteed optimal for this model; we retain it purely as an out-of-model stress test.
For numerical stability, SPOT uses the working ratio $\min\{g(u),\exp(45)\}$ while evaluation samples remain drawn from the uncapped $t_{\rm df}$ alternative. Thus the capped function is neither the exact likelihood ratio nor a normalised alternative density; the cap affects about $1\%$ of alternative samples at ${\rm df} = 2$, $0.05\%$ at ${\rm df} = 4$, and a negligible fraction beyond.
\item[(d)] \emph{Beta.}  Under $H_{Ak}$, $u_k \sim {\rm Beta}(\theta, 1)$ with $\theta \in \{0.8, 0.6, 0.4, 0.2\}$.
The density $g(u) = \theta u^{\theta - 1}$ is decreasing and unbounded as $u \to 0$ for these $\theta < 1$, so the model lies outside Assumption~\ref{as:assumption5}.
\end{enumerate}

These $28$ model--parameter combinations, evaluated at each of $K \in \{3, 4, 5, 6\}$, form the $112$-setting robustness grid of \S\ref{sec:sim_robustness} of the main paper.

Only the truncated normal satisfies all of Assumptions~\ref{as:assumption3}--\ref{as:assumption5}, so the global (population) convergence guarantee of Theorem~\ref{thm:algo_convergence} applies to it directly; the local linear rate requires the further conditions of Theorem~\ref{res:C6} (strict complementarity, local $C^1$ smoothness, and a nonsingular active Jacobian), and the Monte Carlo consistency statements those of Theorem~\ref{res:C11}, none of which we verify a priori for any specific $g$. The mixture, Student-$t$, and beta models, and the applications of \S\ref{sec:realdata} (which use the mixture model), are therefore numerical studies outside even the global assumptions. That SPOT nonetheless terminates at small residuals for all these models (\S\ref{sec:supp_convergence}) suggests the regularity assumptions are sufficient but not necessary for it to work.

\section{Sensitivity analyses}
\label{sec:supp_sensitivity}

\emph{Sensitivity to the significance level $\alpha$.}
Repeating the truncated normal experiment at $\theta = -2.0$, $K = 6$ with $\alpha \in \{0.01, 0.05, 0.10\}$ shows the power advantage over Hommel is stable across levels and largest at the most stringent one, where the tighter constraints make the joint allocation of the error budget most consequential; the full comparison, including the realised family-wise error rates, is reported in Table~\ref{tab:sensitivity_alpha} of the main paper (\S\ref{sec:sensitivity_N}).

\emph{Sensitivity to the Monte Carlo sample size.}
For $K = 6$ under the truncated normal at $\theta = -2.0$, we run SPOT with $N_{\rm opt} \in \{50{,}000,\; 100{,}000,\; 200{,}000\}$ and evaluate each resulting policy with a common evaluation sample of $N_{\rm eval} = 50{,}000$.
The resulting $\Pi_6$ values are $0.625$, $0.623$, and $0.625$ respectively, with maximum empirical FWER of $0.053$, $0.052$, and $0.052$.
The power variation is within Monte Carlo uncertainty, and the FWER variation is small and consistent with Monte Carlo plus optimisation error; this supports adequacy of $N_{\rm opt} = 100{,}000$ for this $K = 6$ setting.

\section{Additional power curves}
\label{sec:supp_power}

Figures~\ref{fig:supp_K3}--\ref{fig:supp_K6} show power curves ($\Pi_K$ and $\Pi_{\rm any}$) for $K \in \{3, 4, 5, 6\}$ across all four distributional models: truncated normal, mixture normal, Student-$t$, and beta.
The $K = 3$ and $K = 6$ cases also appear as Figures~\ref{fig:power_curves_K3} and~\ref{fig:power_curves_K6} of the main paper; they are repeated here so that the full sequence $K \in \{3, 4, 5, 6\}$ can be read together.
In every average-power panel (left column), the policy computed by SPOT (solid black line) attains the highest average power $\Pi_K$ among all procedures (Bonferroni, Holm, Hochberg, Hommel) across the entire signal range; on the secondary any-discovery metric $\Pi_{\rm any}$ (right column), which is not the optimised objective, the standard procedures can slightly exceed SPOT for the out-of-model Student-$t$ alternative at larger degrees of freedom (see \S\ref{sec:sim_robustness} of the main paper).

\begin{figure}[htbp]
\centering
\includegraphics[width=0.931\linewidth]{power_curves_K3.pdf}
\caption{Power curves at $K = 3$: average power $\Pi_3$ (left column) and any-discovery power $\Pi_{\rm any}$ (right column) for the truncated-normal, mixture-normal, Student-$t$, and beta alternatives (rows). This is Figure~\ref{fig:power_curves_K3} of the main paper. SPOT (solid black) attains the highest average power at every signal strength in all four models.}
\label{fig:supp_K3}
\alttext{Left panels: SPOT leads average power throughout. Right panels: SPOT leads except for near ties under Student t.}
\end{figure}

\begin{figure}[htbp]
\centering
\includegraphics[width=0.931\linewidth]{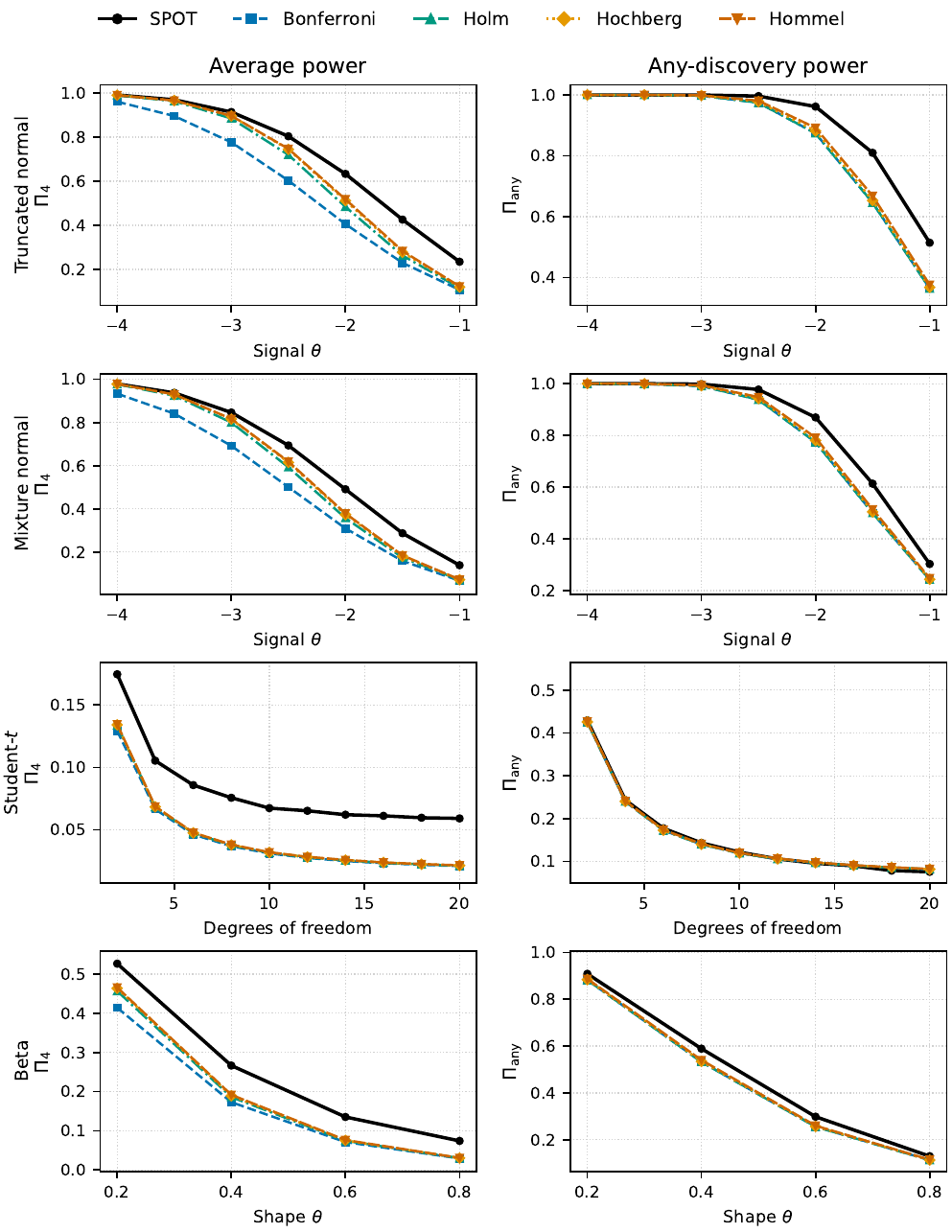}
\caption{Power curves at $K = 4$: average power $\Pi_4$ (left column) and any-discovery power $\Pi_{\rm any}$ (right column) for the truncated-normal, mixture-normal, Student-$t$, and beta alternatives (rows). SPOT (solid black) attains the highest average power at every signal strength in all four models.}
\label{fig:supp_K4}
\alttext{Left panels: SPOT leads average power throughout. Right panels: SPOT leads except for near ties under Student t.}
\end{figure}

\begin{figure}[htbp]
\centering
\includegraphics[width=0.931\linewidth]{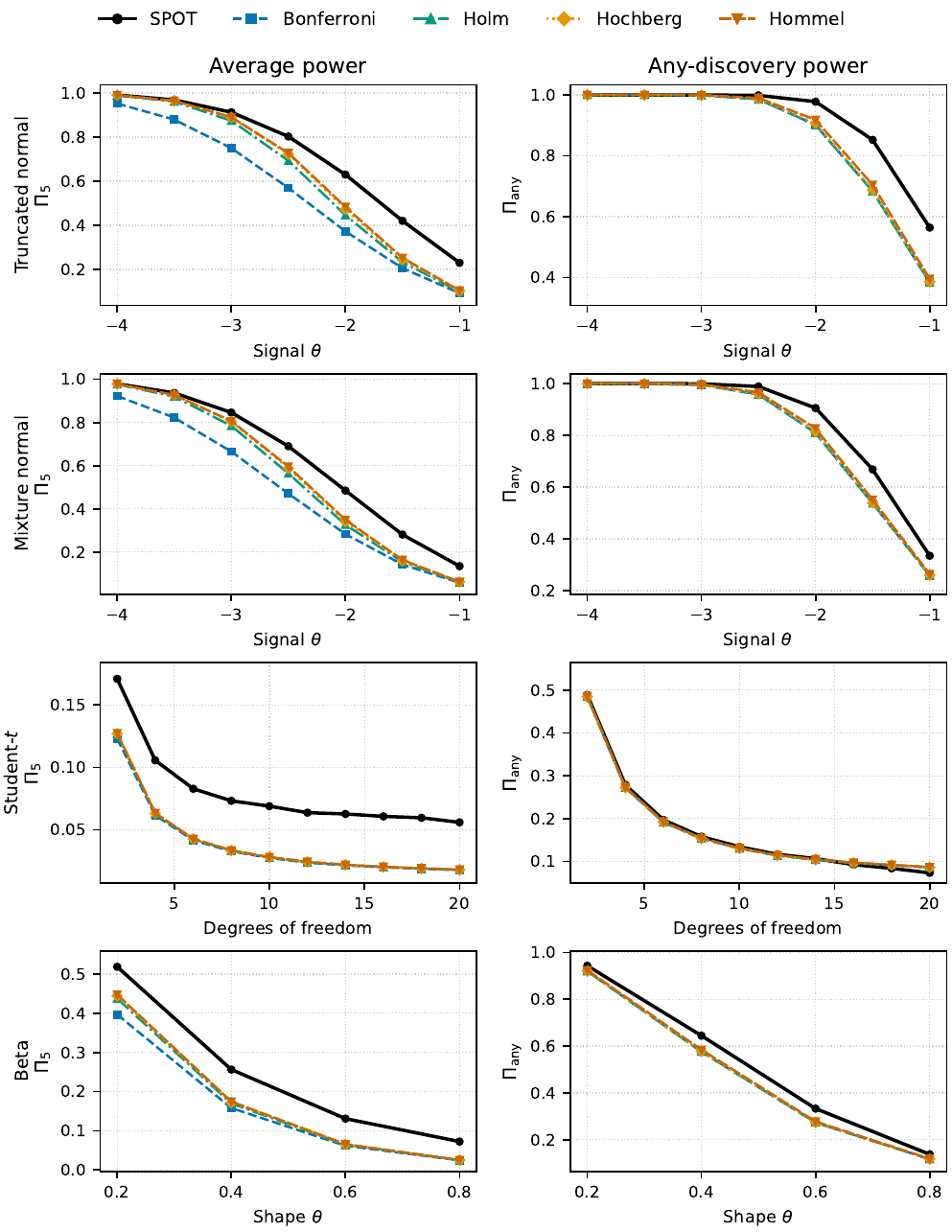}
\caption{Power curves at $K = 5$: average power $\Pi_5$ (left column) and any-discovery power $\Pi_{\rm any}$ (right column) for the truncated-normal, mixture-normal, Student-$t$, and beta alternatives (rows). SPOT (solid black) attains the highest average power at every signal strength in all four models.}
\label{fig:supp_K5}
\alttext{Left panels: SPOT leads average power throughout. Right panels: baselines approach or slightly surpass SPOT under Student t.}
\end{figure}

\begin{figure}[htbp]
\centering
\includegraphics[width=0.931\linewidth]{power_curves_K6.pdf}
\caption{Power curves at $K = 6$: average power $\Pi_6$ (left column) and any-discovery power $\Pi_{\rm any}$ (right column) for the truncated-normal, mixture-normal, Student-$t$, and beta alternatives (rows). This is Figure~\ref{fig:power_curves_K6} of the main paper. SPOT (solid black) attains the highest average power at every signal strength in all four models.}
\label{fig:supp_K6}
\alttext{Left panels: SPOT leads average power throughout. Right panels: baselines sometimes lead under Student t.}
\end{figure}

\clearpage
\section{Application details}
\label{sec:supp_apps}

This section collects the study-level tables and the secondary-application details for \S\ref{sec:realdata} of the main paper.

\subsection{Replicability of social science experiments}
Table~\ref{tab:supp_camerer} gives the full study-level rejection decisions for the $21$ Stage-1 replications of \citet{Camerer18}, and Table~\ref{tab:supp_camerer_sensitivity} the sensitivity of the rejection count to the assumed common effect $\theta$.
The policy returned by SPOT rejects nine studies: the seven rejected by Holm/Hochberg/Hommel (the six with $p = 0.001$ and Derex et al.\ at $p = 0.003$) plus the Karpicke--Blunt ($p = 0.006$) and Nishi et al.\ ($p = 0.011$) replications, the two additional rejections reflecting the strength-borrowing effect of \S\ref{sec:why_gap_grows} of the main paper.

At $\theta = 1.5$ the twelve rejected hypotheses coincide exactly with the twelve studies that replicated individually in Stage~1 of \citet{Camerer18} (every Stage-1 $p$-value below $0.05$ is in fact below $0.026$). The dependence of the decisions on the assumed $g$ argues for reporting such a sensitivity analysis whenever SPOT is used in practice.

\subsection{SPRINT blood pressure trial}
The Systolic Blood Pressure Intervention Trial \citep[SPRINT,][]{SPRINT15} randomised $9{,}361$ participants to intensive ($<120$ mmHg) versus standard ($<140$ mmHg) blood pressure targets.
We run SPOT with $K = 5$ on the five individual cardiovascular endpoints (excluding the primary composite, which overlaps with the individual components), with reported two-sided $p$-values: myocardial infarction ($p = 0.19$), non-MI acute coronary syndrome ($p = 0.99$), stroke ($p = 0.50$), heart failure ($p = 0.002$), and cardiovascular death ($p = 0.005$).
The reported $p$-values are two-sided, so we use the two-sided mixture-normal density with an assumed common standardised effect $\theta = 2.5$; the decisions are identical for $\theta = 2.0$ and $3.0$.
This application deliberately sits outside the product model in two respects: the five endpoints are measured on the same participants, so their $p$-values are dependent, and the endpoint-specific effects are heterogeneous (hazard ratios $0.83$, $1.00$, $0.89$, $0.62$, and $0.57$), whereas the model posits independent $p$-values with one common effect.
All standard methods reject heart failure and cardiovascular death ($p = 0.002$ and $0.005$); the policy returned by SPOT rejects exactly the same two endpoints, for every $\theta$ considered, because the null-like endpoints (acute coronary syndrome at $p = 0.99$, stroke at $p = 0.50$) drive the cumulative net benefit $\sum_{i=1}^{l} R_i$ negative beyond $l = 2$, so the borderline myocardial-infarction endpoint ($p = 0.19$) is not lifted.

\medskip\noindent\textit{Hierarchical variant.}
The hierarchical approach of \S\ref{sec:conclusion} of the main paper, suggested in the discussion of \citet{RHPA22}, accommodates such heterogeneity by design.
We partition the five endpoints into two groups, Group~A (heart failure, cardiovascular death; $K_A = 2$) and Group~B (myocardial infarction, acute coronary syndrome, stroke; $K_B = 3$), and run SPOT within each group at level $\alpha/2 = 0.025$.
For this illustration the grouping is data-informed; in practice the partition must be prespecified from prior clinical evidence, and even then the guarantee is model-based, since the endpoints are measured on the same participants.
SPOT returns both Group-A rejections and no Group-B rejections, matching the whole-family analysis while confining the exchangeability assumption to groups within which it is more plausible.

\subsection{Psychology replication studies}
The Reproducibility Project: Psychology \citep{OSC15} replicated $100$ studies from three psychology journals.
We select $K = 6$ social-psychology studies from the \emph{Journal of Personality and Social Psychology} with similar original effect sizes (approximately $r \in [0.21, 0.27]$): the replications of Albarrac\'{\i}n et al.\ ($p = 0.079$), Fischer et al.\ ($p = 0.141$), Shnabel and Nadler ($p = 0.234$), Centerbar et al.\ ($p = 0.322$), Correll ($p = 0.374$), and Cox et al.\ ($p = 0.469$), with replication sample sizes between $105$ and $200$; the $p$-values and sample sizes are from the project's public data file.
All six were originally significant ($p < 0.05$) but none replicated at $p < 0.05$.
We model the alternative with the two-sided mixture-normal density at $\theta = 2.0$, as in the other applications; the ${\rm Beta}(0.6, 1)$ model, $g(u) = 0.6\,u^{-0.4}$, gives the same no-rejection decision.
No standard method rejects any hypothesis, and SPOT likewise returns no rejections: the replication $p$-values are large and show no collective pattern of significance; even the smallest ($p = 0.079$) is far from the Bonferroni threshold $0.05/6 \approx 0.008$. There is no strong signal from which the likelihood-based rule can borrow strength, in contrast to the Camerer application.

\section{Simulation code}
\label{sec:supp_code}

All simulations are implemented in Python using NumPy and SciPy.
The package \texttt{general\_k} provides:
\begin{enumerate}
\item[(a)] \texttt{coefficients.py}: Elementary symmetric polynomial computation and FWER coefficient functions ($a_k$, $b_{\gamma,k}$, $R_i$), including vectorised batch versions.
\item[(b)] \texttt{algorithm.py}: Implementation of SPOT with bisection updates (Algorithm~\ref{alg:general_k} in the main paper): one Monte Carlo batch is drawn per configuration before the outer loop and reused across all sweeps (fixed empirical target functions, matching Theorem~\ref{res:C11}); the primary stopping rule is the empirical complementarity residual falling below $\{\alpha(1-\alpha)/N\}^{1/2}$, with iterate-change criteria as safeguards that can trigger an exit without meeting it, and per-sweep residuals are recorded as diagnostics; the upper bisection bracket is doubled until the crossing is bracketed, and a fast variant precomputes and caches the coefficient tensors.
\item[(c)] \texttt{target.py}: Monte Carlo FWER estimation and bisection target functions with common random numbers.
\item[(d)] \texttt{decision.py}: Optimal decision rule computing $l^* = \max\bigl(\argmax_{0 \le l \le K} \sum_{i=1}^l R_i\bigr)$, ties broken by the largest maximiser (the paper's convention).
\item[(e)] \texttt{alternatives.py}: Alternative models and working likelihood-ratio functions (truncated normal, mixture normal, Student-$t$, beta), with direct sampling and closed-form evaluation in log space where needed; the Student-$t$ log-likelihood ratio is capped at $45$ (equivalently, the likelihood ratio is capped at $\exp(45)$), as documented in Supplementary Section~\ref{sec:supp_models}.
\item[(f)] \texttt{baselines.py}: Standard FWER procedures (Bonferroni, Holm, Hochberg, Hommel).
\item[(g)] \texttt{simulations.py}: Simulation framework for power and FWER evaluation.
\end{enumerate}

Three scripts reproduce all numerical results:
\texttt{run\_all\_experiments.py} (the model $\times$ $K$ $\times$ signal grid and power-curve figures; \texttt{--jobs} optionally fans the independent settings out over worker processes),
\texttt{run\_scaling\_convergence.py} (the scaling and convergence tables and figures, run sequentially on a single core so the reported timings are meaningful, together with the $\alpha$- and $N_{\rm opt}$-sensitivity runs), and
\texttt{run\_applications.py} (the three applications, with the source data and their provenance documented in the script header).
Usage:
\begin{verbatim}
pip install -e '.[plots,dev]'
python run_all_experiments.py --k 3 4 5 6 --models trunc mixture t beta
python run_scaling_convergence.py
python run_applications.py
\end{verbatim}

The test suite (91 tests in \texttt{tests/}) validates the coefficient computation, the decision rule, the coordinate-descent and bisection routines, the Monte Carlo target functions, and the baseline procedures.

\medskip\noindent\textit{Computational environment.}
All computations were performed on a MacBook Pro with an Intel Core i7-1068NG7 processor (2.3\,GHz, 4 cores) and 32\,GB RAM, running macOS~15.
The software environment uses Python~3.12.7 with NumPy~1.26.4, SciPy~1.14.1, and Matplotlib~3.9.2.
Document typesetting was performed with pdfTeX (TeX~Live~2021).
All reported timings are from single-core runs.

\clearpage

\begin{table}[htbp]
\caption{Convergence of SPOT (Algorithm~\ref{alg:general_k}) under the truncated normal model at $\theta = -2.0$.
All runs used $N_{\rm opt} = 100{,}000$ Monte Carlo samples, upper bracket $U_{\max} = 50$, bisection tolerance $\delta = 10^{-4}$, and the residual stopping criterion with iterate-change tolerance $\varepsilon = 10^{-3}$; every scaling run stopped by the residual criterion.
Times are single-core: `Opt.' is the coordinate-descent (optimisation) time, the direct measure of algorithmic cost, and `Total' additionally includes the evaluation phase and the baseline procedures.}
\label{tab:supp_timing}
\begin{center}
\begin{tabular}{@{}rrrrr@{}}
\toprule
$K$ & Opt.\ (s) & Total (s) & Outer iterations & $\Pi_K$ \\
\midrule
3 & $3.1$ & $11$ & 4 & $0.638$ \\
4 & $5.0$ & $16$ & 4 & $0.633$ \\
5 & $7.4$ & $22$ & 4 & $0.630$ \\
6 & $9.9$ & $32$ & 4 & $0.623$ \\
8 & $23$ & $63$ & 5 & $0.617$ \\
10 & $30$ & $80$ & 5 & $0.614$ \\
12 & $48$ & $133$ & 5 & $0.610$ \\
\bottomrule
\end{tabular}
\end{center}
\end{table}

\begin{table}[htbp]
\caption{Rejection decisions for the $21$ Stage-1 replications of \citet{Camerer18} at $\alpha = 0.05$, sorted by $p$-value (their Supplementary Table~3; values reported as $<0.001$ set to $0.001$, marked $\dagger$).
R, rejection; $\cdot$, failure to reject; Holm, Hochberg, and Hommel coincide here.
SPOT is run with the two-sided mixture-normal alternative at $\theta = 2.0$.}
\label{tab:supp_camerer}
\begin{center}
\setlength{\tabcolsep}{5pt}\renewcommand{\arraystretch}{1.1}
\begin{tabular}{@{}lcccc@{}}
\toprule
Study & $p$-value & Bonf. & Holm--Hommel & SPOT \\
\midrule
Aviezer et al. (2012) & $0.001\,\dagger$ & R & R & R \\
Gneezy et al. (2014) & $0.001\,\dagger$ & R & R & R \\
Hauser et al. (2014) & $0.001\,\dagger$ & R & R & R \\
Kovacs et al. (2010) & $0.001\,\dagger$ & R & R & R \\
Morewedge et al. (2010) & $0.001\,\dagger$ & R & R & R \\
Wilson et al. (2014) & $0.001\,\dagger$ & R & R & R \\
Derex et al. (2013) & $0.003$ & $\cdot$ & R & R \\
Karpicke and Blunt (2011) & $0.006$ & $\cdot$ & $\cdot$ & R \\
Nishi et al. (2015) & $0.011$ & $\cdot$ & $\cdot$ & R \\
Balafoutas and Sutter (2012) & $0.022$ & $\cdot$ & $\cdot$ & $\cdot$ \\
Ackerman et al. (2010) & $0.024$ & $\cdot$ & $\cdot$ & $\cdot$ \\
Janssen et al. (2010) & $0.025$ & $\cdot$ & $\cdot$ & $\cdot$ \\
Pyc and Rawson (2010) & $0.089$ & $\cdot$ & $\cdot$ & $\cdot$ \\
Shah et al. (2012) & $0.150$ & $\cdot$ & $\cdot$ & $\cdot$ \\
Sparrow et al. (2011) & $0.265$ & $\cdot$ & $\cdot$ & $\cdot$ \\
Kidd and Castano (2013) & $0.273$ & $\cdot$ & $\cdot$ & $\cdot$ \\
Duncan et al. (2012) & $0.279$ & $\cdot$ & $\cdot$ & $\cdot$ \\
Rand et al. (2012) & $0.366$ & $\cdot$ & $\cdot$ & $\cdot$ \\
Gervais and Norenzayan (2012) & $0.416$ & $\cdot$ & $\cdot$ & $\cdot$ \\
Lee and Schwarz (2010) & $0.455$ & $\cdot$ & $\cdot$ & $\cdot$ \\
Ramirez and Beilock (2011) & $0.716$ & $\cdot$ & $\cdot$ & $\cdot$ \\
\bottomrule
\end{tabular}
\end{center}
\end{table}

\begin{table}[htbp]
\caption{Sensitivity of the Camerer et al.\ analysis to the assumed common effect $\theta$ of the two-sided mixture-normal alternative.
Standard methods reject $7$ ($6$ for Bonferroni) regardless of $\theta$.}
\label{tab:supp_camerer_sensitivity}
\begin{center}
\begin{tabular}{@{}cc@{}}
\toprule
Assumed effect $\theta$ & Rejections $l^*$ \\
\midrule
$1.5$ & 12 \\
$2.0$ & 9 \\
$2.5$ & 8 \\
$3.0$ & 8 \\
\bottomrule
\end{tabular}
\end{center}
\end{table}

\end{document}